%% file: main.tex
\pdfoutput=1

\documentclass[12pt, reqno]{amsart}

\usepackage{main}

\hypersetup{
  pdftitle={Design Effect Ratios for Bayesian Survey Models: A Diagnostic Framework for Identifying Survey-Sensitive Parameters},
  pdfauthor={JoonHo Lee, Matthew R. Williams, Terrance D. Savitsky}
}
\usepackage{setspace}
\usepackage{booktabs}
\usepackage{array}
\usepackage{makecell}
\usepackage{tabularx}
\usepackage{multirow}
\usepackage{adjustbox}
\usepackage{pdflscape}
\usepackage{longtable}
\usepackage{needspace}

\graphicspath{{figures/}}
\usepackage{titletoc}

\usepackage[style=apa, backend=biber, natbib=true, backref=true]{biblatex}
\DefineBibliographyStrings{english}{%
  backrefpage = {},
  backrefpages = {}
}
\renewbibmacro*{pageref}{%
  \iflistundef{pageref}
    {}
    {\printtext[brackets]{\printlist[pageref][-\value{listtotal}]{pageref}}}}

\newgeometry{margin=1.25in}

\newcommand{\slotnum}[1]{#1}

\newcommand{\Vtarget}{V_{\mathrm{target}}}

\title{\Large Design Effect Ratios for Bayesian Survey Models: A Diagnostic Framework for Identifying Survey-Sensitive Parameters}

\author{JoonHo Lee}
\author{Matthew R.\ Williams}
\author{Terrance D.\ Savitsky}

\date{\footnotesize September 2026. \\[0.5em]
Lee: College of Education, The University of Alabama, Tuscaloosa, AL, USA. jlee296@ua.edu.
Williams: U.S.\ Bureau of Labor Statistics, Washington, DC, USA.
Savitsky: U.S.\ Bureau of Labor Statistics, Washington, DC, USA}

\numberwithin{equation}{section}

\renewcommand{\theHequation}{\thesection.\arabic{equation}}
\renewcommand{\theHfigure}{\thesection.\arabic{figure}}
\renewcommand{\theHtable}{\thesection.\arabic{table}}

\makeatletter
\renewcommand{\theHALG@line}{\thealgorithm.\arabic{ALG@line}}
\makeatother

\begin{document}


\begin{abstract}
Bayesian hierarchical models fitted to complex survey data often use
weighted pseudo-posteriors, followed by a correction to match a
design-based sandwich covariance.
Correcting every parameter can needlessly widen or spuriously narrow
credible intervals for parameters stabilized by shrinkage or identified
by between-group variation.
We propose the Design Effect Ratio (DER) to guide selective correction.
It is the ratio of each parameter's design-based sandwich variance under
a declared target to its model-based posterior variance.
For hierarchical Gaussian models, we derive exact finite-sample
decompositions under balance and common-design-effect assumptions.
Fixed-effect DERs scale with the design effect, attenuated by the
between-group share of identifying variation.
Random-effect DERs factor into design effect, shrinkage, and a
group-count term; a conservation identity links the two levels.
A general matrix formula covers arbitrary parameter blocks and
non-Gaussian likelihoods.
We evaluate the diagnostic in a simulation with informative two-stage
sampling and a re-analysis of the 2019 National Survey of Early Care
and Education, reporting both design-PSU and model-group variance targets.
These analyses distinguish survey-sensitive parameters from those
protected by shrinkage and show how selective correction avoids the
interval distortions caused by blanket rescaling.
The R package \texttt{svyder} implements the workflow.
\end{abstract}

\maketitle

\noindent\textbf{Keywords:} Bayesian hierarchical model; design effect; pseudo-posterior; sandwich variance; survey weighting

\pagestyle{plain}

\newpage


\input{section1}
\input{section2}
\input{section3}
\input{section4}
\input{section5_application}
\input{section6}


\bigskip

\paragraph{Software and reproducibility.}
All analyses were conducted in R~4.5.1 \citep{RCore2025} with posterior
sampling via Stan \citep{Carpenter2017} through the \texttt{cmdstanr}
interface.
The DER computation requires only base R and the \texttt{Matrix}
package.
The simulation study comprises \slotnum{4{,}800} replications
(\slotnum{24} design cells $\times$ 200 replications) with two-stage
sampling and exact inverse-probability weights.
The Stan models use a non-centered parameterization for hierarchical
logistic regression.
We screen every replication for convergence ($\widehat{R} \le 1.01$
on all parameters, zero divergent transitions) and report the outcomes
with the results.
The \texttt{svyder} R package, version~0.2.0
(\url{https://github.com/joonho112/svyder}; documentation at
\url{https://joonho112.github.io/svyder/}) implements the
compute-classify-correct workflow with the declared variance target as
an explicit argument (aggregation unit, strata, weight convention) and
native support for \texttt{brms}, \texttt{cmdstanr}, and
\texttt{rstanarm} backends.
The package complements the \texttt{csSampling} package
\citep{Hornby2023} that implements the underlying pseudo-posterior and
correction machinery.
Replication code and pre-computed results are available at \url{https://github.com/joonho112/svyder-replication}.

\paragraph{Data availability.}
The empirical results in this paper are based on data from the 2019
National Survey of Early Care and Education (NSECE). All key variables
used in the analysis are available only through the
NSECE Level 1 Restricted-Use Files \citep{HHSNSECE2024Restricted}
(ICPSR Study 38445, Version 3;
\url{https://doi.org/10.3886/ICPSR38445.v3}). Access requires an
application, a Restricted Data Use Agreement, and institutional review
board approval or exemption.

\paragraph{Conflicts of interest.}
The authors declare no conflicts of interest.

\paragraph{Disclaimer.}
The opinions expressed in this article are those of the authors and do
not represent the views of the U.S.\ Bureau of Labor Statistics.


\printbibliography[title={References}]


\newpage
\appendix
\setcounter{section}{0}
\renewcommand{\thesection}{S-\Alph{section}}
\numberwithin{equation}{section}
\numberwithin{figure}{section}
\numberwithin{table}{section}

\renewcommand{\theHequation}{Supp.\thesection.\arabic{equation}}
\renewcommand{\theHfigure}{Supp.\thesection.\arabic{figure}}
\renewcommand{\theHtable}{Supp.\thesection.\arabic{table}}
\renewcommand{\theHlemma}{Supp.\thesection.\arabic{lemma}}
\renewcommand{\theHtheorem}{Supp.\thesection.\arabic{theorem}}
\renewcommand{\theHcorollary}{Supp.\thesection.\arabic{corollary}}
\renewcommand{\theHproposition}{Supp.\thesection.\arabic{proposition}}
\renewcommand{\theHdefinition}{Supp.\thesection.\arabic{definition}}
\renewcommand{\theHremark}{Supp.\thesection.\arabic{remark}}
\renewcommand{\theHexample}{Supp.\thesection.\arabic{example}}
\renewcommand{\theHconjecture}{Supp.\thesection.\arabic{conjecture}}

\setcounter{lemma}{0}       \numberwithin{lemma}{section}
\setcounter{theorem}{0}     \numberwithin{theorem}{section}
\setcounter{corollary}{0}   \numberwithin{corollary}{section}
\setcounter{proposition}{0} \numberwithin{proposition}{section}
\setcounter{definition}{0}  \numberwithin{definition}{section}
\setcounter{remark}{0}      \numberwithin{remark}{section}
\setcounter{example}{0}     \numberwithin{example}{section}
\setcounter{conjecture}{0}  \numberwithin{conjecture}{section}

\begin{center}
{\Large\bfseries Online Supplement}\\[1em]
{\large Design Effect Ratios for Bayesian Survey Models:\\
A Diagnostic Framework for Identifying\\
Survey-Sensitive Parameters}\\[1.5em]
{\normalsize JoonHo Lee \quad Matthew R.\ Williams \quad Terrance D.\ Savitsky}
\end{center}

\vspace{2em}

\startcontents[appendices]
\printcontents[appendices]{}{1}{\textbf{Contents}\vskip1em\hrule\vskip1em}
\vskip1em\hrule\vskip2em

\input{osm_body}

\end{document}

%% file: section1.tex

\section{Introduction}
\label{sec:introduction}

Bayesian hierarchical models fitted with survey-weighted
pseudo-likelihoods are increasingly used for surveys with multistage
designs and unequal selection probabilities.
A model may contain dozens or hundreds of parameters, including
fixed-effect coefficients for policy variables, random intercepts for
geographic groups, and variance components.
Some credible intervals need adjustment for the sampling design;
others are already stabilized by the hierarchical prior.
No routine diagnostic indicates which intervals require correction.
This is a Bayesian version of the question posed by
\citet{Pfeffermann1993} about when survey weights matter for model-based
inference.
Correcting every parameter may seem conservative when there are only a
few, but across these parameter types the correction can be
conservative for some and harmful for others.

A widely used approach constructs a pseudo-posterior by exponentiating
likelihood contributions by survey weights
\citep{SavitskyToth2016, WilliamsSavitsky2020}.
Under the design and regularity conditions of \citet{SavitskyToth2016},
the pseudo-posterior contracts on the superpopulation parameters.
The weights distort the log-likelihood curvature, however, so the
pseudo-posterior covariance is generally miscalibrated for frequentist use.
\citet{WilliamsSavitsky2021} transform the MCMC draws to match a
design-based sandwich covariance \citep{Binder1983}; their empirical
evaluations report near-nominal frequentist coverage across a range of
designs.
Applied to the full parameter vector, the correction treats all
parameters uniformly and costs $O(d^3)$ for $d$ parameters.

\citet{Gelman2007} noted the tension between design-based and
model-based perspectives on weighting, and the treatment and scaling
of weights in hierarchical models remain debated
\citep{PfeffermannEtAl1998b, Asparouhov2006, RabeHeskethSkrondal2006,
Carle2009, Cai2013}.
The classical design effect of \citet{Kish1965} and its
estimand-specific generalizations \citep{Skinner1989} describe the
design and estimand, rather than individual parameters within a
hierarchical specification.
The corrections of \citet{WilliamsSavitsky2021},
\citet{WilliamsMcGuireSavitsky2025}, and
\citet{LeonNoveloSavitsky2023} adjust posterior inference but do not
explain why some parameters are design-sensitive while others are
protected.
None provides a parameter-specific diagnostic with an interpretable
decomposition to guide that choice.

We introduce the Design Effect Ratio ($\DER$), the ratio of each
parameter's design-based sandwich variance to its model-based
posterior variance.
We use it to identify a subset of parameters, $\mathcal{F}$, for joint
correction.
The correction of \citet{WilliamsSavitsky2021} specifies how to adjust
the posterior; the $\DER$ indicates when adjustment is needed, which
parameters require it, and why.

Because the sandwich variance is not unique, we define the $\DER$
relative to a \emph{declared variance target} constructed from the
model pseudo-likelihood.
We specify two choices often treated as implementation details: the
aggregation unit and the survey-weight scaling convention.
The aggregation unit can be the design's primary sampling units (PSUs)
or the model groups when the two differ.
Weights can be normalized globally to have unit mean or rescaled
within each group.
Both choices affect the target variance; weight scaling also changes
the fitted pseudo-posterior.
These choices have a long history in the multilevel weighting
literature \citep{PfeffermannEtAl1998b, RabeHeskethSkrondal2006}.
Our application shows that both change the diagnostic's results and
must be specified to interpret the $\DER$.

We derive finite-sample decompositions that link the classical design
effect to hierarchical shrinkage (\Cref{sec:framework}).
Under the model-group target and the stated model conditions,
coefficients identified from within-group variation inherit the design
effect nearly in full and require credible-interval adjustment.
Coefficients identified from between-group contrasts are shielded and
typically need no adjustment.
Random effects are shielded in proportion to their shrinkage,
connecting the diagnostic to small area estimation
\citep{EfronMorris1975, RaoMolina2015}.
In the balanced Gaussian case with a common design effect, a
conservation identity shows that shrinkage moves design sensitivity
from the group random effects to the grand mean rather than
eliminating it.
The decompositions express design sensitivity through quantities that
practitioners already use.

We also develop an easy-to-implement compute-classify-correct algorithm
for selective correction.
Correcting the selected block costs $O(|\mathcal{F}|^3)$, compared with
$O(d^3)$ for the full vector.
We evaluate the algorithm in \slotnum{4{,}800} replications of
hierarchical logistic regression across \slotnum{24} design cells.
The simulation draws two-stage samples from finite populations.
Inclusion probabilities depend on the random effects and the response,
and the weights are exact inverse inclusion probabilities.

We apply the diagnostic and selective correction to the 2019 National
Survey of Early Care and Education (NSECE), using both the design-PSU
and model-group variance targets.
The within-state poverty coefficient is flagged under every target
and weight convention we examine.
Random effects are flagged only under the design-PSU target.
With global unit-mean weights, \slotnum{28} of 54 parameters are
flagged in total under that target, compared with \slotnum{1} of 54
under the model-group target.
This contrast shows why the aggregation unit---PSU or state
grouping---must be declared.

The $\DER$ measures how much design-based variability survives the
hierarchical structure.
It therefore depends on the model and prior and does not replace the
Kish design effect as a summary of the design.
It also does not address identification of cluster-dependence
parameters from marginal inclusion weights
\citep{RaoVerretHidiroglou2013, YiRaoLi2016, Slud2026}.
The diagnostic presupposes a fitted pseudo-posterior and a valid
variance target; \Cref{sec:background} specifies the design settings
in which this is warranted.
A $\DER$ at or below one does not establish that an unflagged
parameter is correctly calibrated: the ratio compares variances and
does not diagnose bias.

A preliminary version of the $\DER$ appeared in the supplementary
materials of \citet{Lee2026}.
Here we develop it into a self-contained diagnostic framework.
The comparison follows the calibrated Bayes perspective
\citep{Little2004, Little2012}: we state the design-based target
explicitly and report which posterior variances would change under
that target and by how much.

\Cref{sec:background} establishes notation and constructs the sandwich
variance from its ``bread'' and ``meat'' components.
\Cref{sec:framework} develops the $\DER$ and its closed-form Gaussian
decompositions.
\Cref{sec:simulation,sec:application} present the simulation and NSECE
application, and \Cref{sec:discussion} discusses implications and
limitations.
Sections~S-A and~S-B of the Supporting Information give the conditions
and complete proofs, respectively, with each condition annotated
where it is used.

%% file: section2.tex

\section[Setup: Model, Design, and the Declared Variance Target]
{Setup: Model, Design, and\\the Declared Variance Target}
\label{sec:background}

We define the hierarchical model, its weighted pseudo-posterior, and
the variance target used to compute the $\DER$.
The target depends on how we aggregate scores and scale survey weights.
A valid $\DER$ also requires a consistent pseudo-posterior estimator;
we discuss this requirement below.

\subsection{Model and design vocabulary}
\label{sec:vocabulary}

The design and model may specify distinct levels of aggregation.
The \emph{sampling design} selects units within \emph{primary
sampling units} (PSUs), indexed by $c$, possibly nested within design
\emph{strata}, indexed by $h$.
The \emph{model} organizes observations into \emph{groups}, indexed
by $j = 1, \ldots, J$, that carry the random effects.
Model groups may coincide with design PSUs.
In our NSECE application, however, the groups are the 51 states,
whereas the design samples 415 PSUs within 30 strata.
Every model group appears in the sample by construction.
Units $i = 1, \ldots, n_j$ within group $j$ give
$N = \sum_{j} n_j$ observations, and we write a generic two-level
model
\begin{equation}\label{eq:hierarchical}
  y_{ij} \mid \theta_j, \bbeta
  \;\sim\; f\!\pr{y_{ij} \mid \bx_{ij}^{\t}\bbeta + \theta_j},
  \qquad
  \theta_j \mid \boldsymbol{\psi}
  \;\sim\; g\!\pr{\theta_j \mid \boldsymbol{\psi}},
\end{equation}
with $\bbeta \in \bR^{p}$ the fixed effects, $\theta_j$ a scalar
random effect, $\boldsymbol{\psi}$ the hyperparameters (e.g.,
$\sigma^{2}_{\theta}$), and
$\boldsymbol{\phi} = (\bbeta^{\t}, \btheta^{\t})^{\t}$ of dimension
$d = p + J$.  We use this model to derive conditions under which the
$\DER$ recommends adjusting credible intervals for each parameter class.

\subsection{Pseudo-posterior and the weight-scaling convention}
\label{sec:pseudo}

When inclusion probabilities depend on the response even after
conditioning on covariates, unweighted Bayesian inference is
generally inconsistent for superpopulation parameters
\citep{Pfeffermann1993, PfeffermannEtAl1998}.
\citet{SavitskyToth2016} exponentiate each likelihood contribution by
its survey weight:
\begin{equation}\label{eq:pseudo-posterior}
  \pi^{*}\!\pr{\boldsymbol{\phi}, \boldsymbol{\psi} \mid \by}
  \;\propto\;
  \prod_{j=1}^{J}\prod_{i=1}^{n_j}
    \bk{f\!\pr{y_{ij} \mid \bx_{ij}^{\t}\bbeta + \theta_j}}^{\tilde{w}_{ij}}
  \;\times\;
  \prod_{j=1}^{J} g\!\pr{\theta_j \mid \boldsymbol{\psi}}
  \;\times\;
  p(\bbeta)\, p(\boldsymbol{\psi}),
\end{equation}
where the weights enter only the likelihood.
Under their design and regularity conditions the pseudo-posterior
contracts on the superpopulation parameters, but the weights distort
the curvature of the log-pseudo-likelihood, so the MCMC covariance
$\SigMCMC$ is generally miscalibrated.

Scaling $\tilde w_{ij}$ changes the pseudo-posterior and can affect
group-level inference, as recognized in the multilevel weighting literature
\citep{PfeffermannEtAl1998b, Asparouhov2006, RabeHeskethSkrondal2006,
Carle2009}.
Our declared convention is \emph{global unit-mean} scaling,
$\tilde{w}_{ij} = N \, w_{ij} / \sum_{i'j'} w_{i'j'}$, which
preserves between-group weight variation; the main alternative
rescales within each group ($\sum_i \tilde w_{ij} = n_j$), which
removes it.
Fixed-effect inferences are typically stable across these conventions.
Group-level effective sample sizes and quantities that depend on
shrinkage can change.
\Cref{sec:application} reports the diagnostic under both conventions;
the weight-scaling convention is part of the declared target.

\subsection{The declared (sandwich) variance target}
\label{sec:target}

We compare each parameter's posterior variance with its diagonal entry
in a design-based sandwich covariance \citep{Huber1967, Binder1983}.
We evaluate this target at the posterior mean $\hat{\boldsymbol{\phi}}$:
\begin{equation}\label{eq:sandwich}
  \Vtarget
  \;=\;
  H^{-1}\, J_{\mathrm{c}} \, H^{-1}.
\end{equation}
Weight scaling enters both components; the aggregation unit determines
the clusters used to compute the second.

\paragraph{The bread.}
$H$ is the \emph{penalized (posterior) Hessian}: the
log-pseudo-likelihood curvature plus the random-effect prior
precision,
\begin{equation}\label{eq:bread}
  H
  \;=\;
  -\sum_{j,i}
    \tilde{w}_{ij}\,
    \nabla^{2}_{\boldsymbol{\phi}}
    \log f\!\pr{y_{ij} \mid \bx_{ij}^{\t}\bbeta + \theta_j}
    \Big|_{\hat{\boldsymbol{\phi}}}
  \;+\;
  \diag\!\pr{\mathbf{0}_{p},\, \tau, \ldots, \tau},
  \qquad \tau = 1/\sigma^{2}_{\theta},
\end{equation}
with no curvature contributed by the diffuse-$\bbeta$ prior.
In computation, $\tau$ is evaluated at the posterior plug-in
$\hat{\tau} = 1/\hat{\sigma}^{2}_{\theta}$.
The $\DER$ is a \emph{conditional, post-fit diagnostic}: it uses the
fitted $\hat{\sigma}^{2}_{\theta}$ without estimating that variance
component anew.
Uncertainty in $\hat{\sigma}^{2}_{\theta}$ is outside the scope of the
$\DER$ (\Cref{sec:limits}).

\begin{remark}[Why the posterior Hessian is the right bread]
\label{rem:bread}
The pseudo-likelihood Hessian alone cannot serve as the bread: for
any linear predictor containing an intercept and a full set of group
effects, the direction $(1, \mathbf{0}_{p-1}, -\mathbf{1}_J)$ leaves
every $\bx_{ij}^{\t}\bbeta + \theta_j$ unchanged, so the likelihood
Hessian is singular for every data set.
The prior precision on $\btheta$ removes exactly this flat direction.
We therefore use the posterior Hessian, which describes the curvature
of the objective sampled by MCMC, throughout this paper.
\end{remark}

\paragraph{The meat and its aggregation unit.}
$J_{\mathrm{c}}$ is the stratified, centered outer product of
cluster-level score totals with the standard finite-cluster
correction \citep{Binder1983}:
\begin{equation}\label{eq:meat}
  \begin{aligned}
  J_{\mathrm{c}}
  &\;=\;
  \sum_{h} \frac{C_h}{C_h - 1}
  \sum_{c \in h}
  \pr{\mathbf{t}_{hc} - \bar{\mathbf{t}}_{h}}
  \pr{\mathbf{t}_{hc} - \bar{\mathbf{t}}_{h}}^{\t},
  \\
  \mathbf{t}_{hc}
  &= \sum_{(i,j) \in c}
    \tilde{w}_{ij}\,
    \nabla_{\boldsymbol{\phi}}
    \log f\!\pr{y_{ij} \mid \bx_{ij}^{\t}\bbeta + \theta_j}
    \Big|_{\hat{\boldsymbol{\phi}}},
  \end{aligned}
\end{equation}
where $c$ runs over the clusters of the \emph{declared aggregation
unit}, $C_h$ counts clusters in stratum $h$, and centering matters
because the likelihood score totals do not vanish at the posterior
mean.
Aggregating at the design PSUs targets the design's sampling
variability; aggregating at the model groups targets variability in
group-level structure.
We use the model-group target to derive the closed-form $\DER$ results
in \Cref{sec:framework}.
When PSUs and model groups differ, we report both targets
(\Cref{sec:application}).
With a single stratum and uncentered totals, \eqref{eq:meat} reduces to
$\sum_j \mathbf{s}_j \mathbf{s}_j^{\t}$.
Since $\operatorname{rank}(J_{\mathrm{c}})$ is at most the cluster
count, $\Vtarget$ need not be full rank when $d$ exceeds it.
The diagonal entries used by the $\DER$ remain well defined.
Correction of a flagged sub-block requires that block to be positive
definite (\Cref{sec:algorithm}).

\paragraph{A fixed-dimensional target for the regression coefficients.}
For the fixed effects one can work directly with the $p \times p$
marginal target from the block (Schur complement) decomposition of
\eqref{eq:sandwich}: writing
$H = \begin{psmallmatrix} \mathbf{A} & \mathbf{G} \\
\mathbf{G}^{\t} & \mathbf{D} \end{psmallmatrix}$
with $\mathbf{D}$ the penalized random-effect block,
\begin{equation}\label{eq:vbeta}
  V_{\bbeta}
  \;=\;
  S^{-1} M_{\bbeta}\, S^{-1},
  \qquad
  S = \mathbf{A} - \mathbf{G}\,\mathbf{D}^{-1}\mathbf{G}^{\t},
\end{equation}
where
$M_{\bbeta} = \begin{bsmallmatrix} \mathbf{I}_p &
-\mathbf{G}\mathbf{D}^{-1} \end{bsmallmatrix} J_{\mathrm{c}}
\begin{bsmallmatrix} \mathbf{I}_p &
-\mathbf{G}\mathbf{D}^{-1} \end{bsmallmatrix}^{\t}$
is the correspondingly profiled meat block.
$V_{\bbeta}$ has dimension $p \times p$ regardless of $J$.
Under the regularity and target-validity conditions in Section~S-A of
the Supporting Information, the standard Taylor/sandwich argument
applies to $V_{\bbeta}$.
By the block-inverse identity, its entries equal the $\bbeta$-entries
of the full sandwich.
The Schur complement incorporates the $\btheta$-prior into $S$
through conditioning, without approximation.

\subsection{The Williams--Savitsky correction}
\label{sec:cholesky}

\citet{WilliamsSavitsky2021} calibrate the pseudo-posterior by
mapping each draw through
$\boldsymbol{\phi}^{*(s)} = \hat{\boldsymbol{\phi}} +
\mathbf{R}_2 \mathbf{R}_1^{-1}(\boldsymbol{\phi}^{(s)} -
\hat{\boldsymbol{\phi}})$, with $\mathbf{R}_1, \mathbf{R}_2$ Cholesky
factors of $\SigMCMC$ and of the sandwich target.
The corrected draws have the target covariance by construction.
Their empirical evaluations report near-nominal coverage across a
range of designs.
Applied to the full vector, the correction costs $O(d^3)$ and treats
all parameters uniformly.
The $\DER$ introduced in Section~\ref{sec:der-definition} indicates
which parameters need correction and which intervals it may distort.

\subsection{Design effect and shrinkage: the two ingredients}
\label{sec:ingredients}

The Kish design effect \citep{Kish1965, SarndalEtAl1992} summarizes
variance inflation from unequal weighting:
$\DEFF = N \sum w_{ij}^2 / (\sum w_{ij})^2$, computed globally or
within group $j$ ($\DEFF_j$). \citet{Skinner1989} generalizes design
effects to arbitrary estimands.
The hierarchical shrinkage factor governs how strongly a group's
estimate relies on its own data: in the balanced Gaussian case,
\begin{equation}\label{eq:shrinkage}
  B_j
  \;=\;
  \frac{\sigma^{2}_{\theta}}
       {\sigma^{2}_{\theta} + \sigma^{2}_{e} / n_j},
\end{equation}
the reliability ratio of empirical Bayes and small area estimation
\citep{EfronMorris1975, FayHerriot1979, RaoMolina2015}:
$B_j \to 1$ for data-rich groups (little shrinkage), $B_j \to 0$ for
data-poor ones.
Strong shrinkage shields the group effect from design-induced
variance inflation, while re-exposing quantities such as the grand
mean that absorb the between-group information.
The $\DER$ formalizes this interaction.

%% file: section3.tex

\section{The Design Effect Ratio Framework}
\label{sec:framework}

We define the $\DER$ and use its decompositions to explain which
parameters need correction under the declared design-based variance
target and which may be harmed by it.

\subsection{Definition and basic properties}
\label{sec:der-definition}

\begin{definition}[Design Effect Ratio]\label{def:der}
For the $k$th component of
$\boldsymbol{\phi} = (\bbeta^{\t}, \btheta^{\t})^{\t}$, the
\emph{Design Effect Ratio} with respect to the declared variance
target \eqref{eq:sandwich} is
\begin{equation}\label{eq:der}
  \DER_k
  \;=\;
  \frac{[\Vtarget]_{kk}}{[\SigMCMC]_{kk}},
\end{equation}
where $\SigMCMC$ is the posterior covariance under the
pseudo-posterior~\eqref{eq:pseudo-posterior} and $k$ denotes scalar parameter $\boldsymbol{\phi}_{k} \in \boldsymbol{\phi}$.
\end{definition}

$\DER_k \approx 1$ means the posterior variance matches the target,
so no adjustment is needed.
$\DER_k > 1$ means the posterior understates the target variance, so
correction widens the interval.
$\DER_k < 1$ means the posterior is
wider than the target, so correction would narrow it,
reducing the influence of the prior shape.
The ratio also summarizes the intensity of the joint correction in
\Cref{sec:cholesky}: $\DER_k = ([\mathbf{R}_2\mathbf{R}_1^{-1}]_{kk})^2$
when off-diagonal entries are negligible.

\begin{remark}[Population and estimated DER]\label{rem:estimated}
\Cref{def:der} gives the plug-in diagnostic computed from one sample.
The results below describe its population counterpart, in which the
meat estimates the design covariance of the score totals.
The distinction matters most for random effects under the model-group
target: the $\theta_j$ meat entry is then built from one centered
score-total contrast per group (the group total's deviation from the
stratum mean of totals).
This entry is design-unbiased for its variance contribution but uses
effectively one observation.
Estimated per-group $\DER$s therefore carry substantial noise even
when the population formulas are exact.
In the simulation, wide random-effect margins nevertheless make
classification stable (\Cref{sec:simulation}).
This caution applies to per-group diagnostics.
\end{remark}

\subsection{Decomposition results}
\label{sec:decomposition}

We decompose the $\DER$ into the classical design effect and the
shrinkage factor, so that correction can target the design effect.
The decompositions are finite-sample algebraic identities for the
model class below.
They require no asymptotics in $J$ or $n_j$.
The target's frequentist interpretation comes from published
pseudo-posterior theory
(\citealp{SavitskyToth2016, WilliamsSavitsky2021};
\Cref{sec:target}).

\paragraph{Conditions.}
All results in this subsection assume:
\begin{enumerate}[nosep, label=(C\arabic*)]
  \item \emph{Model class.} The conjugate Gaussian hierarchical model
    $y_{ij} \mid \theta_j, \bbeta \sim \Norm(\bx_{ij}^{\t}\bbeta +
    \theta_j, \sigma^2_e)$ with $\theta_j \sim \Norm(0,
    \sigma^2_\theta)$ and the intercept included in $\bx_{ij}$ (its
    coefficient, the grand mean, is denoted $\mu$ where convenient);
    balanced groups of common size $n$; flat prior on $\bbeta$;
    known variances; covariates not collinear.
  \item \emph{Weights.} Strictly positive weights under the declared
    global unit-mean convention (\Cref{sec:pseudo}), with finite
    within-group Kish design effects $\DEFF_j$.
  \item \emph{Target.} The declared target of \Cref{sec:target} with
    the model groups as the aggregation unit.
\end{enumerate}
Balance in (C1) makes the shrinkage factor common across groups
($B_j \equiv B$), but does not make the within-group design effects
common.
Section~S-A of the Supporting Information restates these conditions in
survey terms and gives their published sources; Section~S-B identifies
where each condition is used in the proofs.
We start with the balanced conjugate case because \eqref{eq:der} is
available in closed form.
The general formula \eqref{eq:der} requires none of (C1)--(C3) beyond
a well-defined target and is used in all computations
(\Cref{sec:algorithm}).

\Needspace{12\baselineskip}
\begin{theorem}[Fixed-Effect DER]\label{thm:fixed-effect}
Under (C1)--(C3) and common within-group design effects
($\DEFF_j \equiv \DEFF$), for the $k$th fixed-effect coefficient
$\beta_k$,
\begin{equation}\label{eq:der-fixed}
  \DER_{\beta_k}
  \;=\;
  \DEFF \cdot (1 - R_k),
\end{equation}
where $R_k \in [0, B]$ measures the fraction of identifying variation
for covariate~$k$ that comes from between-group differences, and
$B = \sigma^2_\theta / (\sigma^2_\theta + \sigma^2_e / n)$ is the
common shrinkage factor~\eqref{eq:shrinkage}.  In particular:
\begin{enumerate}[nosep, label=(\roman*)]
  \item When the covariate varies only within groups
    ($R_k = 0$): $\DER_{\beta_k} = \DEFF$.
  \item When the covariate varies only between groups
    ($R_k = B$): $\DER_{\beta_k} = \DEFF\,(1 - B)$.
\end{enumerate}
Without the common-$\DEFF$ hypothesis, the fixed-effect $\DER$ is
given by a general matrix expression (Section~S-B.1 of the Supporting Information) of
which \eqref{eq:der-fixed} is the common-$\DEFF$ specialization.
\end{theorem}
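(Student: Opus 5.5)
The plan is to reduce everything to the $p\times p$ profiled target \eqref{eq:vbeta}. In the conjugate Gaussian model (C1) the pseudo-posterior \eqref{eq:pseudo-posterior} is exactly Gaussian with precision $H$. Hence $[\SigMCMC]_{\bbeta\bbeta}=S^{-1}$, where $S$ is the Schur complement from \Cref{sec:target}, and by the block-inverse identity $[\Vtarget]_{\bbeta\bbeta}=S^{-1}M_{\bbeta}S^{-1}$. The theorem therefore amounts to the identity
\[
\DER_{\beta_k}=\frac{\mathbf{e}_k^{\t}S^{-1}M_{\bbeta}S^{-1}\mathbf{e}_k}{\mathbf{e}_k^{\t}S^{-1}\mathbf{e}_k},
\]
and the work is to compute $S$ and $M_{\bbeta}$ in closed form.

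\emph{Step 1 (bread).} I will write $w_{j\cdot}=\sum_i\tilde w_{ij}$ and let $\bar{\bx}_j$ be the weighted group mean of the covariates. The random-effect block $\mathbf{D}$ is diagonal with entries $w_{j\cdot}/\sigma^2_e+\tau$. Profiling out $\theta_j$ then gives
\[
\sigma^2_e S=\sum_j\Bigl[\sum_i\tilde w_{ij}(\bx_{ij}-\bar{\bx}_j)(\bx_{ij}-\bar{\bx}_j)^{\t}+(1-B)\,w_{j\cdot}\bar{\bx}_j\bar{\bx}_j^{\t}\Bigr]\equiv \mathbf{W}+(1-B)\mathbf{T}.
\]
Here $B$ is the shrinkage factor, identified with \eqref{eq:shrinkage} via balance and the unit-mean convention ($w_{j\cdot}=n$; this identification is one point I will check carefully). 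The matrix $\mathbf{W}$ is the within-group part and $\mathbf{T}$ the between-group part of the identifying variation.

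\emph{Step 2 (meat).} I will apply the profiling row $[\mathbf{I}_p,\,-\mathbf{G}\mathbf{D}^{-1}]$ to each group score total $\mathbf{s}_j$. It splits into a within-deviation score $\sum_i\tilde w_{ij}(\bx_{ij}-\bar{\bx}_j)e_{ij}/\sigma^2_e$ and a between score $(1-B)\,w_{j\cdot}\bar{\bx}_j\bar e_j/\sigma^2_e$. The two are uncorrelated because weighted deviations are orthogonal to weighted means. Taking the population (design) covariance as in \Cref{rem:estimated}, the common within-group design effect converts $\sum_i\tilde w_{ij}^2$ into $\DEFF\cdot n$ in both pieces. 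This should give $\sigma^2_e M_{\bbeta}=\DEFF\,\bigl(\mathbf{W}+(1-B)^2\mathbf{T}\bigr)$, that is,
\[
M_{\bbeta}=\DEFF\,\bigl(S-B(1-B)\mathbf{T}/\sigma^2_e\bigr).
\]

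\emph{Step 3 (assembly).} Substituting into the ratio gives $\DER_{\beta_k}=\DEFF(1-R_k)$ with
\[
R_k=B\cdot\frac{\mathbf{e}_k^{\t}S^{-1}\bigl[(1-B)\mathbf{T}/\sigma^2_e\bigr]S^{-1}\mathbf{e}_k}{\mathbf{e}_k^{\t}S^{-1}\mathbf{e}_k}.
\]
Since $0\preceq(1-B)\mathbf{T}/\sigma^2_e\preceq S$, the fraction lies in $[0,1]$, so $R_k\in[0,B]$ and $R_k/B$ is the between-group share of the identifying information. The special cases follow directly. If covariate $k$ varies only within groups, its between contribution vanishes and $R_k=0$. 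If it varies only between groups, the within part vanishes, the fraction is $1$, and $R_k=B$. The general-$\DEFF_j$ matrix formula is simply Step 2 before $\DEFF_j$ is pulled out of the sum.

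I expect Step 2 to be the main obstacle. The delicate point is which variance the group-level between score carries. If $\theta_j$'s sampling variability enters $\bar e_j$, the $\DEFF$ factor multiplies only the $\sigma^2_e/n$ part, and the clean factorization fails. I will first try evaluating the scores at the conditional mode of $\theta_j$, so that the design covariance of the profiled group total involves only the weighted within-group errors. I will then verify that the centering in \eqref{eq:meat} removes the $\theta$-level term. If that fails, I will restrict the claim to the population meat, defined as the design covariance conditional on the realized $\btheta$.
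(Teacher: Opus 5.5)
Your overall structure is the paper's own proof in Section~S-B.1. The bread is the Schur complement $\mathbf{W}+(1-B)\mathbf{T}$, the profiled meat carries $(1-B)^2$ on the between part, $M_{\boldsymbol{\beta}}=\DEFF\,(S-\boldsymbol{\Delta})$ with $\boldsymbol{\Delta}=B(1-B)\mathbf{T}/\sigma^2_e$, and $R_k$ is the resulting ratio. Your Loewner bound $0\preceq(1-B)\mathbf{T}/\sigma^2_e\preceq S$ is a cleaner version of the paper's Step~5.

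\textbf{The gap is in Step~2.} There you write the covariances of the two pieces as sums in $\tilde w_{ij}^2$ and let the Kish design effect ``convert $\sum_i\tilde w_{ij}^2$ into $\DEFF\cdot n$ in both pieces.''
\begin{itemize}
\item This works for the between piece, whose covariance involves only $\sum_i\tilde w_{ij}^2$.
\item The within piece has covariance $\sigma_e^{-2}\sum_i\tilde w_{ij}^2(\mathbf{x}_{ij}-\bar{\mathbf{x}}_j)(\mathbf{x}_{ij}-\bar{\mathbf{x}}_j)^{\top}$. Kish's $\DEFF_j$ is a statement about $\sum_i\tilde w_{ij}^2$ alone, so it does not turn this into $\DEFF_j\mathbf{W}_j$ unless the weights are unrelated to the covariate deviations.
\item The within/between cross-covariance is $\sigma_e^{-2}(1-B)\sum_i\tilde w_{ij}^2(\mathbf{x}_{ij}-\bar{\mathbf{x}}_j)\bar{\mathbf{x}}_j^{\top}$, which is nonzero in general. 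The identity $\sum_i\tilde w_{ij}(\mathbf{x}_{ij}-\bar{\mathbf{x}}_j)=\mathbf{0}$ is orthogonality with respect to $\tilde w$, not $\tilde w^2$.
\end{itemize}
So $\sigma^2_e M_{\boldsymbol{\beta}}=\DEFF\,(\mathbf{W}+(1-B)^2\mathbf{T})$ does not follow from your argument. This, not the $\theta_j$-variability you single out, is the delicate point.

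\textbf{The missing ingredient.} The paper does not derive the meat from Kish's formula. It takes the scalar design effect as a parameterization of the population meat (Remark~\ref{rem:group-score-variance}): the design covariance of group $j$'s score total is $\DEFF_j$ times that group's information block. Its entries are $\DEFF_j\mathbf{A}_j$, $\DEFF_j\mathbf{b}_j$ and $\DEFF_j a_j$, and by (R2) there are no cross-group blocks. Under that hypothesis Step~2 is a short expansion. Applying $(\mathbf{I}_p,\,-(B_j/a_j)\mathbf{b}_j)$ gives $\DEFF_j[\mathbf{A}_j-(2B_j-B_j^2)\mathbf{b}_j\mathbf{b}_j^{\top}/a_j]=\DEFF_j[\mathbf{W}_j+(1-B_j)^2\mathbf{b}_j\mathbf{b}_j^{\top}/a_j]$. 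Your orthogonality remark is then correct, because the relevant Gram matrix is weighted by $\tilde w$.

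Working with this population meat, which is your fallback, makes both the conditional-mode device and the centering check unnecessary. The centering check would not succeed anyway. With a single stratum, the centered meat \eqref{eq:meat} has cross-group entries $-s_js_k/(J-1)$, where $s_j$ is the $\theta_j$ score total, and (R2) removes these only at the population level.

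\textbf{Two smaller points.}
\begin{itemize}
\item The identity $\DER_{\beta_k}=\DEFF(1-R_k)$ holds with group-specific $B_j$; the paper's Step~4 uses $\boldsymbol{\Delta}=\sum_jB_j(1-B_j)\mathbf{b}_j\mathbf{b}_j^{\top}/a_j$. So your $w_{j\cdot}=n$ check matters only for the bound $R_k\le B$ and for case~(ii).
\item In (i)--(ii), $R_k=0$ requires $\mathbf{T}S^{-1}\mathbf{e}_k=\mathbf{0}$, and $R_k=B$ requires $\mathbf{W}S^{-1}\mathbf{e}_k=\mathbf{0}$. Neither follows from $\mathbf{T}\mathbf{e}_k=\mathbf{0}$ or $\mathbf{W}\mathbf{e}_k=\mathbf{0}$ when covariates are coupled. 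You need the within and between blocks of $S$ to decouple, for example through group-mean-centered within covariates (the paper's ``centered covariates'' proviso). For an uncentered slope with a common group mean, $R_k=0$ comes instead from $S^{-1}\mathbf{b}_0\propto\mathbf{e}_0$, as in the paper's $p=2$ computation; here $\mathbf{b}_0$ is the common cross-term vector and $\mathbf{e}_0$ the intercept direction.
\end{itemize}
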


\begin{proof}[Proof sketch]
Partition $H$ conformably with $(\bbeta^{\t}, \btheta^{\t})^{\t}$.
The marginal precision for $\bbeta$ is the Schur complement
$\mathbf{S}_\beta = \mathbf{H}_{\beta\beta} -
\mathbf{H}_{\beta\theta}\,\mathbf{H}_{\theta\theta}^{-1}\,
\mathbf{H}_{\theta\beta}$; eliminating $\btheta$ introduces factors
of $(1-B)$ that attenuate between-group variation, while the same
elimination applied to the meat produces factors of $(1-B)^2$.
The mismatch generates \eqref{eq:der-fixed}.
Full proof with step-level condition annotations: Section~S-B.1 of the Supporting Information.
\end{proof}

Under these conditions, $\DER_{\beta_{k}}$ depends on whether the
covariate varies within or between groups.
Within-group covariates are fully \emph{exposed}: the prior on
$\btheta$ cannot absorb design-induced correlation among units in the
same group.
Between-group covariates are \emph{shielded} by the factor $(1-B)$.
Coefficients of covariates that vary mostly within groups are the
primary correction candidates.

\Needspace{12\baselineskip}
\begin{theorem}[Random-Effect DER]\label{thm:random-effect}
Under (C1)--(C3) and common within-group design effects
($\DEFF_j \equiv \DEFF$), with the common shrinkage factor $B$ implied
by balance:
\begin{enumerate}[nosep, label=(\alph*)]
  \item \textbf{Conditional on the grand mean:}
    \begin{equation}\label{eq:der-re-cond}
      \DER_\theta^{\mathrm{cond}}
      \;=\;
      B \cdot \DEFF.
    \end{equation}
  \item \textbf{Marginal (finite $J$):}
    \begin{equation}\label{eq:der-re-exact}
      \DER_\theta
      \;=\;
      B \cdot \DEFF \cdot \kappa(J),
      \qquad
      \kappa(J) = 1 - \frac{1}{J(1 - B) + B}.
    \end{equation}
\end{enumerate}
\end{theorem}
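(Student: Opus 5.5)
The plan is to compute everything explicitly in the balanced conjugate model, where both the penalized Hessian $H$ and the population meat have simple closed forms. Write $a = n/\sigma^2_e$ for the per-group likelihood curvature. Under (C1)--(C2) with balance, each group's weights sum to $n$, which is the sense in which balance yields a common $B$. I will state this reading explicitly as the assumption the proof relies on, since the global unit-mean convention alone only fixes the overall mean. With it, the $\theta_j$ diagonal entry of $H$ is $a+\tau$ and $B = a/(a+\tau)$.

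The score for $\theta_j$ is $s_j = \sum_i \tilde w_{ij} e_{ij}/\sigma^2_e$, with $e_{ij}$ the residuals. Following \Cref{rem:estimated}, I take the meat to be its population counterpart, the design covariance of the score totals. Independence across groups then gives $\operatorname{Var}(s_j) = \sigma^{-2}_e \sum_i \tilde w_{ij}^2 = \DEFF_j\, a$, which equals $\DEFF\, a$ under the common-$\DEFF$ hypothesis.

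\emph{Part (a).} Conditioning on the grand mean reduces the problem to a scalar one. The posterior variance is $1/(a+\tau) = B/a$, and the sandwich variance is $(a+\tau)^{-2}\,\DEFF\, a = \DEFF\, B^2/a$. Their ratio is $B\cdot\DEFF$, which is \eqref{eq:der-re-cond}.

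\emph{Part (b).} I restrict to $\boldsymbol{\phi} = (\mu, \btheta^{\t})^{\t}$; additional covariates are handled by the same Schur-complement argument used for \Cref{thm:fixed-effect}. Then
\[
H = \begin{pmatrix} Ja & a\mathbf{1}^{\t} \\ a\mathbf{1} & (a+\tau)\mathbf{I}_J \end{pmatrix}.
\]
The $\mu$-score is the sum of the $\theta_j$-scores, so the meat is $J_{\mathrm{c}} = \DEFF\, a \begin{psmallmatrix} J & \mathbf{1}^{\t} \\ \mathbf{1} & \mathbf{I}_J \end{psmallmatrix}$. This is exactly $\DEFF\,(H - \tau P)$, where $P = \diag(0, \mathbf{I}_J)$. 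This identity is the key structural step, because it gives
\[
\Vtarget = \DEFF\,\bigl(H^{-1} - \tau H^{-1} P H^{-1}\bigr),
\qquad
\DER_\theta = \DEFF\,\Bigl(1 - \tau\,[H^{-1}PH^{-1}]_{jj}\big/[H^{-1}]_{jj}\Bigr).
\]

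Next I obtain the $\btheta$-block of $H^{-1}$ as the inverse of the Schur complement $(a+\tau)\mathbf{I} - (a/J)\mathbf{1}\mathbf{1}^{\t}$. By Sherman--Morrison this inverse is $K = (a+\tau)^{-1}\bigl(\mathbf{I} + g\,\mathbf{1}\mathbf{1}^{\t}\bigr)$ with $g = a/(J\tau) = B/(J(1-B))$. It follows that $[H^{-1}]_{jj} = (1+g)/(a+\tau)$ and $[H^{-1}PH^{-1}]_{jj} = \sum_l K_{jl}^2 = \bigl[(1+g)^2 + (J-1)g^2\bigr]/(a+\tau)^2$. Using $\tau/(a+\tau) = 1-B$, the ratio becomes
\[
\DEFF\,\Bigl[1 - (1-B)\,\frac{1 + 2g + Jg^2}{1+g}\Bigr].
\]
The remaining task is to simplify this algebraically to $B\bigl(1 - 1/(J(1-B)+B)\bigr)$. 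A quick consistency check: the expression tends to $B\cdot\DEFF$ as $J\to\infty$, matching part (a).

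The main obstacle is that final simplification. Its exact form depends on which meat is used: the uncentered population meat above, or the single-stratum centered version in \eqref{eq:meat}. Centering subtracts the $\mathbf{1}\mathbf{1}^{\t}/J$ component and removes the $\mu$-direction information, which changes $M$ by a rank-one term. If the uncentered computation does not reproduce $\kappa(J)$ exactly, I would redo it with $J_{\mathrm{c}} = \DEFF\, a\,(\mathbf{I} - \mathbf{1}\mathbf{1}^{\t}/J)$ on the $\btheta$-block, with the $\mu$ row zero. The correction $\kappa(J)$ should then emerge from the same Sherman--Morrison structure, which reduces the check to a one-line rational-function identity in $B$ and $J$.
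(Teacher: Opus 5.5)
Your argument is correct. The simplification you flag as the main obstacle goes through with the uncentered population meat you already have, so no fallback is needed. With $x=1-B$ and $g=B/(Jx)$, one has $1+g=(Jx+B)/(Jx)$ and $1+2g+Jg^{2}=(Jx^{2}+2Bx+B^{2})/(Jx^{2})$. Hence
\[
1-(1-B)\,\frac{1+2g+Jg^{2}}{1+g}
=\frac{Jx+B-Jx^{2}-2Bx-B^{2}}{Jx+B}
=\frac{B(1-B)(J-1)}{J(1-B)+B}
=B\,\kappa(J),
\]
using $Jx(1-x)=JBx$ and $B(1-2x-B)=-Bx$. Your centered variant would in fact return the same value. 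The first column of $H$ is $a(J,\mathbf{1}^{\top})^{\top}$, so $J[H^{-1}]_{j\mu}=-\sum_k K_{jk}$, and therefore $\sum_k([H^{-1}]_{j\mu}+K_{jk})^{2}=\sum_k K_{jk}^{2}-(\sum_k K_{jk})^{2}/J$.

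Your route differs from the paper's. The paper inverts the arrow matrix through the Schur complement of the diagonal $\btheta$-block ($S=\tau\Sigma_B$). It writes the meat as $\sum_k d_k\,\mathbf{f}_k\mathbf{f}_k^{\top}$ with $\mathbf{f}_k=\mathbf{e}_\mu+\mathbf{e}_{\theta_k}$ and $d_k=\DEFF_k a_k$, and sums the squared contrasts $[H^{-1}]_{j\mu}+[H^{-1}]_{jk}$. Because no common design effect is used until the last step, it obtains the ensemble-exact formula of Section~S-B.3 along the way and then specializes. Your identity $J_{\mathrm{c}}=\DEFF\,(H-\tau P)$ is the random-effect analogue of the paper's $\mathbf{M}=\DEFF(\mathbf{S}_\beta-\boldsymbol{\Delta})$ in the proof of Theorem~1. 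It gives the compact form $\DER_{\theta_j}=\DEFF\,(1-\tau[K^{2}]_{jj}/K_{jj})$, makes $\DER_\theta<\DEFF$ immediate, and survives unequal $a_j$. It fails, however, once the $\DEFF_j$ differ, which is exactly the heterogeneous case the paper's route is built to cover.

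Two cautions. First, your derivation of $\operatorname{Var}(s_j)=\DEFF_j\,a$ treats residuals as independent of the weights, which is not innocuous under informative selection. The paper instead takes $d_j=\DEFF_j a_j$ as the definition of the group design effect (Section~S-A), and you should too.

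Second, your aside that covariates are handled ``by the same argument'' is too quick. With a between-group covariate $z_j$ the identity still holds, but the $\btheta$-block of $H^{-1}$ becomes $(a+\tau)^{-1}(\mathbf{I}+(a/\tau)P_Z)$, where $P_Z$ is the projection onto the span of the group-level design (rows $(1,z_j)$). So $1/J$ is replaced by the leverage $h_j=[P_Z]_{jj}$, and $\DER_{\theta_j}=\DEFF\,B(1-B)(1-h_j)/(1-B+Bh_j)$. This equals $B\,\DEFF\,\kappa(J)$ only when $h_j=1/J$. The paper's proof is likewise confined to $(\mu,\btheta)$, so you lose nothing relative to it.
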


\begin{proof}[Proof sketch]
Conditioning on $\mu$, the posterior precision of $\theta_j$ is
$a + \tau$ with $a = n/\sigma^2_e$ and $\tau = 1/\sigma^2_\theta$,
while the target variance inflates the data information by the
design effect; the ratio yields \eqref{eq:der-re-cond}.
The finite-$J$ factor arises because the marginal posterior of
$\theta_j$ integrates over $\mu$, coupling groups through the grand
mean; $\kappa(J) \to 1$ as $J \to \infty$.
Full proof: Section~S-B.2 of the Supporting Information.
\end{proof}

$B$ is the reliability ratio of the Fay--Herriot model
\citep{FayHerriot1979, RaoMolina2015}.
When $B$ is small, the estimate borrows heavily from the global mean
and is barely affected by the design ($\DER_\theta \approx 0$).
When $B$ is near one, the group's own data dominate and the conditional design
effect passes through ($\DER_\theta^{\mathrm{cond}} \approx \DEFF$).
$\kappa(J) \approx 1$ for $J \ge 20$ when $B$ is not close to
one.
$\DER_\theta$ has an inverted-U shape as a function of $B$, with a
peak at $B^* = \sqrt{J}/(\sqrt{J}+1)$.
$\DER_\theta$ is strictly less than $1$ under simple random sampling.
Section~S-C of the Supporting Information gives details (see
Corollary~S-C.1).

\begin{corollary}[Conservation identity]\label{cor:conservation}
Under the hypotheses of
\Cref{thm:fixed-effect,thm:random-effect},
\begin{equation}\label{eq:conservation}
  \DER_\mu + \DER_\theta^{\mathrm{cond}}
  \;=\;
  \DEFF.
\end{equation}
\end{corollary}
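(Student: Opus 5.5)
The plan is to read off both terms from the two theorems and add them. The grand mean $\mu$ is the intercept coefficient. Its covariate is the constant $1$, which varies only between groups; more precisely, it has no within-group variation at all. Part (ii) of \Cref{thm:fixed-effect} then applies with $R_\mu = B$, giving $\DER_\mu = \DEFF\,(1-B)$. Part (a) of \Cref{thm:random-effect} gives $\DER_\theta^{\mathrm{cond}} = B\cdot\DEFF$. Adding the two yields $\DEFF(1-B) + B\,\DEFF = \DEFF$, which is \eqref{eq:conservation}.

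The only step that needs real justification is the claim $R_\mu = B$, i.e.\ that the intercept sits exactly at the ``between-only'' endpoint of the interval $[0,B]$ in \Cref{thm:fixed-effect}. Since $R_k$ is defined through the general matrix expression of Section~S-B.1, I would check it directly in the simplest setting: $\mathbf{x}_{ij} = 1$ plus group effects. The plan for that check is as follows.

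\begin{enumerate}
\item Under global unit-mean weights with common within-group design effect, use the Gaussian Hessian. This gives $\mathbf{H}_{\mu\mu} = \sum_{ij}\tilde w_{ij}/\sigma_e^2$, $\mathbf{H}_{\mu\theta_j} = \sum_i \tilde w_{ij}/\sigma_e^2$, and $\mathbf{H}_{\theta\theta} = \mathrm{diag}(\sum_i\tilde w_{ij}/\sigma_e^2 + \tau)$.
\item Compute the Schur complement $S_\mu$. Under balance, where group weight totals equal $n$ on average as implied by the balanced setup used in the theorems, this is $S_\mu = Ja(1-B)$ with $a = n/\sigma_e^2$.
\item Profile the meat with the vector $[1, -\mathbf{G}\mathbf{D}^{-1}]$. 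Each group's $\mu$-score total is multiplied by $1 - a/(a+\tau) = 1-B$.
\item Evaluate the resulting ratio. The profiled meat is $(1-B)^2$ times $J\,a\,\DEFF$ (design covariance of group score totals), while the bread$^{-1}$ contributes $(Ja(1-B))^{-2}$. Their product is $\DEFF(1-B)/(Ja(1-B))$, and dividing by the posterior variance $1/(Ja(1-B))$ gives $\DEFF(1-B)$.
\end{enumerate}

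This confirms $R_\mu = B$ and is exactly the $(1-B)$ versus $(1-B)^2$ mismatch described in the proof sketch of \Cref{thm:fixed-effect}.

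The main obstacle is consistency of the conditioning: $\DER_\mu$ is marginal over $\boldsymbol\theta$, while $\DER_\theta^{\mathrm{cond}}$ conditions on $\mu$. Both must be computed against the same model-group meat and with the same plug-in $\tau$, so I would verify that the ``design covariance of a group score total equals $\DEFF$ times its model variance'' step is used identically in both theorems. If it is, the identity is purely additive, and no finite-$J$ correction enters, because $\kappa(J)$ appears only in the marginal $\DER_\theta$ and not in the conditional one.
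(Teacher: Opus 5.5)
Your proposal is correct and follows the paper's own argument: the paper likewise takes $\DER_\mu = \DEFF(1-B)$ from Theorem~1 with $R_k = B$ for the intercept (in the intercept-only model underlying Theorem~2) and $\DER_\theta^{\mathrm{cond}} = B\cdot\DEFF$ from Theorem~2(a), then adds them. Your direct check of $R_\mu = B$ is the paper's ``pure between-group/intercept'' special case of the Theorem~1 matrix formula ($\mathbf{W}_j = 0$, so each group contributes $(1-B)a$ to the bread and $(1-B)^2 a\,\DEFF$ to the effective meat). One correction: the check needs every weighted group total to be equal, so that $a_j \equiv a$ and $B_j \equiv B$; equality merely on average would not give $S_\mu = Ja(1-B)$.
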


This follows from \Cref{thm:fixed-effect} with $R_k = B$ and
\Cref{thm:random-effect}(a).
Within this model class, the prior moves design sensitivity from the
group random effects to the grand mean; the total is unchanged.
Whether an analogous identity holds beyond the balanced Gaussian
class is open (\Cref{sec:discussion}).

\begin{remark}[Heterogeneous design effects]\label{rem:heterogeneous}
When the $\DEFF_j$ differ across groups, the random-effect $\DER$
depends on the whole ensemble $\{(B_k, \DEFF_k)\}$ through the
grand mean.
The naive per-group product $B_j \DEFF_j \kappa_j(J)$ is therefore
not exact: with $J = 3$, common
$B = 0.5$, and $\DEFF = (1, 9, 1)$, the exact $\DER$ of the first
group effect is $0.583$ while the naive product gives $0.25$.
Section~S-B.3 of the Supporting Information gives the ensemble-exact
expression, verified algebraically and numerically for unbalanced
group sizes and heterogeneous design effects.
The general formula \eqref{eq:der} requires no closed form.

For non-Gaussian likelihoods and unbalanced multi-covariate models,
the closed forms explain the mechanism; we compute the diagnostic
from the declared target using the general formula \eqref{eq:der}.
\end{remark}

\subsection{Limits of the diagnostic: hyperparameters}
\label{sec:limits}

The $\DER$ excludes hyperparameters such as $\sigma^2_\theta$.
The target uses data-level scores, but the score for
$\sigma^2_\theta$ has no direct data contribution: this hyperparameter
enters only through the unweighted group-level density.
More generally, the more a quantity relies on the prior rather than
the weighted data, the less design adjustment it needs and the less
informative the $\DER$ is about it.
Hyperparameter uncertainty under informative designs requires
separate methods \citep{SavitskyWilliams2022, WilliamsMcGuireSavitsky2025},
outside the scope of the $\DER$ framework.

\Needspace{24\baselineskip}
\subsection{The compute-classify-correct algorithm}
\label{sec:algorithm}

Algorithm~\ref{alg:ccc} uses $\DER_{k} > c_{0}$ to select parameter
$k$ for correction to the target variance.
\begin{algorithm}[H]
\caption{Compute--Classify--Correct (CCC)}
\label{alg:ccc}
\begin{algorithmic}[1]
\Require MCMC draws $\{\boldsymbol{\phi}^{(s)}\}_{s=1}^{T}$;
  declared-target matrices $H$, $J_{\mathrm{c}}$; threshold~$c_0$
  (default $c_0 = 1.2$)
\Ensure Selectively corrected posterior draws
\Statex
\Statex \textbf{Step 1: Compute}
\State $\Vtarget \gets H^{-1}\,J_{\mathrm{c}}\,H^{-1}$
\State $\SigMCMC \gets \widehat{\Var}\!\pr{\boldsymbol{\phi}^{(1)},
  \ldots, \boldsymbol{\phi}^{(T)}}$
\State $\DER_k \gets [\Vtarget]_{kk}\,/\,[\SigMCMC]_{kk}$
  \quad for each parameter~$k$
\Statex
\Statex \textbf{Step 2: Classify}
\State $\mathcal{F} \gets \{k : \DER_k > c_0\}$
  \Comment{Flag design-sensitive parameters}
\Statex
\Statex \textbf{Step 3: Correct}
\If{$\mathcal{F} \neq \varnothing$}
  \State Extract $\Vtarget[\mathcal{F},\mathcal{F}]$ and $\SigMCMC[\mathcal{F},\mathcal{F}]$
  \State $\mathbf{R}_1 \gets \mathrm{Chol}(\SigMCMC[\mathcal{F},\mathcal{F}])$;
    \quad $\mathbf{R}_2 \gets \mathrm{Chol}(\Vtarget[\mathcal{F},\mathcal{F}])$
  \State $\boldsymbol{\phi}_{\mathcal{F}}^{*(s)} \gets
    \hat{\boldsymbol{\phi}}_{\mathcal{F}} +
    \mathbf{R}_2\,\mathbf{R}_1^{-1}
    (\boldsymbol{\phi}_{\mathcal{F}}^{(s)} - \hat{\boldsymbol{\phi}}_{\mathcal{F}})$
    \quad for $s = 1, \ldots, T$
\EndIf
\State \Return corrected draws
  $\{\boldsymbol{\phi}^{*(s)}\}$ with unflagged parameters
  unchanged
  \Comment{Cost: $O(|\mathcal{F}|^3)$ vs.\ $O(d^3)$}
\end{algorithmic}
\end{algorithm}
\Needspace{2\baselineskip}
Step~3 applies a joint Cholesky transformation to the flagged block.
It matches the marginal variances and covariance structure of the
target on $\mathcal{F}$, reduces to a scalar rescaling when
$|\mathcal{F}| = 1$, and costs
$O(|\mathcal{F}|^3)$.
It requires the flagged sub-block
$\Vtarget[\mathcal{F},\mathcal{F}]$ to be positive definite.
Because
$\operatorname{rank}(J_{\mathrm{c}})$ is bounded by the cluster
count, this requirement can fail when $|\mathcal{F}|$ approaches
that count.
In that case, the software substitutes the nearest
positive-definite matrix \citep{Higham2002} and labels the correction
approximate.
The software leaves unflagged parameter draws unchanged.
With the conservative default $c_0 = 1.2$, we correct only parameters
whose target variance exceeds their posterior variance by more than
20\%.
\Cref{sec:simulation} and the application examine sensitivity to $c_0$.
Selective correction targets marginal calibration of the
flagged parameters.
Joint credible regions spanning flagged and unflagged parameters
require the full correction of \citet{WilliamsSavitsky2021}; the
$\DER$ helps interpret that correction.

\subsection{Properties and interpretation}
\label{sec:properties}

\paragraph{Covariate-specificity.}
The $\DER$ of a fixed effect depends on the within/between variation
of its covariate, not the outcome's marginal distribution.
Two covariates in one model can lie at opposite ends of the
$[\DEFF(1-B), \DEFF]$ band.

\paragraph{Shrinkage--design tradeoff.}
\Cref{thm:random-effect} and \eqref{eq:conservation} show that, under
the stated conditions, the parameters that benefit most from
shrinkage are least affected by the design.
This explains why blanket correction harms random effects under
the model-group target.

\begin{remark}[Derived quantities]\label{rem:reparam}
The $\DER$ is not reparameterization-invariant.
For a functional $g(\boldsymbol{\phi})$, the delta method gives
\begin{equation}\label{eq:der-delta}
  \DER(g)
  \;=\;
  \frac{\nabla g^{\t}\, \Vtarget\, \nabla g}
       {\nabla g^{\t}\, \SigMCMC\, \nabla g},
\end{equation}
with $\nabla g$ at $\hat{\boldsymbol{\phi}}$, giving a
fixed-dimensional diagnostic for area means, marginal effects, and
other scalar summaries, computed from the same two matrices.
\end{remark}

\paragraph{Effective sample size.}
$N_{\mathrm{eff}}^{\mathrm{design}}(\beta_k) = N / \DER_{\beta_k}$:
for within-group covariates this recovers the Kish effective sample
size $N/\DEFF$; for between-group covariates it exceeds the Kish
value, reflecting the prior's contribution.

\subsection{Four-tier classification}
\label{sec:classification}

We classify parameters by their exposure to the survey design, using
each parameter's $\DER_{k}$ to assess the need for variance correction.
\Cref{tab:four-tier} presents the four tiers (I-a, I-b, II, and III)
used throughout the paper.
Tier~I-a parameters inherit the full design effect and are the
primary correction candidates; Tier~I-b parameters are attenuated by
$(1-B)$, typically below threshold even when $\DEFF$ is large.
\Needspace{5\baselineskip}
Tier~II diagnostics depend on the declared unit: well below one under
the model-group target, but not necessarily under a design-PSU target
that captures within-group design variability
(\Cref{sec:application}).
Tier~III falls outside the sandwich construction.
The simulation examines the framework under two-stage informative
sampling.
\Cref{fig:der-curves} shows how the closed-form DER decompositions
vary with the shrinkage factor.

\begin{table}[H]
  \centering
  \caption{Four-tier classification of parameters by design
    sensitivity.  The ``DER (balanced)'' column gives the closed-form
    expression under (C1)--(C3) with common within-group design
    effects (\Cref{thm:fixed-effect,thm:random-effect}); Tier~III is
    excluded from the diagnostic by construction (\Cref{sec:limits}).}
  \label{tab:four-tier}
  \small
  \begin{tabularx}{\textwidth}{@{}l >{\raggedright}p{3.2cm} >{\raggedright}p{2.8cm} >{\centering}p{2.2cm} >{\raggedright\arraybackslash}X@{}}
    \toprule
    \textbf{Tier} & \textbf{Parameter Type}
      & \textbf{Info.\ Source}
      & \textbf{DER (Bal.)}
      & \textbf{Action} \\
    \midrule
    I-a & Fixed effects (within-group)
      & Within-group variation
      & $\DEFF$
      & Correct when flagged \\[3pt]
    I-b & Fixed effects (between-group)
      & Between-group variation
      & $\DEFF\,(1 - B)$
      & Monitor; often $< 1$ \\[3pt]
    II & Random effects ($\theta_j$)
      & Both (prior $+$ data)
      & $B \cdot \DEFF \cdot \kappa(J)$\textsuperscript{b}
      & Monitor; correction rarely needed under the model-group
        target \\[3pt]
    III & Hyperparameters ($\boldsymbol{\psi}$)
      & Between-group ($J$ groups)
      & N/A
      & Outside the diagnostic; profile or multi-level methods \\
    \bottomrule
  \end{tabularx}
  \vspace{2pt}
  {\footnotesize \textsuperscript{b}Common-$\DEFF$ balanced form;
    with heterogeneous $\DEFF_j$ the ensemble-exact expression in
    Section~S-B.3 of the Supporting Information applies.  $\kappa(J) \approx 1$ for
    $J \geq 20$ unless $B$ is close to one.\par}
\end{table}

\begin{figure}[H]
  \centering
  \includegraphics[width=\textwidth]{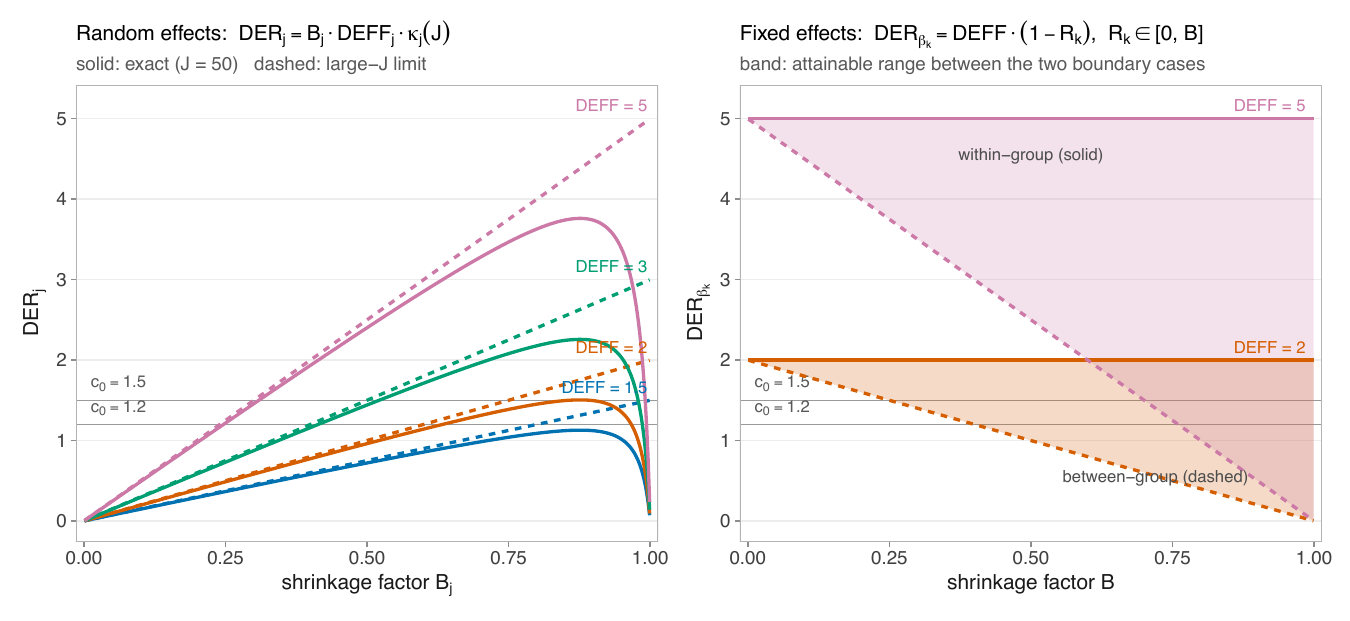}
  \caption{DER decomposition as a function of the shrinkage
    factor~$B$, using the closed forms of
    \Cref{thm:fixed-effect,thm:random-effect} under a common
    within-group design effect.
    Left: random-effect DER (\Cref{thm:random-effect}) has an
    inverted-U with peak at $B^* = \sqrt{J}/(\sqrt{J}+1)$; dashed
    lines show the large-$J$ limit $B \cdot \DEFF$.
    Right: fixed-effect DER (\Cref{thm:fixed-effect}) spans
    $[\DEFF(1-B),\, \DEFF]$, with within-group covariates at the
    upper boundary (solid) and between-group covariates at the lower
    (dashed).
    Horizontal gray lines mark candidate thresholds
    $c_0 \in \{1.2, 1.5\}$.}
  \label{fig:der-curves}
\end{figure}

\subsection{Scope: variance calibration, not identification}
\label{sec:scope}

We first need to establish whether the available design information
identifies the model's cluster-dependence parameters.
First-order (marginal) weights and sampled clusters are generally
insufficient: consistent estimation of variance components in
two-level models requires higher-order inclusion probabilities or
additional design structure
\citep{RaoVerretHidiroglou2013, YiRaoLi2016}.  \citet{Slud2026}
shows that in some small-cluster settings no consistent estimator of
cluster-dependence parameters exists based on first-order
probabilities alone.
The $\DER$ compares the fitted pseudo-posterior with the declared
target; it does not resolve identification.
Two settings support the consistency premise in our analyses.
First, when all model groups are observed rather than
sampled (as with the 51 states in \Cref{sec:application}),
group-level quantities are identified from within-group data, and
the impossibility results for sampled clusters do not bind.
Second, for sampled-group designs,
\citet{SavitskyWilliamsSrivastava2020} analyze consistency of the
weighted estimator under a within-group balance condition and
provide supporting simulation evidence.
Where neither holds (informative within-group selection with small
$n_j$), the group-level diagnostic inherits this identification
limitation.
We include small-$n_j$ simulation cells to examine this setting.

%% file: section4.tex

\section{Simulation Study}
\label{sec:simulation}

We test three predictions from the decompositions.
Within-group fixed effects should inherit the realized design effect,
while between-group coefficients and, under the model-group target,
random effects are shielded.
Selective correction should recover coverage where needed without
affecting the parameters it leaves uncorrected.
When design PSUs differ from model groups, the two targets should
measurably differ in the predicted direction.
The simulation specifies finite populations, inclusion probabilities,
and realized samples, with weights determined by the sampling mechanism.

\subsection{Design}
\label{sec:sim-design}

\paragraph{Populations and sampling mechanism.}
In each replication, we generate a finite population from the
hierarchical logistic model
\begin{equation}\label{eq:sim-model}
  y \mid \theta_j
  \;\sim\; \Bern\!\bigl(\expit(\beta_0 + \beta_1 x^{(\mathrm{W})}
  + \beta_2 x^{(\mathrm{B})}_j + \theta_j)\bigr),
  \qquad
  \theta_j \;\sim\; \Norm(0, \sigma^2_\theta),
\end{equation}
with $(\beta_0, \beta_1, \beta_2) = (-0.5, 0.5, 0.3)$ and
$\sigma^2_\theta$ set by a latent-scale intraclass correlation of
$0.15$. We then draw a two-stage sample with exact inverse-probability
weights under one of two structural arms:
\begin{itemize}[nosep]
  \item \emph{Sampled groups} ($c = j$): stage one samples $J$ of
    $M = 200$ population groups by randomized systematic PPS with
    size measures $\propto \exp(\gamma_{\mathrm{grp}} \theta_g +
    \text{noise})$; stage two Poisson-samples units with
    probabilities $\propto \exp(\gamma_{\mathrm{unit}} y +
    \text{noise})$, so that selection operates through the random
    effects and the response.  The design PSU is the model group.
  \item \emph{PSUs within groups} ($c \neq j$): all $J$ groups are
    observed, as in the NSECE. Within each group, we sample 4 of 12
    population PSUs by PPS, with size measures depending on a PSU-level
    outcome shock that model \eqref{eq:sim-model} deliberately omits.
    We Poisson-sample units within PSUs. The PSU stage is stratified
    by group. The two targets therefore differ, and the true parameter
    values are known.
\end{itemize}
Weights are the exact inverse first-order inclusion probabilities of
the design, scaled by the declared global unit-mean
convention.

\paragraph{Empty-cluster handling.}
We retain empty second-stage Poisson samples as drawn.
Redrawing them or forcing inclusion would change the effective
first-order probabilities, making them inconsistent with the weights.
Selected PSUs with empty second-stage samples remain in the design-PSU
cluster universe and enter the stratified meat with zero score totals.
In the smallest cells, these account for 6--7\% of selected PSUs per
replication.
We regenerate a replication only when an entire model group is empty,
using a deterministically offset seed. This occurs with probability of
order $10^{-3}$ or below per replicate, only in the smallest cells.
All summaries therefore condition on every group being observed.
Regeneration was needed in 2 of the \slotnum{4{,}800} replications;
trigger counts for each replication are provided with the replication
code.

Informativeness has three levels ($\gamma = $ none, moderate, strong).
We measure design effects on the realized samples rather than setting
them in advance. Realized global Kish $\DEFF$ averages
\slotnum{1.65}, \slotnum{2.0}, and \slotnum{2.5--3.9} across the three
levels.

\paragraph{Factorial grid.}
We cross $J \in \{20, 50\}$, expected group sample size
$\bar{n}_j \in \{10, 50\}$ (the small-$\bar n_j$ cells probe the
small-cluster regime of \Cref{sec:scope}), the three informativeness
levels, and the two structural arms:
$2 \times 2 \times 3 \times 2 = 24$ cells, 200 replications each,
\slotnum{4{,}800} models.

\paragraph{Estimation, convergence, and targets.}
We fit each replication using \texttt{cmdstanr} (non-centered
parameterization, 4 chains $\times$ 1{,}500 post-warmup draws after
1{,}000 warmup iterations, \texttt{adapt\_delta} $= 0.90$).
We require $\widehat{R} \le 1.01$ on all
parameters including every $\theta_j$, bulk ESS $\ge 100$, and zero
divergences.
\slotnum{99.3\%} of replications meet these criteria.
We retain failures in the archive but exclude them from the summaries,
with counts reported in every table.
We compute the $\DER$ under the model-group target in both arms and
under the design-PSU target (stratified by group) in the
$c \neq j$ arm.
We compare three interval strategies at nominal 90\%: \emph{naive},
\emph{selective} (\Cref{alg:ccc} at $c_0 = 1.2$, flagged block
corrected jointly), and \emph{blanket} (every parameter rescaled
marginally; the full joint correction is infeasible at the
model-group target, whose meat has rank at most $J < d$).
With 200 replications, coverage estimates carry Monte Carlo standard
errors of about 2 percentage points.

The main simulation summaries appear in
\Cref{tab:sim-coverage,fig:der-separation,fig:coverage}.

\subsection{Results}
\label{sec:sim-results}

\input{tables/tab2_sim_coverage}

\paragraph{Design-effect inheritance and shielding.}
The within-group coefficient $\beta_1$ tracks the realized design
effect: its mean $\DER$ rises from \slotnum{1.4} under weak designs
to \slotnum{2.0} under strong informativeness.
The between-group coefficients remain shielded (cell-mean
$\DER \le$ \slotnum{0.16}). Under the model-group target, random
effects also have low DERs, consistent with Tier~II (median $\DER$
\slotnum{0.03--0.08}).
These results follow the pattern of
\Cref{thm:fixed-effect,thm:random-effect} under selection operating
through $\theta$ and $y$.
Strong stage-one informativeness also biases the point estimates of
between-group and group-level quantities. In the strongest
sampled-groups cells, pooled naive coverage falls to \slotnum{0.53}
and \slotnum{0.63}, with per-cell coverage as low as \slotnum{0.47},
even though their $\DER$s are far below one.
Variance matching does not correct this location bias, and the
$\DER$ does not diagnose it (\Cref{sec:introduction}).

\begin{figure}[H]
  \centering
  \includegraphics[width=\textwidth]{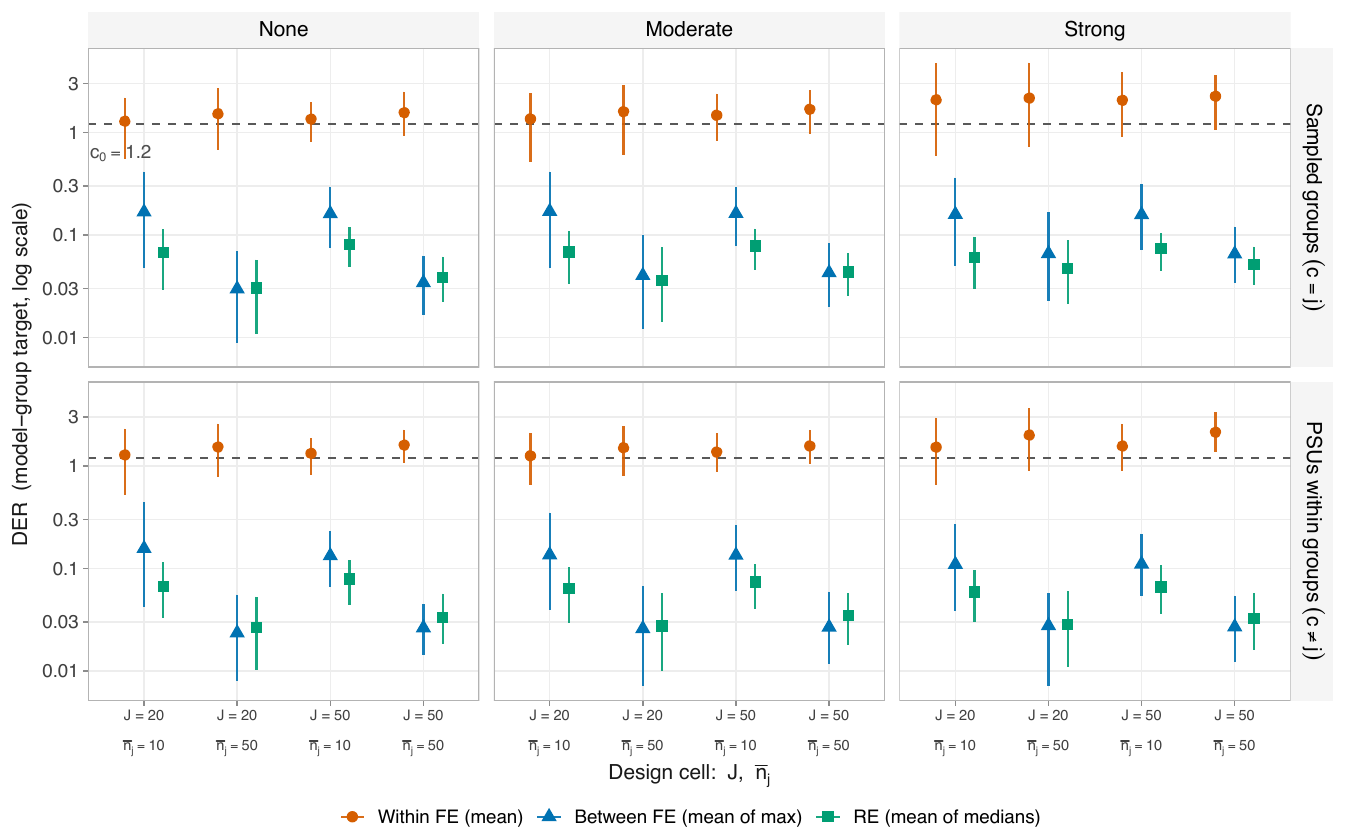}
  \caption{$\DER$ separation profile under the model-group target,
    by informativeness level and structural arm (cell means over
    replications meeting the convergence criteria). The within-group
    coefficient tracks the realized design effect; the between-group coefficients and
    the random-effect medians remain far below the threshold
    $c_0 = 1.2$ (dashed line).}
  \label{fig:der-separation}
\end{figure}

\paragraph{The dual-target comparison.}
In the $c \neq j$ arm, the targets differ as suggested by the NSECE.
The model-group target flags \slotnum{0.5--1.1} parameters per
replication on average, while the design-PSU target flags
\slotnum{2--18}.
The gap widens with $J$ and $\bar n_j$ because the omitted PSU shock
contributes design variability that the group-level posterior smooths
over.
Coverage against the known truth supports this interpretation.
Across all flagged group effects, naive intervals cover the true
$\theta_j$ \slotnum{76\%} of the time; selective correction under the
design-PSU target increases coverage to \slotnum{89\%}.

\paragraph{Selective versus blanket correction.}
Under informative sampling, naive coverage of the exposed
coefficient falls to \slotnum{65--84\%} against the nominal 90\%,
and selective correction increases it to \slotnum{80--89\%},
matching blanket correction on the flagged parameter.
Residual undercoverage at $J = 20$ clusters reflects the finite-cluster
behavior of sandwich estimators.
The strategies differ for the remaining parameters: blanket rescaling
reduces random-effect coverage to \slotnum{0.11--0.37}, while
selective correction leaves those parameters and their coverage
unchanged by construction.

\begin{figure}[H]
  \centering
  \includegraphics[width=\textwidth]{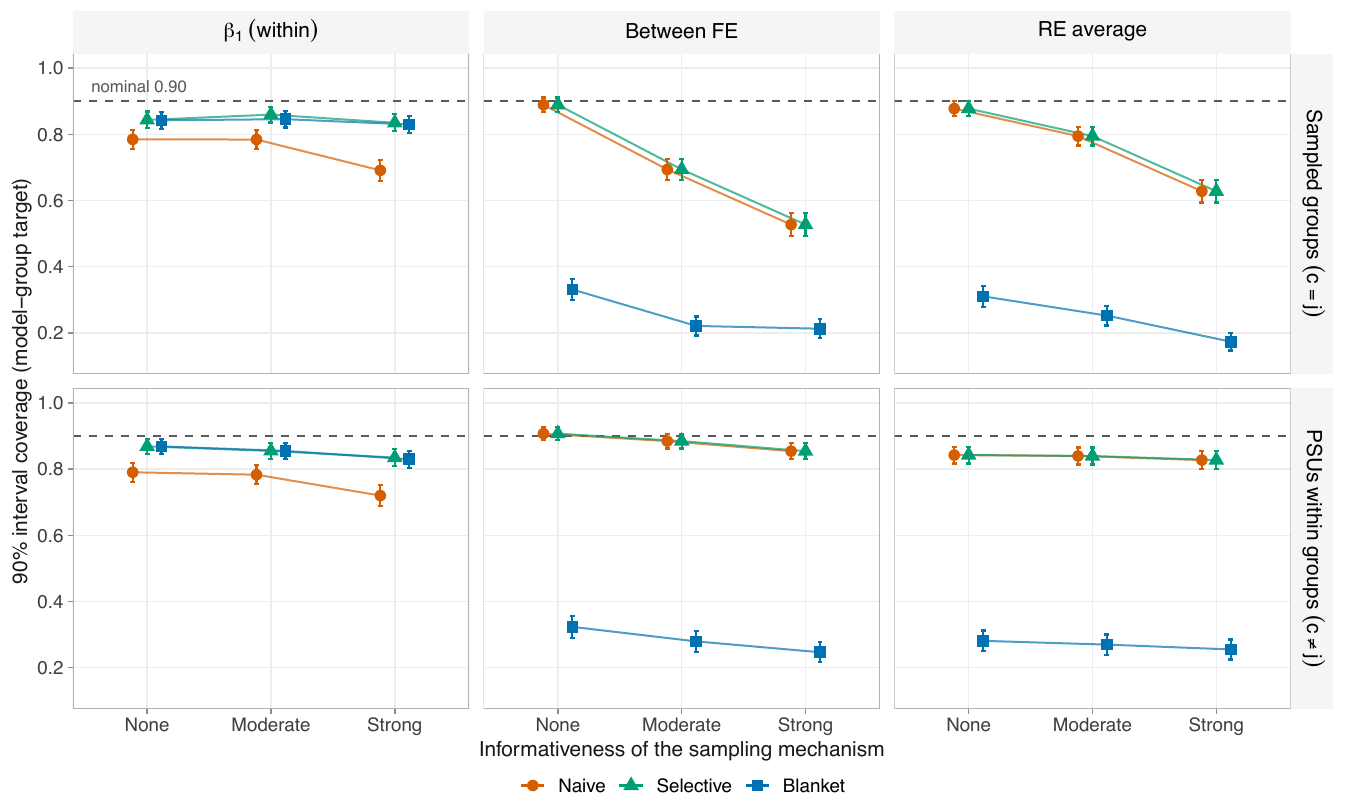}
  \caption{Coverage of nominal 90\% intervals under the three
    strategies (naive, selective at $c_0 = 1.2$, blanket), by
    informativeness level: (a) the exposed within-group coefficient,
    (b) between-group coefficients, (c) random-effect average
    (model-group target).  Error bars show Monte Carlo uncertainty.}
  \label{fig:coverage}
\end{figure}

\paragraph{Separation and threshold sensitivity.}
The within-group cell-mean $\DER$ exceeds every
non-target cell mean by a factor of at least \slotnum{7.1} (range
\slotnum{7.1--27.4} across cells).
The distributions overlap across individual replications: the largest
single non-target $\DER$ across all replications meeting the
convergence criteria is \slotnum{2.15}.
At $c_0 = 1.2$, the per-replication false-positive rate is
\slotnum{2.1}\%.
A false positive is a correction applied to a protected parameter;
its cost is a wider interval, not invalid inference.
Across $c_0 \in \{0.8, 1.0, 1.2, 1.5, 2.0\}$, the false-positive rate falls
\slotnum{16.0} $\to$ \slotnum{5.9} $\to$ \slotnum{2.1} $\to$
\slotnum{0.3} $\to$ \slotnum{0.0}\%.
Decisions on protected parameters are unchanged in \slotnum{96.9}\%
of replications for $c_0 \in [1.1, 1.6]$.
The exposed coefficient's flag changes as the threshold crosses its
$\DER$; the full sweep appears in the supplement.

%% file: tables/tab2_sim_coverage.tex
\begin{table}[H]
\small
\singlespacing
\centering
\caption{Simulation study: 90\% interval coverage under the model-group target by informativeness, structural arm, and interval strategy. Results pool $J \in \{20, 50\}$ and $\bar n_j \in \{10, 50\}$ (four cells $\times$ 200 replications per row block). ``Reps'' counts replications meeting the convergence criteria out of 800; failures (33 of 4{,}800 in total) are excluded from all summaries. The selective strategy corrects only the flagged block (threshold $c_0 = 1.2$); the blanket strategy rescales every parameter marginally to the target.}
\label{tab:sim-coverage}
\begin{tabular}{@{}llcccc@{}}
\toprule
 & & & \multicolumn{3}{c}{Coverage (nominal 0.90)} \\
\cmidrule(lr){4-6}
Informativeness & Reps & Strategy & $\beta_1$ (within) & FE between & RE average \\
\midrule
\multicolumn{6}{@{}l}{\itshape Sampled groups ($c = j$)} \\
None & 799 & Naive & 0.785 & 0.889 & 0.878 \\
 &  & Selective & 0.844 & 0.889 & 0.878 \\
 &  & Blanket & 0.842 & 0.331 & 0.310 \\
\addlinespace
Moderate & 797 & Naive & 0.784 & 0.694 & 0.794 \\
 &  & Selective & 0.859 & 0.694 & 0.794 \\
 &  & Blanket & 0.846 & 0.221 & 0.252 \\
\addlinespace
Strong & 793 & Naive & 0.691 & 0.527 & 0.628 \\
 &  & Selective & 0.835 & 0.527 & 0.628 \\
 &  & Blanket & 0.830 & 0.212 & 0.173 \\
\midrule
\multicolumn{6}{@{}l}{\itshape PSUs within groups ($c \neq j$)} \\
None & 792 & Naive & 0.790 & 0.907 & 0.843 \\
 &  & Selective & 0.869 & 0.907 & 0.843 \\
 &  & Blanket & 0.869 & 0.323 & 0.281 \\
\addlinespace
Moderate & 793 & Naive & 0.783 & 0.885 & 0.840 \\
 &  & Selective & 0.855 & 0.885 & 0.840 \\
 &  & Blanket & 0.855 & 0.279 & 0.270 \\
\addlinespace
Strong & 793 & Naive & 0.720 & 0.854 & 0.827 \\
 &  & Selective & 0.835 & 0.854 & 0.828 \\
 &  & Blanket & 0.830 & 0.247 & 0.255 \\
\bottomrule
\end{tabular}

\vspace{4pt}
\begin{minipage}{\textwidth}
\footnotesize
\textit{Notes.} ``FE between'' averages the intercept and the between-group coefficient $\beta_2$; ``RE average'' averages the $J$ random intercepts. Monte Carlo standard errors are at most about 2 percentage points per cell (200 replications) and about 1 percentage point for the pooled entries shown here ($\approx 800$ replications).
\end{minipage}
\end{table}

%% file: section5_application.tex

\section{Application: NSECE 2019}
\label{sec:application}

We apply the framework to the 2019 National Survey of Early Care and
Education \citep[NSECE;][]{NSECE2022}, a nationally representative
study of childcare providers
\citep[data and model specification from][]{Lee2026}.
The stratified multistage design includes 415 PSUs in 30 strata,
$N = 6{,}785$ center-based providers, and $J = 51$ states (including DC).
The model groups differ from the design PSUs, and every state is in
the sample by construction. We therefore specify both the variance
target and weight convention.

\subsection{Data, model, and declared choices}
\label{sec:app_data}

The outcome is binary participation in the state Infant--Toddler
quality improvement system (prevalence 64.7\%).
We include a within-state poverty measure (centered within state;
\citealp{enders2007centering}) and a between-state tiered-reimbursement
indicator (Tier~I-b).
The CWC construction removes between-state variation from the poverty
measure, placing it in Tier~I-a.
We fit the hierarchical logistic regression
\begin{equation}\label{eq:nsece_model}
  y_i \mid \bbeta, \theta_{j[i]}
  \;\sim\;
  \Bern\!\bigl(
    \expit(\beta_0 + \beta_1 x_{i}^{(\mathrm{cwc})}
    + \beta_2 z_j + \theta_{j[i]})
  \bigr),
  \qquad
  \theta_j \;\sim\; \Norm(0, \sigma_\theta^2),
\end{equation}
via \texttt{cmdstanr} (non-centered parameterization,
pseudo-likelihood weighting, 4 chains $\times$ 2{,}000 post-warmup
iterations, \texttt{adapt\_delta} $= 0.95$).

We scale the raw design weights under the global unit-mean convention
(raw-weight Kish $\DEFF = 3.76$); \Cref{sec:app_convention} examines
the within-state alternative.
We report both aggregation units. The design-PSU target (415 PSUs,
30 strata, centered DF-corrected meat) is our primary design-based
analysis. The model-group target (51 states) supports interpretation
using the closed forms of \Cref{sec:framework}.
We use the posterior mean as the plug-in value,
$\hat{\sigma}_\theta = 0.650$.
The fit meets the convergence criteria: $\widehat{R} \le 1.004$ on
all parameters, bulk ESS $\ge 1{,}499$, and zero divergences.

\subsection{The dual-target diagnostic}
\label{sec:app_results}

\Cref{tab:nsece_der} and \Cref{fig:nsece_der} present the $\DER$
profile of all 54 parameters under both targets.
The full diagnostic and correction procedure runs in a fraction of a
second after MCMC.

\input{tables/tab3_nsece_der}

\begin{figure}[!t]
  \centering
  \includegraphics[width=\textwidth]{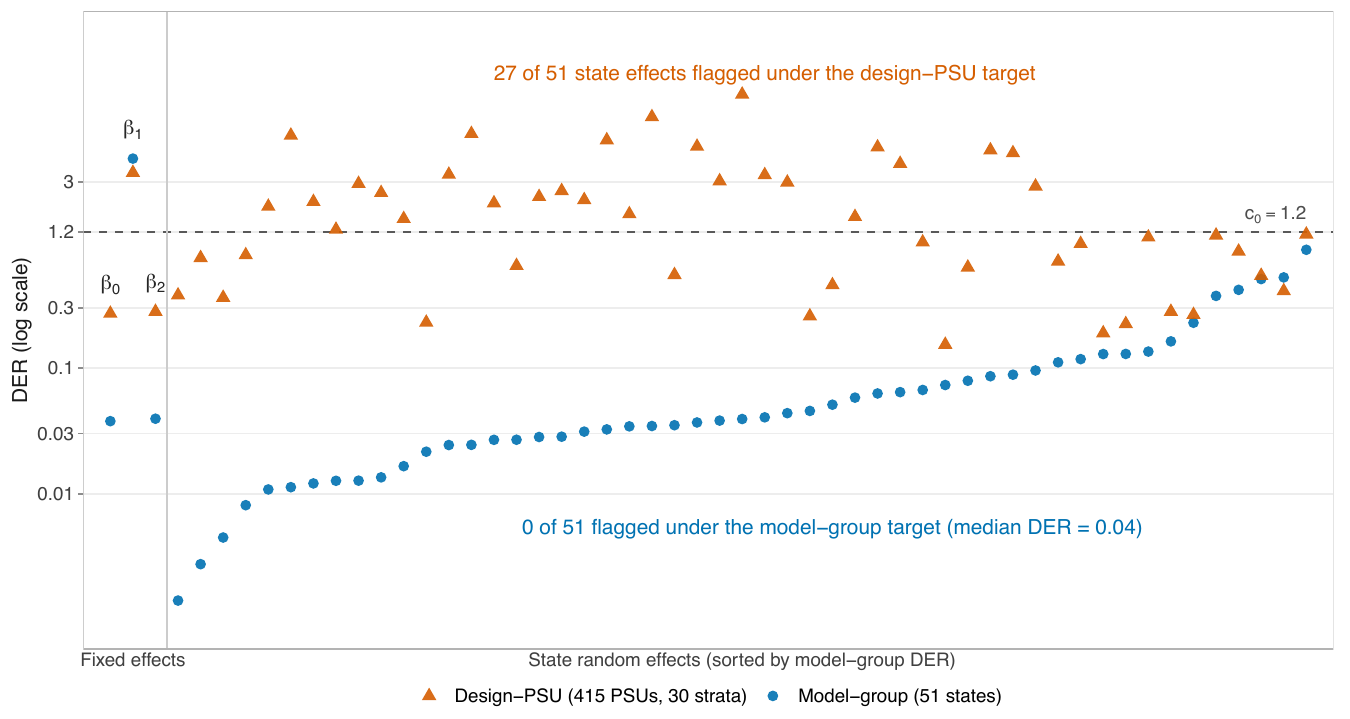}
  \caption{NSECE $\DER$ profile for all 54 parameters under the two
    declared targets (log scale; fixed effects labeled; states sorted
    by model-group-target $\DER$).  Under the model-group target only
    the poverty coefficient crosses $c_0 = 1.2$; under the design-PSU
    target the majority of state effects do.}
  \label{fig:nsece_der}
\end{figure}

\paragraph{Fixed effects.}
Both targets show the within/between contrast of
\Cref{thm:fixed-effect}.
The within-state poverty coefficient $\beta_1$ has
$\DER = 3.55$ under the design-PSU target and $\DER = 4.59$ under the
model-group target. It is flagged under both targets and every
target--convention combination we examine.
The intercept and the tiered-reimbursement coefficient have DERs far
below one under both targets (\slotnum{$\le 0.28$}). These parameters
are identified from between-state variation and are shielded as
Tier~I-b predicts.

\paragraph{Random effects.}
Under the model-group target, no state effect approaches the
threshold: hierarchical shrinkage absorbs the design effect, as
predicted for Tier~II.
Under the design-PSU target, \slotnum{27} of the 51 state effects
exceed $c_0 = 1.2$ (median random-effect $\DER =$ \slotnum{1.53}).
The state-level posteriors do not reflect the within-state,
between-PSU sampling variability measured by this target.
The random effects in \eqref{eq:nsece_model} smooth over the design
layer captured by the 415-PSU meat.

The model-group result indicates that the posterior does not
substantially understate uncertainty about state-level structure.
The design-PSU result shows that it understates the design's sampling
variability for the state effects.
For design-based sampling variability, the usual survey-inference
standard, the design-PSU target is the appropriate primary analysis.
We use it for that reason: the target should match the inferential
question rather than a preferred result.
In total, \slotnum{28} of 54 parameters are flagged under the
design-PSU target versus \slotnum{1} of 54 under the model-group
target.

\paragraph{Selective correction.}
Under the model-group target the flagged set is $\{\beta_1\}$ and the
correction is the scalar case of \Cref{alg:ccc}.
Under the design-PSU target the flagged block has
$|\mathcal{F}| = $ \slotnum{28}. We apply the joint block-Cholesky
correction in Step~3 to match both the target variances and the
covariance structure within the flagged block.
For $\beta_1$, the 90\% credible interval widens from
$[-0.166, -0.072]$ to $[-0.207, -0.030]$ under the design-PSU
target, a factor of \slotnum{1.88} in width.
Under the model-group target, blanket marginal correction would rescale
posterior draws for all 54 parameters toward a rank-deficient
state-level target, narrowing
\slotnum{53} intervals (mean width ratio \slotnum{0.26}, worst
\slotnum{0.04}). Selective correction avoids this narrowing.

\subsection{Sensitivity to the weight-scaling convention}
\label{sec:app_convention}

Because weights enter the pseudo-likelihood, changing their scaling
changes the fit as well as the declared target (\Cref{sec:pseudo}).
We examine this sensitivity in the NSECE.

\input{tables/tab4_nsece_2x2}
Refitting \eqref{eq:nsece_model} with weights rescaled within each
state (the multilevel scaling convention of
\citealp{PfeffermannEtAl1998b}) yields
$\DER(\beta_1) = $ \slotnum{2.70} (model-group) and \slotnum{4.70}
(design-PSU).
The ordering of the targets reverses relative to the global convention
because rescaling reallocates state effective sample sizes
(\slotnum{[17, 1{,}110]} within-state versus \slotnum{[21, 594]} global).
Two findings hold under both conventions: the poverty coefficient is
flagged under every combination, and random effects are flagged only
under the design-PSU target (\slotnum{27--28} states;
\Cref{tab:nsece_convention}).
These results do not establish that one convention is correct and the
other incorrect. They show why the convention and aggregation unit
must be declared: the diagnostic is defined relative to both, and the
pseudo-posterior itself depends on the weight convention.

\subsection{Interpretation checks and threshold sensitivity}
\label{sec:app_checks}

The structural protection factors computed from the working
information split (\Cref{sec:framework}) separate the covariate
types: $R_k =$ \slotnum{0.15} for the within-state poverty covariate
versus \slotnum{$\approx 0.93$} for the between-state parameters,
matching the Tier~I-a/I-b assignment.

Per-state random-effect $\DER$s under the model-group target have the
estimator noise anticipated by \Cref{rem:estimated}: each is built
from one centered score-total contrast per state.
We therefore interpret them by tier rather than by state.
The design-PSU target, with 415 clusters in 30 strata, is estimated
more precisely.

The flagged set is robust to the threshold: across
$c_0 \in [0.8, 2.0]$ the poverty coefficient is flagged at every
value under both targets.
The design-PSU random-effect count varies smoothly with $c_0$;
Section~S-E.5 of the Supporting Information gives the full sweep.
As in the simulation, the diagnostic adds well under a second to an
analysis whose MCMC takes minutes.

%% file: tables/tab3_nsece_der.tex
\begin{table}[H]
\small
\singlespacing
\centering
\caption{NSECE 2019: $\DER$ diagnostics for all 54 parameters under both declared variance targets. Raw design weights use global unit-mean scaling (raw-weight Kish $\DEFF = 3.76$). The fit meets the convergence criteria: $\widehat{R} \le 1.004$ on all parameters and zero divergent transitions. Flags mark $\DER > c_0 = 1.2$.}
\label{tab:nsece_der}
\begin{tabular}{@{}llcccc@{}}
\toprule
 & & \multicolumn{2}{c}{Model-group target} & \multicolumn{2}{c}{Design-PSU target} \\
\cmidrule(lr){3-4} \cmidrule(lr){5-6}
Parameter & Type & $\DER$ & Flag & $\DER$ & Flag \\
\midrule
$\beta_0$ (intercept) & Between & 0.038 & --- & 0.270 & --- \\
$\beta_1$ (poverty, CWC) & Within & 4.590 & $\checkmark$ & 3.549 & $\checkmark$ \\
$\beta_2$ (tiered reimb.) & Between & 0.039 & --- & 0.280 & --- \\
\addlinespace
$\theta_j$ (51 state effects) & RE & median 0.039 & 0 of 51 & median 1.53 & 27 of 51 \\
\midrule
Flagged total ($\DER > 1.2$) & & \multicolumn{2}{c}{1 of 54} & \multicolumn{2}{c}{28 of 54} \\
\bottomrule
\end{tabular}

\vspace{4pt}
\begin{minipage}{\textwidth}
\footnotesize
\textit{Notes.} Model-group target: clusters $=$ 51 states, single stratum. Design-PSU target: clusters $=$ 415 design PSUs within 30 strata. Both use the centered, DF-corrected meat and the declared global unit-mean weight convention; the within-state alternative is examined in Table~\ref{tab:nsece_convention}. For $\beta_1$, the selective correction under the design-PSU target widens the 90\% credible interval from $[-0.166, -0.072]$ to $[-0.207, -0.030]$ (factor 1.88).
\end{minipage}
\end{table}

%% file: tables/tab4_nsece_2x2.tex
\begin{table}[H]
\small
\singlespacing
\centering
\caption{NSECE 2019: sensitivity of the diagnostic to the declared weight convention. The rows represent different pseudo-posteriors: weights enter the pseudo-likelihood, so changing the convention changes the fit as well as the target. Each fit is evaluated under both aggregation units with identical target settings (centered, DF-corrected meat; unit-mean normalization).}
\label{tab:nsece_convention}
\begin{tabular}{@{}lcccc@{}}
\toprule
 & \multicolumn{2}{c}{$\DER(\beta_1)$ (poverty)} & \multicolumn{2}{c}{Flagged (of 54)} \\
\cmidrule(lr){2-3} \cmidrule(lr){4-5}
Weight convention & Model-group & Design-PSU & Model-group & Design-PSU \\
\midrule
Global unit-mean (declared) & 4.59 & 3.55 & 1 & 28 \\
Within-state scaling & 2.70 & 4.70 & 1 & 29 \\
\bottomrule
\end{tabular}

\vspace{4pt}
\begin{minipage}{\textwidth}
\footnotesize
\textit{Notes.} Within-state scaling rescales weights to sum to each state's sample size (the multilevel scaling convention of \citealp{PfeffermannEtAl1998b}). State pseudo-likelihood sample sizes then equal the realized $n_j$ ($[17, 1{,}110]$); the global convention allocates them by weighted share ($[21, 594]$). This reallocation of effective information across states reverses the ordering of the two targets for $\beta_1$. The poverty coefficient is flagged under all four combinations. Random effects remain flagged only under the design-PSU target (27 and 28 of 51 states).
\end{minipage}
\end{table}

%% file: section6.tex

\section{Discussion}
\label{sec:discussion}

The Design Effect Ratio identifies survey-sensitive parameters in
Bayesian hierarchical models by comparing posterior variances with a
declared design-based target.
Our decompositions show how hierarchical shrinkage moderates the
classical design effect (\Cref{thm:fixed-effect,thm:random-effect}).
We evaluated the resulting compute-classify-correct workflow in
\slotnum{4{,}800} simulated data sets with informative sampling and
in the NSECE application.
With global unit-mean weights, the application flagged
\slotnum{1} parameter under the model-group target and
\slotnum{28} of 54 under the design-PSU target.
The poverty coefficient was flagged under every target and convention
examined.

The motivating question of which parameters need design-based
correction is a quantitative restatement of the question posed by
\citet{Pfeffermann1993}.
We follow the calibrated Bayes perspective of \citet{Little2004, Little2012},
building on \citeauthor{Rubin1984}'s (\citeyear{Rubin1984}) principle
that Bayesian procedures should be evaluated by their frequentist
properties.
The $\DER$ states the comparison target and measures each posterior
variance relative to it.
Within its model class, the conservation identity
(\Cref{cor:conservation}) shows that shrinkage moves design
sensitivity from group effects to parameters identified by
between-group information while leaving the total unchanged.
Whether an analogous identity holds more generally is open.

\paragraph{Limitations.}
The $\DER$ excludes hyperparameters by construction
(\Cref{sec:limits}); replicate-weight approaches
\citep{SavitskyGershunskaya2026} may complement it.
The closed forms assume the balanced conjugate Gaussian class
(C1)--(C3) with common within-group design effects.
Under heterogeneous design effects or in multi-covariate models they
give the qualitative ordering.
Exact magnitudes come from the general formula and the ensemble-exact
form in Section~S-B.3 of the Supporting Information.
The theory covers two-level models with scalar random
intercepts; three-level and crossed structures require additional
block algebra.

The threshold $c_0 = 1.2$ is a calibrated default, not the solution of a
decision problem.
Selective correction targets marginal calibration; joint
credible regions for the full parameter vector call for the full correction.
The $\DER$ flags parameters one at a time.
We have not formally addressed how these decisions affect subsequent
inference, a problem related to selective inference
(\citealp{BerkEtAl2013}).

The diagnostic also presupposes a valid variance target.
Under informative within-group selection with small group sizes, the
identification caution of \Cref{sec:scope}
\citep{RaoVerretHidiroglou2013, YiRaoLi2016, Slud2026} applies to the
group-level layer.
Our small-$n_j$ simulation cells show the practical consequences of
this limitation.

\paragraph{Extensions.}
The extension of the closed-form decompositions to non-conjugate
families is developed in a companion paper.
We regard hyperparameter uncertainty under informative sampling as the
most important open problem.
The $\DER$ decomposition also parallels the MSE decompositions of
small area estimation \citep{RaoMolina2015}.
These parallels could yield common design-sensitivity diagnostics for
EBLUP and fully Bayesian estimators.

The $\DER$ makes the design correction interpretable and parameter
specific.
For the NSECE under the model-group target, one parameter was
corrected instead of fifty-four; under the design-PSU target, the
diagnostic showed that the state effects themselves carry design
variability that the model understates.
The declared target specifies the variability against which we assess
each parameter and decide whether to correct its interval.

%% file: osm_body.tex

\input{sm_a}

\input{sm_b_proofs}

\input{sm_c_secondary}

\input{sm_b}

\input{sm_d}

\input{sm_e}

%% file: sm_a.tex

\section{Conditions for the DER Results}
\label{app:conditions}

We state the conditions for Theorems~1 and~2 and the conservation
identity, Corollary~1, of the main article.
We give their sources and identify where they enter the proofs in
\Cref{app:proofs}.
The \emph{design and weight conditions} (A1)--(A4) are adapted from
\citet{SavitskyToth2016} and concern the sampling design and survey
weights.
The \emph{model-structure conditions} (R1)--(R3) and (R5) specify the
hierarchical model requirements for the closed-form $\DER$ results.
The separate \emph{target-validity condition} (TV) gives the declared
variance target its frequentist interpretation.

Conditions (C1)--(C3) in Section~3.2 of the main article summarize
these requirements.
The conjugate Gaussian specification (C1) combines (R1), (R3), and
balance; it also includes (R5), which holds when the covariates are
not collinear.
The weight condition (C2) combines (A1) and (A4) with global unit-mean
scaling.
The model-group target (C3) combines (R2) and Condition~TV.

\subsection{Design and Weight Conditions (A1)--(A4)}
\label{sec:a-conditions}

The following conditions are adapted from \citet{SavitskyToth2016},
whose framework establishes consistency of pseudo-posterior inference
under informative sampling.
We specialize that framework to hierarchical models fitted with
normalized survey weights.
Savitsky and Toth label their conditions (A1)--(A6);
\Cref{sec:condition-sources} relates our conditions to theirs.

\paragraph{Condition (A1): Informative sampling model.}
The finite population $\mathcal{U}$ of size $N_{\mathrm{pop}}$ is
generated from a superpopulation model $P_0$ indexed by a true
parameter value $\bm{\phi}_0$.
The sampling mechanism selects units into the sample $S$ of size $N$
with inclusion probabilities $\pi_i > 0$ for all $i \in \mathcal{U}$,
and the normalized survey weights
$\tilde{w}_i = w_i / \bigl(\sum_{k \in S} w_k / N\bigr)$ satisfy
$\sum_{i \in S} \tilde{w}_i = N$, the global unit-mean convention
declared in Section~2.2 of the main article.

\paragraph{Condition (A2): Pseudo-likelihood regularity.}
The log pseudo-likelihood
$\ell_{\text{lik}}(\bm{\phi}) = \sum_{i \in S} \tilde{w}_i \log p(y_i \mid \bm{\phi})$
satisfies:
\begin{enumerate}[label=(\roman*)]
  \item $\ell_{\text{lik}}(\bm{\phi})$ is twice continuously
    differentiable in $\bm{\phi}$ in a neighborhood of the true
    parameter value $\bm{\phi}_0$.
  \item The weighted Fisher information
    $\mathbf{I}_N(\bm{\phi}_0) = -N^{-1} \nabla^2 \ell_{\text{lik}}(\bm{\phi}_0)$
    converges to a positive definite limit
    $\mathbf{I}(\bm{\phi}_0) \succ 0$.
  \item A uniform law of large numbers holds for
    $N^{-1} \nabla^2 \ell_{\text{lik}}(\bm{\phi})$ over compact subsets
    of the parameter space.
\end{enumerate}

\paragraph{Condition (A3): Prior support.}
The prior density $\pi(\bm{\phi})$ is continuous and strictly positive
in a neighborhood of $\bm{\phi}_0$.

\paragraph{Condition (A4): Weight boundedness.}
The normalized weights satisfy $\tilde{w}_i / N \to 0$ uniformly as
$N \to \infty$.
This prevents any single observation from dominating the
pseudo-likelihood.

\begin{remark}[Finite-sample identities and target validity]
\label{rem:algebra-and-target}
Theorems~1 and~2 of the main article are finite-sample algebraic
identities; their proofs invoke no central limit theorem
(\Cref{app:proofs}).
Condition~TV below supplies the asymptotic justification for
interpreting the variance target in repeated sampling.
Throughout, we use the stratified, centered, degrees-of-freedom-corrected
meat defined in Equation~(2.5) of the main article.
\end{remark}

\subsection{Model-Structure Conditions (R1)--(R3), (R5)}
\label{sec:r-conditions}

The following conditions are required for the exact closed-form $\DER$
formulas in Theorems~1 and~2 of the main article.
We retain the condition labels used in the proof annotations.

\paragraph{Condition (R1): Proper random-effect prior.}
$\tau = 1/\sigma_\theta^2 > 0$ (the random-effect prior precision is
finite and positive) and $a_j = \widetilde{n}_j / \sigma_e^2 > 0$ for
all $j$ (each group has positive weighted sample size).
Consequently, $B_j = a_j/(a_j + \tau) \in (0,1)$ and
$\Sigma_B = \sum_k B_k > 0$.

\paragraph{Condition (R2): Group--PSU nesting.}
No sandwich aggregation unit crosses model-group boundaries: each
design PSU either coincides with a model group or is nested within
one.
Formally, for all PSUs $c$ and groups $j \neq k$: if PSU $c$ contains
observations from group $j$, then PSU $c$ contains no observations
from group $k$.
This ensures $[\Jclust]_{\theta_j, \theta_k} \approx 0$ for
$j \neq k$.

\paragraph{Condition (R3): Flat prior on fixed effects.}
$\mathbf{H}_{\text{prior}}^{\beta} = -\nabla^2 \log \pi(\bbeta) \approx 0$.
The prior on $\bbeta$ is non-informative (flat or very weakly
informative), so that the observed information for $\bbeta$ is
determined by the likelihood alone.
This is the diffuse-$\bbeta$ convention built into the bread
definition, eq.~(2.4) of the main article.

\paragraph{Condition (R5): Non-degeneracy of Schur complement.}
The Schur complement $\mathbf{S}_\beta = \sum_j \mathbf{C}_j$ is
positive definite, where
$\mathbf{C}_j = \mathbf{W}_j + \frac{(1-B_j)}{a_j}\mathbf{b}_j\mathbf{b}_j^\t$
(notation as in \Cref{sec:theorem1}).
This ensures identifiability of all fixed-effect parameters.

\begin{remark}[Design effects for group score totals]
\label{rem:group-score-variance}
In the conjugate Gaussian setting, the scalar within-group relation
$[\Jclust]_{\text{within } j} = \DEFF_j \cdot
[\mathbf{H}_{\text{lik}}]_{\text{within } j}$ follows from the
definition of the group design effect.
With model groups as the aggregation unit, we define $\DEFF_j$ as the
ratio of the design variance of a group score total to its
corresponding data-information contribution.
The scalar form therefore specifies the theorem's parameterization;
it is not an additional design assumption.
When covariates within a group have different design effects, the
closed forms interpret $\DEFF_j$ as the score-weighted average within
the group.
Algorithm~1 of the main article computes the $\DER$ exactly from the
full matrices $\Hobs$ and $\Jclust$ without requiring scalar structure.
The corresponding non-conjugate theory is developed in a companion paper.
\end{remark}

\subsection{Condition TV: Target Validity}
\label{sec:tv-condition}

The decomposition theorems need no asymptotics.
Asymptotic theory instead justifies the variance target and the
frequentist coverage of the corrected posterior.
We state those requirements as Condition~TV, relying on published
pseudo-posterior theory.

\paragraph{Condition (TV): Target validity.}
At the declared aggregation unit, the stratified, centered,
DF-corrected cluster covariance $J_{\mathrm{c}}$ in Equation~(2.5)
of the main article consistently estimates the design covariance of
the weighted score totals, and the corrected posterior (the
pseudo-posterior mapped through the
$\mathbf{R}_2\mathbf{R}_1^{-1}$ transformation described in
Section~2.4 of the main article) attains asymptotically nominal
frequentist coverage.
Sufficient conditions are given by \citet{SavitskyToth2016} and
\citet{WilliamsSavitsky2021}.
Theorem~3.1 of \citet{SavitskyToth2016} establishes contraction of
the pseudo-posterior on the true generating distribution.
Theorems~2 and~3 of \citet{WilliamsSavitsky2021} give the sandwich
form of the asymptotic covariance of the weighted score totals and
the asymptotic Gaussian form of the pseudo-posterior that the
correction rescales.
Their empirical evaluations also document near-nominal coverage of
the corrected intervals across a range of sample designs
\citep{WilliamsSavitsky2021}.

As explained in Remark~\ref{rem:algebra-and-target}, Condition~TV
justifies the design-based interpretation of $\Vtarget$ and the
$\DER$; it is not used in the algebraic proofs in \Cref{app:proofs}.

\subsection{Sources and Use of the Conditions}
\label{sec:condition-sources}

\Cref{tab:condition-sources} maps each design/weight condition to its
counterpart in \citet{SavitskyToth2016} (labels ST-A1 to ST-A6 refer
to their conditions (A1)--(A6)) and records where each condition
is used in the proofs in \Cref{app:proofs}.
Condition (A1) enters the proofs through the positive weights required
by (R1).
Conditions (A2)--(A4) support the pseudo-posterior and the asymptotic
interpretation of the target through Condition~TV.
Each proof in \Cref{app:proofs} identifies the steps that require
(R1)--(R3) and (R5).

\begin{table}[htbp]
  \centering
  \caption{Sources of the design/weight conditions and Condition~TV,
    and their use in this supplement.
    ST = \citet{SavitskyToth2016}; WS = \citet{WilliamsSavitsky2021}.
    ST label their conditions (A1)--(A6); WS label theirs (A1)--(A7).}
  \label{tab:condition-sources}
  \small
  \begin{tabularx}{\textwidth}{@{}>{\raggedright\arraybackslash}p{1.4cm} >{\raggedright\arraybackslash}p{4.2cm} >{\raggedright\arraybackslash}p{4.6cm} >{\raggedright\arraybackslash}X@{}}
    \toprule
    \textbf{Cond.} & \textbf{Published counterpart}
      & \textbf{Where used (this supplement)}
      & \textbf{Note} \\
    \midrule
    (A1) & ST-A4 (non-zero inclusion probabilities), with the
      sampling framework of ST Section~2
      & \Cref{sec:theorem1} Step~1 and \Cref{sec:theorem2} Step~1,
        through (R1): $\tilde w_{ij} > 0$ gives $a_j > 0$, so
        $\mathbf{D}_H$ is invertible
      & WS counterpart: WS-A5. \\
    \addlinespace
    (A2) & ST-A1--ST-A2 (entropy and sieve regularity of the model
      space)
      & Existence and finiteness of $\Hobs$ and of the score totals
        (setup of \Cref{sec:theorem1,sec:theorem2}); not used in
        individual algebraic steps
      & Interpretive mapping: ST state regularity nonparametrically;
        the parametric analogue is WS-A1--WS-A2. \\
    \addlinespace
    (A3) & ST-A3 (prior mass covering the truth)
      & Properness of the pseudo-posterior (setup); not used in
        algebraic steps
      & \\
    \addlinespace
    (A4) & ST-A4 (inclusion probabilities bounded away from zero
      bound the normalized weights), with ST-A6 (growing sample)
      & No algebraic step; supports the asymptotics required by
        Condition~TV
      & WS counterparts: WS-A5, WS-A7. \\
    \addlinespace
    (TV) & Theorem~3.1 of ST (contraction);
      Theorems~2--3 of WS (sandwich covariance of the weighted score;
      asymptotic form of the pseudo-posterior)
      & Gives $\Vtarget$, hence the $\DER$, its design-based meaning
        (Section~2.3 of the main article); used at no proof step
      & \\
    \bottomrule
  \end{tabularx}
\end{table}

\paragraph{Where the R-conditions are used.}
For convenience, the step-level map assembled in \Cref{app:proofs} is:
(R1) at \Cref{sec:theorem1} Steps~1, 4, 5 and \Cref{sec:theorem2}
Steps~1, 4--6;
(R2) at the block-structure setup and Step~2 of \Cref{sec:theorem1}
and Step~2 of \Cref{sec:theorem2};
(R3) at the block-structure setup of \Cref{sec:theorem1} and Step~1 of
\Cref{sec:theorem2};
(R5) at \Cref{sec:theorem1} Steps~1 and~3.
Balance (the (C1) hypothesis) is used only at
\Cref{sec:theorem1} Step~5 (upper bound) and \Cref{sec:theorem2}
Step~7; the unbalanced per-group form is treated in
\Cref{app:unbalanced}.

%% file: sm_b_proofs.tex

\section{Proofs of the Decomposition Theorems}
\label{app:proofs}

This section proves Theorems~1 and~2 of the main article.
Each proof lists the conditions from \Cref{app:conditions} used in
each step. Inline annotations such as ``\textup{[uses (R1)]}'' identify
where a condition is used.
The proofs use finite-sample algebra, with no asymptotic arguments.
The design conditions
(A1)--(A4) enter only through the positivity of the weights (via (R1))
and through the interpretation of the target (Condition~TV); see the
condition-source table in \Cref{sec:condition-sources}.

\paragraph{Notation and variance target.}
Throughout this section, $\Hobs$ denotes the penalized (posterior)
Hessian $H$ of eq.~(2.4) of the main article---the curvature of the
log-pseudo-likelihood plus the random-effect prior precision
$\tau = 1/\sigma^2_\theta$---evaluated at the plug-in point.
$\Jclust$ denotes the meat at the model-group aggregation unit, and
$\Vsand = \Hobs^{-1} \Jclust \Hobs^{-1}$ is the declared target
$\Vtarget$ of eq.~(2.3) of the main article.
The results describe the \emph{population} counterpart of
the diagnostic (Remark~2 of the main article): $\Jclust$ is
parameterized as the design covariance of the weighted score totals,
with the within-group design effects $\DEFF_j$ entering as defined in
Section~2.5 of the main article and \Cref{sec:r-conditions}
(Remark~\ref{rem:group-score-variance}).

\subsection{Proof of Theorem~1: Fixed Effects}
\label{sec:theorem1}

\subsubsection{Statement}

\textbf{Theorem 1} (DER for fixed effects; restating Theorem~1 in the main article)\textbf{.}
\textit{Consider the normal hierarchical model with covariates:}
\begin{equation*}
  y_{ij} = \mathbf{x}_{ij}^\t \bbeta + \theta_j + \varepsilon_{ij}, \quad
  \varepsilon_{ij} \sim N(0, \sigma_e^2), \quad
  \theta_j \sim N(0, \sigma_\theta^2)
\end{equation*}
\textit{where $\bbeta \in \bR^p$, flat prior on $\bbeta$, $j = 1, \ldots, J$.}

\textit{Under conditions (R1)--(R3) and (R5)
(\Cref{app:conditions}), the $\DER$ for $\beta_k$ is:}
\begin{equation*}
  \DER_{\beta_k} = \frac{[\mathbf{S}_\beta^{-1} \mathbf{M} \, \mathbf{S}_\beta^{-1}]_{kk}}{[\mathbf{S}_\beta^{-1}]_{kk}}
\end{equation*}
\textit{where $\mathbf{S}_\beta = \sum_{j=1}^J \mathbf{C}_j$ is the Schur complement (bread), $\mathbf{M} = \sum_{j=1}^J \DEFF_j \, \mathbf{C}_j^{(d)}$ is the effective meat, and}
\begin{equation*}
  \mathbf{C}_j = \mathbf{W}_j + \frac{(1-B_j)}{a_j} \mathbf{b}_j \mathbf{b}_j^\t, \qquad
  \mathbf{C}_j^{(d)} = \mathbf{W}_j + \frac{(1-B_j)^2}{a_j} \mathbf{b}_j \mathbf{b}_j^\t.
\end{equation*}

\textit{When $\DEFF_j = \DEFF$ for all $j$:}
\begin{equation*}
  \DER_{\beta_k} = \DEFF \cdot (1 - R_k), \qquad R_k \in [0, B]
\end{equation*}
\textit{Here $R_k$ is the shrinkage attenuation factor, with}
\begin{align*}
  R_k
  &= \frac{[\mathbf{S}_\beta^{-1} \bDelta \, \mathbf{S}_\beta^{-1}]_{kk}}
          {[\mathbf{S}_\beta^{-1}]_{kk}}, \\
  \bDelta
  &= \sum_j \frac{\tau}{(a_j + \tau)^2}
     \mathbf{b}_j \mathbf{b}_j^\t.
\end{align*}

\smallskip
\noindent
The Schur-complement sandwich
$\mathbf{S}_\beta^{-1} \mathbf{M} \, \mathbf{S}_\beta^{-1}$ is the
marginal fixed-effect target $V_{\bbeta}$ of eq.~(2.6) of the main article, in the population parameterization.

\paragraph{Conditions used.}
(R1)---Steps~1, 4, and~5 (invertibility of $\mathbf{D}_H$;
$\bDelta \succeq 0$);
(R2)---block-structure setup and Step~2 (no cross-group blocks in the
meat);
(R3)---block-structure setup (no prior curvature in the
$\bbeta$ block);
(R5)---Steps~1 and~3 ($\mathbf{S}_\beta$ invertible with positive
diagonal).
The balanced-case hypothesis ((C1)) is used only for the upper bound
in Step~5 and the special cases in \Cref{sec:theorem1-special}.
The common-$\DEFF$ hypothesis is used only in Step~4, which gives
$\DEFF(1 - R_k)$.
Theorem~1 of the main article states the common-$\DEFF$ hypothesis
separately: balance makes the shrinkage common, but not the design
effects.
Here we prove the general effective-meat form; the main article
states its common-design-effect special case.
No algebraic step directly uses Conditions (A1)--(A4).
Condition (A1) guarantees the strictly positive weights underlying
(R1), and (A2)--(A4) support the pseudo-posterior framework and Condition~TV
(\Cref{sec:condition-sources}).

\subsubsection{Notation}

\begin{center}
\small
\begin{tabularx}{\textwidth}{@{}lXl@{}}
\hline
Symbol & Definition & Dim.\ \\
\hline
$\mathbf{A}_j = \frac{1}{\sigma_e^2} \sum_{i \in j} \tilde{w}_{ij} \mathbf{x}_{ij} \mathbf{x}_{ij}^\t$ &
  Weighted Fisher information from group $j$ for $\bbeta$ & $p \times p$ \\[6pt]
$\mathbf{b}_j = \frac{1}{\sigma_e^2} \sum_{i \in j} \tilde{w}_{ij} \mathbf{x}_{ij}$ &
  Cross-term vector ($\bbeta$--$\theta_j$ coupling) & $p \times 1$ \\[6pt]
$\bar{\mathbf{x}}_j = \mathbf{b}_j / a_j$ &
  Weighted group mean covariate vector & $p \times 1$ \\[6pt]
$\mathbf{W}_j = \frac{1}{\sigma_e^2} \sum_{i \in j} \tilde{w}_{ij} (\mathbf{x}_{ij} - \bar{\mathbf{x}}_j)(\mathbf{x}_{ij} - \bar{\mathbf{x}}_j)^\t$ &
  Within-group covariate variation & $p \times p$ \\[6pt]
\hline
\end{tabularx}
\end{center}

\noindent
We have $\mathbf{A}_j = \mathbf{W}_j + a_j \bar{\mathbf{x}}_j \bar{\mathbf{x}}_j^\t$, separating total variation into within- and between-group components for group $j$.

\subsubsection{Block structure}

\noindent\textit{Notational convention.}  Throughout this proof, $\Hobs$
denotes the full posterior Hessian $\Hobs^{\mathrm{post}}$ (including
prior precision terms $\tau = 1/\sigma^2_\theta$ in the
random-effect block), as defined in eq.~(2.4) of the main article.

\smallskip
With parameter vector $\bm{\phi} = (\bbeta^\t, \theta_1, \ldots, \theta_J)^\t$:
\begin{equation*}
  \Hobs = \begin{pmatrix} \mathbf{A} & \mathbf{G} \\ \mathbf{G}^\t & \mathbf{D}_H \end{pmatrix}, \qquad
  \Jclust = \begin{pmatrix} \sum_j \DEFF_j \mathbf{A}_j & \DEFF_1 \mathbf{b}_1 & \cdots \\ \DEFF_1 \mathbf{b}_1^\t & d_1 & \cdots \\ \vdots & \vdots & \ddots \end{pmatrix}
\end{equation*}
where $\mathbf{A} = \sum_j \mathbf{A}_j$ ($p \times p$), $\mathbf{G} = (\mathbf{b}_1, \ldots, \mathbf{b}_J)$ ($p \times J$), and $\mathbf{D}_H = \diag(a_j + \tau)$ ($J \times J$).

The $\mathbf{A}$-block of $\Hobs$ has no prior curvature because
the prior on $\bbeta$ is flat \textup{[uses (R3)]}, while the
random-effect block $\mathbf{D}_H$ adds the prior precision $\tau$ to
each diagonal entry, with $a_j, \tau > 0$ \textup{[uses (R1)]}.
In $\Jclust$, the absence of $\theta_j$--$\theta_k$ blocks for
$j \neq k$ reflects the group--PSU nesting \textup{[uses (R2)]}, and
the scalar within-group design-effect parameterization
($d_j = \DEFF_j\, a_j$ and the $\DEFF_j$-scaled coupling blocks) is
the definition of $\DEFF_j$ for the conjugate model in
Remark~\ref{rem:group-score-variance} of \Cref{app:conditions}.

\subsubsection{Proof}

\paragraph{Step 1: Inverting $\Hobs$.}
The Schur complement of $\mathbf{D}_H$ in $\Hobs$ gives the marginal precision for $\bbeta$
\textup{[uses (R1): $a_j + \tau > 0$, so $\mathbf{D}_H$ is invertible]}:
\begin{equation*}
  \mathbf{S}_\beta = \mathbf{A} - \mathbf{G} \mathbf{D}_H^{-1} \mathbf{G}^\t
  = \sum_j \mathbf{A}_j - \sum_j \frac{\mathbf{b}_j \mathbf{b}_j^\t}{a_j + \tau}
  = \sum_j \mathbf{C}_j
\end{equation*}

\textbf{Decomposing $\mathbf{C}_j$}:
\begin{align*}
  \mathbf{C}_j &= \mathbf{A}_j - \frac{\mathbf{b}_j \mathbf{b}_j^\t}{a_j + \tau}
  = \pr{\mathbf{W}_j + \frac{\mathbf{b}_j \mathbf{b}_j^\t}{a_j}} - \frac{\mathbf{b}_j \mathbf{b}_j^\t}{a_j + \tau} \\
  &= \mathbf{W}_j + \mathbf{b}_j \mathbf{b}_j^\t \pr{\frac{1}{a_j} - \frac{1}{a_j + \tau}}
  = \mathbf{W}_j + \frac{\tau}{a_j(a_j + \tau)} \mathbf{b}_j \mathbf{b}_j^\t \\
  &= \mathbf{W}_j + \frac{(1-B_j)}{a_j} \mathbf{b}_j \mathbf{b}_j^\t
\end{align*}
using $\tau / [a_j(a_j + \tau)] = (1 - B_j)/a_j$.

By the standard block inversion formula
\textup{[uses (R5): $\mathbf{S}_\beta \succ 0$, so
$\mathbf{S}_\beta^{-1}$ exists]}:
\begin{equation*}
  [\Hobs^{-1}]_{\beta\beta} = \mathbf{S}_\beta^{-1}
\end{equation*}

The cross-block is $[\Hobs^{-1}]_{\beta, \theta_j} = -\mathbf{S}_\beta^{-1} \mathbf{b}_j / (a_j + \tau) = -(B_j / a_j) \mathbf{S}_\beta^{-1} \mathbf{b}_j$.

\paragraph{Step 2: Effective meat $\mathbf{M}$.}
\textup{[This step uses the block structure of $\Jclust$: (R2) for the
absence of cross-group blocks, and the within-group design-effect
parameterization of Remark~\ref{rem:group-score-variance}.]}
Write the first $p$ rows of $\Hobs^{-1}$ as $\mathbf{E} = (\mathbf{P}, \mathbf{Q}_1, \ldots, \mathbf{Q}_J)$, where $\mathbf{P} = \mathbf{S}_\beta^{-1}$ and $\mathbf{Q}_j = -(B_j/a_j) \mathbf{P} \mathbf{b}_j$.
Then:
\begin{equation*}
  [\Vsand]_{\beta\beta} = \mathbf{E} \, \Jclust \, \mathbf{E}^\t
\end{equation*}

Expanding using the block structure of $\Jclust$:
\begin{align*}
  [\Vsand]_{\beta\beta}
  &= \underbrace{\mathbf{P} \pr{\textstyle\sum_j \DEFF_j \mathbf{A}_j} \mathbf{P}}_{\text{Term 1}}
   + \underbrace{2 \sum_j \mathbf{P}(\DEFF_j \mathbf{b}_j) \mathbf{Q}_j^\t}_{\text{Terms 2+3}}
   + \underbrace{\sum_j d_j \mathbf{Q}_j \mathbf{Q}_j^\t}_{\text{Term 4}}
\end{align*}

Computing each term:
\begin{itemize}
  \item \textbf{Term 1}: $\mathbf{P} \pr{\sum_j \DEFF_j \mathbf{A}_j} \mathbf{P}$
  \item \textbf{Terms 2+3}: $-2 \sum_j \frac{\DEFF_j B_j}{a_j} \mathbf{P} \mathbf{b}_j \mathbf{b}_j^\t \mathbf{P}$
  \item \textbf{Term 4}: $\sum_j \frac{\DEFF_j B_j^2}{a_j} \mathbf{P} \mathbf{b}_j \mathbf{b}_j^\t \mathbf{P}$
\end{itemize}

Combining and factoring out $\mathbf{P} = \mathbf{S}_\beta^{-1}$:
\begin{equation*}
  [\Vsand]_{\beta\beta} = \mathbf{S}_\beta^{-1} \bk{\sum_j \DEFF_j \mathbf{A}_j - \sum_j \frac{\DEFF_j(2B_j - B_j^2)}{a_j} \mathbf{b}_j \mathbf{b}_j^\t} \mathbf{S}_\beta^{-1}
\end{equation*}

Using $2B_j - B_j^2 = 1 - (1 - B_j)^2$ and $\mathbf{A}_j = \mathbf{W}_j + \mathbf{b}_j \mathbf{b}_j^\t / a_j$:
\begin{equation*}
  \mathbf{M} = \sum_j \DEFF_j \mathbf{W}_j + \sum_j \DEFF_j \frac{(1 - B_j)^2}{a_j} \mathbf{b}_j \mathbf{b}_j^\t = \sum_j \DEFF_j \, \mathbf{C}_j^{(d)}
\end{equation*}

\paragraph{Step 3: The DER.}
The $\DER$ for $\beta_k$ is
\textup{[uses (R5): $[\mathbf{S}_\beta^{-1}]_{kk} > 0$]}:
\begin{equation*}
  \DER_{\beta_k} = \frac{[\mathbf{S}_\beta^{-1} \mathbf{M} \, \mathbf{S}_\beta^{-1}]_{kk}}{[\mathbf{S}_\beta^{-1}]_{kk}}
\end{equation*}

The $\DER$ reflects how the bread and effective meat weight between-group information:
\begin{itemize}
  \item Bread: $\mathbf{S}_\beta = \sum_j \mathbf{W}_j + \sum_j \frac{(1-B_j)}{a_j} \mathbf{b}_j \mathbf{b}_j^\t$ --- factor $(1-B_j)$ on between-group terms
  \item Effective meat: $\mathbf{M} = \sum_j \DEFF_j \mathbf{W}_j + \sum_j \DEFF_j \frac{(1-B_j)^2}{a_j} \mathbf{b}_j \mathbf{b}_j^\t$ --- factor $(1-B_j)^2$ on between-group terms
\end{itemize}
The extra $(1-B_j)$ factor in the meat attenuates the design sensitivity of between-group information.

\paragraph{Step 4: Common design effects.}
When $\DEFF_j = \DEFF$ for all $j$ \textup{[hypothesis of the
common-$\DEFF$ clause]}:
\begin{equation*}
  \mathbf{M} = \DEFF \bk{\mathbf{S}_\beta - \bDelta}
\end{equation*}
where $\bDelta = \sum_j \frac{\tau}{(a_j + \tau)^2} \mathbf{b}_j \mathbf{b}_j^\t = \sum_j \frac{B_j(1-B_j)}{a_j} \mathbf{b}_j \mathbf{b}_j^\t \succeq 0$
\textup{[uses (R1): $\tau > 0$ and $B_j \in (0,1)$]}.

Therefore:
\begin{equation*}
  [\Vsand]_{\beta\beta} = \DEFF \cdot \pr{\mathbf{S}_\beta^{-1} - \mathbf{S}_\beta^{-1} \bDelta \, \mathbf{S}_\beta^{-1}}
\end{equation*}
and
\begin{equation*}
  \DER_{\beta_k} = \DEFF \cdot \pr{1 - \frac{[\mathbf{S}_\beta^{-1} \bDelta \, \mathbf{S}_\beta^{-1}]_{kk}}{[\mathbf{S}_\beta^{-1}]_{kk}}} = \DEFF \cdot (1 - R_k). \qquad \square
\end{equation*}

\paragraph{Step 5: Bounds on $R_k$.}
\textbf{Lower bound}: Since $\bDelta \succeq 0$ \textup{[uses (R1), as
in Step~4]}, we have $[\mathbf{S}_\beta^{-1} \bDelta \, \mathbf{S}_\beta^{-1}]_{kk} \geq 0$, so $R_k \geq 0$.

\textbf{Upper bound}: In the balanced case \textup{[uses the
balanced-case hypothesis, (C1)]},
$\bDelta = J \cdot B(1-B)/a \cdot \mathbf{b}\mathbf{b}^\t$ and $\mathbf{S}_\beta \succeq J(1-B)/a \cdot \mathbf{b}\mathbf{b}^\t$ (from the between-group component).
The ratio of the between-group parts is $\bDelta / \mathbf{S}_\beta|_{\text{between}} = B$.
Since $\mathbf{W}_j$ contributes additional positive definite terms to $\mathbf{S}_\beta$ but not to $\bDelta$ (which only involves $\mathbf{b}_j\mathbf{b}_j^\t$), we have $R_k \leq B$.

Therefore: $\DEFF(1-B) \leq \DER_{\beta_k} \leq \DEFF$. $\square$

\subsubsection{Special Cases}
\label{sec:theorem1-special}

\textbf{Pure within-group covariate} ($\mathbf{b}_j = \mathbf{0}$ for all $j$):
Then $\bDelta = 0$, so $R_k = 0$ and $\DER_{\beta_k} = \DEFF$.

\textbf{Pure between-group/intercept} (balanced, centered covariates):
$\mathbf{W}_j = 0$ in the $k$-th direction, so $R_k = B$ and $\DER_{\beta_0} = \DEFF(1-B)$.

\textbf{Balanced $p = 2$ slope (strong balanced condition).}
For the slope coefficient $\beta_1$ in the model $y_{ij} = \beta_0 + \beta_1 x_{ij} + \theta_j + \varepsilon_{ij}$, we have $\DER_{\beta_1} = \DEFF$ provided the following strong balanced condition holds:

\begin{remark}[Strong balanced condition]
All groups share identical covariate summaries: $\mathbf{A}_j = \mathbf{A}_0$, $\mathbf{b}_j = \mathbf{b}_0$, $\mathbf{W}_j = \mathbf{W}_0$ for all $j = 1, \ldots, J$.
In particular, $a_j = a$, $B_j = B$, and $\bar{x}_j = \bar{x}$ for all $j$.
\end{remark}

\begin{proof}
Under the strong balanced condition with $p = 2$
\textup{[uses the strong balanced condition]}, write
$\mathbf{b}_j = \mathbf{b}_0 = (a,\; b)^\t$ where $b = \sum_{i \in j} \tilde{w}_{ij} x_{ij} / \sigma_e^2 = a \bar{x}$.
The rank-1 matrix $\mathbf{b}_j \mathbf{b}_j^\t$ is:
\begin{equation*}
  \mathbf{b}_0 \mathbf{b}_0^\t = a^2 \begin{pmatrix} 1 & \bar{x} \\ \bar{x} & \bar{x}^2 \end{pmatrix}
\end{equation*}

This matrix has rank~1, with its column space spanned by $(1, \bar{x})^\t$---the intercept direction.
The slope direction $\mathbf{e}_1 - \bar{x}\, \mathbf{e}_0$ (after orthogonalization against the intercept) lies in the null space of $\mathbf{b}_0 \mathbf{b}_0^\t$:
\begin{equation*}
  \mathbf{b}_0 \mathbf{b}_0^\t \begin{pmatrix} -\bar{x} \\ 1 \end{pmatrix}
  = a^2 \begin{pmatrix} 1 & \bar{x} \\ \bar{x} & \bar{x}^2 \end{pmatrix} \begin{pmatrix} -\bar{x} \\ 1 \end{pmatrix}
  = a^2 \begin{pmatrix} 0 \\ 0 \end{pmatrix}
\end{equation*}

Since all groups contribute identically,
\begin{align*}
  \bDelta
  &= J \cdot \frac{\tau}{(a+\tau)^2}
     \mathbf{b}_0 \mathbf{b}_0^\t, \\
  \mathbf{S}_\beta
  &= J\pr{\mathbf{W}_0 + \frac{(1-B)}{a}
     \mathbf{b}_0 \mathbf{b}_0^\t}.
\end{align*}
The correction $\mathbf{S}_\beta^{-1} \bDelta\, \mathbf{S}_\beta^{-1}$ is a rank-1 matrix in the intercept direction.
Its $(1,1)$ entry (the slope--slope element) vanishes:
\begin{equation*}
  [\mathbf{S}_\beta^{-1} \bDelta\, \mathbf{S}_\beta^{-1}]_{11} = 0
\end{equation*}

Therefore $R_1 = 0$ and $\DER_{\beta_1} = \DEFF$, regardless of $\bar{x}$ (i.e., regardless of the between-group fraction $\rho = b^2/(ac)$ of covariate variation).
The slope is invariant because the Schur complement projection automatically separates within-group information (which determines slopes) from between-group information (which is confounded with $\theta_j$).
\end{proof}

\textbf{Counterexample: unequal group means.}
If $\bar{x}_j$ varies across groups (violating the strong balanced condition), then $\sum_j \mathbf{b}_j \mathbf{b}_j^\t$ is generally rank~2, and the slope direction is no longer in the null space of $\bDelta$.
In this case, $R_1 > 0$ and $\DER_{\beta_1} < \DEFF$.

For example, with $J = 2$ groups where $\bar{x}_1 = 0$ and $\bar{x}_2 = 1$ (maximal between-group variation in the covariate), balanced otherwise:
\begin{equation*}
  \bDelta = \frac{\tau}{(a+\tau)^2}\bk{\mathbf{b}_1 \mathbf{b}_1^\t + \mathbf{b}_2 \mathbf{b}_2^\t}
\end{equation*}
where $\mathbf{b}_1 = a(1, 0)^\t$ and $\mathbf{b}_2 = a(1, 1)^\t$.
Then $\bDelta$ has rank~2 (full rank for $p = 2$), and neither the slope nor the intercept direction lies in its null space.
Consequently, $R_1 > 0$ and $\DER_{\beta_1} < \DEFF$.

\subsection{Proof of Theorem~2: Random Effects}
\label{sec:theorem2}

\subsubsection{Statement}

\textbf{Theorem 2} (DER for random effects; restating Theorem~2 in the main article)\textbf{.}
\textit{In the normal--normal conjugate model with parameter vector $(\mu, \theta_1, \ldots, \theta_J)$, under conditions (R1)--(R3) (\Cref{app:conditions}), the $\DER$ for the random effect $\theta_j$ is:}
\begin{equation*}
  \DER_j = \frac{B_j \bk{\DEFF_j (1 - B_j) \Sigma_{-j}^2 + B_j \sum_{k \neq j} \DEFF_k B_k (1 - B_k)}}{\Sigma_B \cdot \Gamma_j}
\end{equation*}
\textit{where $B_k = a_k/(a_k + \tau)$, $\Sigma_B = \sum_k B_k$, $\Sigma_{-j} = \Sigma_B - B_j$, and $\Gamma_j = B_j + \Sigma_{-j}(1 - B_j)$.}

\textit{The formula decomposes as $\DER_j = B_j \cdot \DEFF_j \cdot \kappa_j(J)$, where $\kappa_j(J)$ is the exact finite-$J$ coupling factor recorded in \eqref{eq:unbalanced-kappa} of \Cref{app:unbalanced}, satisfying $\kappa_j(J) \to 1$ as $J \to \infty$.  Note $\kappa_j(J)$ depends on the whole ensemble $\{(B_k, \DEFF_k)\}_{k=1}^{J}$, not on $(B_j, \DEFF_j, J)$ alone, unless the ensemble is homogeneous.}

\textit{In the balanced, common-$\DEFF$ case ($B_k = B$, $\DEFF_k = \DEFF$ for all $k$---the hypotheses of Theorem~2 in the main article):}
\begin{equation*}
  \DER_j = \DEFF \cdot B \cdot \frac{(J-1)(1-B)}{J(1-B) + B}
\end{equation*}

\paragraph{Conditions used.}
(R1)---Steps~1, 4, 5, and~6 (invertibility of $\mathbf{D}$;
$B_k \in (0,1)$; $\Sigma_B, \Gamma_j > 0$);
(R2)---Step~2 (rank-one structure of the meat);
(R3)---Step~1 (flat prior on $\mu$: no prior curvature in the $\mu$
entry).
Balance and the common-$\DEFF$ hypothesis are used only at Step~7
(balance from (C1) gives common $B$; common $\DEFF$ is the separate
hypothesis of Theorem~2 in the main article).
Steps~1--6 hold for arbitrary group-specific $(B_j, \DEFF_j, n_j)$,
as detailed in \Cref{app:unbalanced}.
Conditions (A1)--(A4): as in \Cref{sec:theorem1}---(A1) underlies
$a_j > 0$ via (R1); no algebraic step uses them directly.

\subsubsection{Proof}

\noindent\textit{Notational convention.}  As in the Theorem~1 proof,
$\Hobs$ denotes the full posterior Hessian $\Hobs^{\mathrm{post}}$
including prior precision $\tau = 1/\sigma^2_\theta$.

\smallskip
\paragraph{Step 1: Inverting $\Hobs$.}
$\Hobs$ has the bordered diagonal (arrow) structure:
\begin{equation*}
  \mathbf{H} = \begin{pmatrix} c & \mathbf{b}^\t \\ \mathbf{b} & \mathbf{D} \end{pmatrix}
\end{equation*}
where $c = \sum_j a_j$, $\mathbf{b} = (a_1, \ldots, a_J)^\t$, and $\mathbf{D} = \diag(a_j + \tau)$.
The $(\mu, \mu)$ entry $c$ carries no prior term because the prior on
$\mu$ is flat \textup{[uses (R3)]}.

The Schur complement of $\mathbf{D}$
\textup{[uses (R1): $a_k + \tau > 0$, so $\mathbf{D}$ is invertible,
and $\Sigma_B > 0$]}:
\begin{equation*}
  S = c - \mathbf{b}^\t \mathbf{D}^{-1} \mathbf{b}
  = \sum_k a_k - \sum_k \frac{a_k^2}{a_k + \tau}
  = \sum_k a_k(1 - B_k)
  = \tau \sum_k B_k
  = \tau \Sigma_B
\end{equation*}
using $[\mathbf{D}^{-1}\mathbf{b}]_k = a_k/(a_k + \tau) = B_k$.

By standard block inversion:

\begin{center}
\begin{tabular}{ll}
\hline
Entry & Formula \\
\hline
$[\mathbf{H}^{-1}]_{0,0}$ (for $\mu$) & $1/(\tau \Sigma_B)$ \\[4pt]
$[\mathbf{H}^{-1}]_{j+1,0}$ (cross $\theta_j$--$\mu$) & $-B_j/(\tau \Sigma_B)$ \\[4pt]
$[\mathbf{H}^{-1}]_{j+1,j+1}$ (diagonal for $\theta_j$) & $(1-B_j)/\tau + B_j^2/(\tau \Sigma_B) = \Gamma_j / (\tau \Sigma_B)$ \\[4pt]
$[\mathbf{H}^{-1}]_{j+1,k+1}$ ($j \neq k$) & $B_j B_k / (\tau \Sigma_B)$ \\[4pt]
\hline
\end{tabular}
\end{center}

\noindent
where $\Gamma_j = B_j + \Sigma_{-j}(1 - B_j)$.

In the balanced case ($B_k = B$ and $a_k = a$ for all $k$), $\Sigma_{-j} = (J-1)B$ and $\Gamma_j$ reduces to $\Gamma = B + (J-1)B(1-B) = B[1 + (J-1)(1-B)]$ for all $j$.
The denominator factor $\Sigma_B \cdot \Gamma_j$ in the $\DER$ formula then becomes $JB^2[1 + (J-1)(1-B)]$, leading to the balanced simplification $\DER_j = \DEFF \cdot B \cdot (J-1)(1-B)/[J(1-B) + B]$.

\paragraph{Step 2: Rank-1 decomposition of $\Jclust$.}
The meat matrix decomposes as a sum of rank-1 matrices
\textup{[uses (R2): each group's score total loads only on its own
$\theta_k$ entry and the $\mu$ entry, with no cross-group blocks]}:
\begin{equation*}
  \mathbf{J} = \sum_{k=1}^J d_k \, \mathbf{f}_k \mathbf{f}_k^\t
\end{equation*}
where $\mathbf{f}_k = \mathbf{e}_0 + \mathbf{e}_k$ (standard basis vectors in $\bR^{1+J}$).
This follows from the block structure: each group $k$ contributes $d_k$ to $[\mathbf{J}]_{0,0}$, $[\mathbf{J}]_{0,k+1}$, $[\mathbf{J}]_{k+1,0}$, and $[\mathbf{J}]_{k+1,k+1}$.

Therefore:
\begin{equation*}
  [\Vsand]_{j+1,j+1} = \mathbf{h}_j^\t \mathbf{J} \, \mathbf{h}_j = \sum_{k=1}^J d_k (g_j[0] + g_j[k])^2
\end{equation*}
where $\mathbf{h}_j$ is the $(j+1)$-th row of $\mathbf{H}^{-1}$ and $g_j[m] = \tau \cdot [\mathbf{H}^{-1}]_{j+1, m+1}$.

\paragraph{Step 3: Computing $g_j[0] + g_j[k]$.}
Substitute the entries from Step~1; no additional condition is needed.

\textbf{Case $k = j$}:
\begin{align*}
  g_j[0] + g_j[j] &= -\frac{B_j}{\Sigma_B} + (1 - B_j) + \frac{B_j^2}{\Sigma_B}
  = (1 - B_j)\pr{1 - \frac{B_j}{\Sigma_B}}
  = \frac{(1 - B_j)\Sigma_{-j}}{\Sigma_B}
\end{align*}

\textbf{Case $k \neq j$}:
\begin{equation*}
  g_j[0] + g_j[k] = -\frac{B_j}{\Sigma_B} + \frac{B_j B_k}{\Sigma_B} = -\frac{B_j(1 - B_k)}{\Sigma_B}
\end{equation*}

\paragraph{Step 4: Target variance $[\Vsand]_{j+1,j+1}$.}
\begin{align*}
  [\Vsand]_{j+1,j+1}
  &= \frac{1}{\tau^2} \bk{\frac{d_j(1-B_j)^2 \Sigma_{-j}^2}{\Sigma_B^2} + \frac{B_j^2 \sum_{k \neq j} d_k(1-B_k)^2}{\Sigma_B^2}}
\end{align*}

Using $d_k(1 - B_k)^2 = \DEFF_k \cdot a_k(1 - B_k)^2 = \DEFF_k \cdot \tau B_k(1 - B_k)$ (since $a_k = \tau B_k/(1-B_k)$)
\textup{[uses (R1): $B_k < 1$; and $d_k = \DEFF_k\, a_k$, the
within-group design-effect parameterization of
Remark~\ref{rem:group-score-variance}]}:
\begin{equation*}
  [\Vsand]_{j+1,j+1} = \frac{B_j}{\tau \Sigma_B^2} \bk{\DEFF_j(1-B_j)\Sigma_{-j}^2 + B_j \sum_{k \neq j} \DEFF_k B_k(1-B_k)}
\end{equation*}

\paragraph{Step 5: The DER.}
\textup{[Uses (R1): $\Sigma_B > 0$ and $\Gamma_j > 0$, so the ratio is
well defined.]}
\begin{equation*}
  \begin{aligned}
  \DER_j &= \frac{[\Vsand]_{j+1,j+1}}{[\mathbf{H}^{-1}]_{j+1,j+1}}\\
  &= \frac{B_j \bk{\DEFF_j(1-B_j)\Sigma_{-j}^2 + B_j \sum_{k \neq j} \DEFF_k B_k(1-B_k)}}{\Sigma_B \cdot \Gamma_j}
  \qquad \square
  \end{aligned}
\end{equation*}

\paragraph{Step 6: Conditional DER.}
Conditional on $\mu$ being known, the $\DER$ for $\theta_j$ reduces to the scalar problem
\textup{[uses (R1) and $d_j = \DEFF_j\, a_j$, as in Step~4]}:
\begin{equation*}
  \DER_j^{\text{cond}} = \frac{d_j/(a_j + \tau)^2}{1/(a_j + \tau)} = \frac{d_j}{a_j + \tau} = \frac{\DEFF_j \cdot a_j}{a_j + \tau} = B_j \cdot \DEFF_j
\end{equation*}

The exact formula can therefore be written as $\DER_j = B_j \cdot \DEFF_j \cdot \kappa_j(J)$.
The factor satisfies $\kappa_j(J) < 1$ under balance and common design
effects, but this inequality need not hold with heterogeneous design
effects.
It captures finite-$J$ coupling through uncertainty in $\mu$; its
exact ensemble form is \eqref{eq:unbalanced-kappa} in
\Cref{app:unbalanced}.

\Needspace{14\baselineskip}
\paragraph{Step 7: Balanced, common-$\DEFF$ case.}
Setting $B_k = B$, $\DEFF_k = \DEFF$ for all $k$
\textup{[uses balance from (C1), which gives common $B$, \emph{and}
the common-$\DEFF$ hypothesis of Theorem~2 in the main article; balance alone
does not make the $\DEFF_k$ common]}:
\begin{itemize}
  \item $\Sigma_B = JB$, $\Sigma_{-j} = (J-1)B$
  \item Numerator: $B[\DEFF(1-B)(J-1)^2 B^2 + B(J-1)\DEFF \cdot B(1-B)] = \DEFF \cdot B^3(1-B)J(J-1)$
  \item Denominator: $JB \cdot [B + (J-1)B(1-B)] = JB^2[J(1-B) + B]$
\end{itemize}
\begin{equation*}
  \DER_j = \DEFF \cdot B \cdot \frac{(J-1)(1-B)}{J(1-B) + B} = B \cdot \DEFF \cdot \kappa(J)
\end{equation*}
where $\kappa(J) = 1 - 1/[J(1-B) + B]$.
As $J \to \infty$, $\kappa(J) \to 1$ and $\DER_j \to B \cdot \DEFF$. $\square$

\subsection{Theorem~2 with Unequal Group Sizes and Design Effects}
\label{app:unbalanced}

Theorem~2 of the main article assumes balance (C1) and common
within-group design effects
($\DEFF_j \equiv \DEFF$), where the closed form
$\kappa(J) = 1 - 1/[J(1-B) + B]$ explains the coupling between groups.
We now give the exact form without either restriction, allowing
group-specific data precisions
$a_j = \widetilde{n}_j/\sigma^2_e$ (hence group-specific shrinkage
$B_j = a_j/(a_j + \tau)$ and sample sizes $n_j$) and group-specific
within-group design effects $\DEFF_j$.
We establish the result algebraically and compare it with direct
matrix calculations.

\paragraph{Exact unbalanced formula.}
In the normal--normal conjugate model of \Cref{sec:theorem2}, under
conditions (R1)--(R3) and \emph{without} any balance assumption, the
$\DER$ of $\theta_j$ is exactly
\begin{equation}\label{eq:unbalanced-der}
  \DER_j
  \;=\;
  \frac{B_j \bk{\DEFF_j (1 - B_j)\, \Sigma_{-j}^{2}
        \;+\; B_j \sum_{k \neq j} \DEFF_k\, B_k (1 - B_k)}}
       {\Sigma_B \cdot \Gamma_j},
\end{equation}
where $B_k = a_k/(a_k + \tau)$, $\Sigma_B = \sum_{k=1}^{J} B_k$,
$\Sigma_{-j} = \Sigma_B - B_j$, and
$\Gamma_j = B_j + \Sigma_{-j}(1 - B_j)$.
Equivalently, factoring out the conditional $\DER$ of Step~6,
\begin{equation}\label{eq:unbalanced-kappa}
  \begin{aligned}
  \DER_j &\;=\; B_j \cdot \DEFF_j \cdot \kappa_j(J),\\
  \kappa_j(J)
  &\;=\;
  \frac{(1 - B_j)\, \Sigma_{-j}^{2}
        \;+\; \dfrac{B_j}{\DEFF_j} \displaystyle\sum_{k \neq j}
        \DEFF_k\, B_k (1 - B_k)}
       {\Sigma_B \cdot \Gamma_j},
  \end{aligned}
\end{equation}
which defines the exact finite-$J$ coupling factor in the unbalanced
case.

\paragraph{Reductions.}
When all groups share a common design effect
($\DEFF_k = \DEFF_j$ for all $k$), the coupling factor simplifies to
$\kappa_j(J) = \bigl[(1-B_j)\Sigma_{-j}^2 + B_j \sum_{k \neq j}
B_k(1-B_k)\bigr] / (\Sigma_B \Gamma_j)$, which depends on the design
only through the shrinkage profile $\{B_k\}$.
When in addition the shrinkage is common ($B_k = B_j$ for all $k$),
substituting $\Sigma_{-j} = (J-1)B_j$ and
$\Gamma_j = B_j[1 + (J-1)(1-B_j)]$ recovers
\begin{equation*}
  \kappa_j(J)
  = \frac{(J-1)(1-B_j)}{J(1-B_j) + B_j}
  = 1 - \frac{1}{J(1-B_j) + B_j},
\end{equation*}
the form displayed in Theorem~2 of the main article.
When either shrinkage or design effects vary across groups, the exact
$\kappa_j(J)$ of \eqref{eq:unbalanced-kappa} couples group~$j$
to the entire ensemble through $\Sigma_{-j}$ and the
$\sum_{k \neq j}$ term, and the naive per-group product
$B_j \DEFF_j \kappa_j^{\mathrm{bal}}(J)$ (the balanced-form
expression evaluated at group-specific arguments) is not exact.

\paragraph{Example: heterogeneous $\DEFF$, common $B$.}
Balance makes shrinkage common but does not constrain the design
effects. Heterogeneous $\DEFF$ alone can therefore invalidate the
balanced closed form. Take $J = 3$,
$B_k \equiv 0.5$, and $(\DEFF_1, \DEFF_2, \DEFF_3) = (1, 9, 1)$.
Then \eqref{eq:unbalanced-der} gives, for group~1,
$\Sigma_B = 1.5$, $\Sigma_{-1} = 1$, $\Gamma_1 = 1$, and
\begin{equation*}
  \DER_1
  = \frac{0.5\,[\,1 \cdot 0.5 \cdot 1 + 0.5\,(9 \cdot 0.25 + 1 \cdot
    0.25)\,]}{1.5 \cdot 1}
  = \frac{0.5 \cdot 1.75}{1.5}
  = 0.583,
\end{equation*}
which agrees with direct matrix computation. The naive product
$B_1 \DEFF_1 \kappa(3) = 0.5 \cdot 1 \cdot 0.5 = 0.25$
understates it by a factor $2.3$: group~1's marginal $\DER$ inherits
group~2's large design effect through the grand-mean coupling.
This is why the common-$\DEFF$ hypothesis appears explicitly in
Theorems~1--2 in the main article.
All computations in the paper and in the \texttt{svyder} package
evaluate the exact matrix formula (equivalently
\eqref{eq:unbalanced-der}), rather than substituting group-specific
values into the balanced formula.

\paragraph{Derivation.}
Steps~1--6 of the Theorem~2 proof in \Cref{sec:theorem2} allow
arbitrary group-specific $(a_j, \DEFF_j)$; balance enters only at
Step~7. Thus, \eqref{eq:unbalanced-der} follows directly from that
proof, with the per-group quantities written explicitly.
The corollary describing qualitative properties of the unbalanced
formula (non-negativity, the $B_j = 0$ and $J = 1$ degeneracies, and
cross-group dependence) is given in \Cref{app:secondary}
(Corollary~\ref{cor:unbalanced}).

\paragraph{Numerical comparison.}
We compared \eqref{eq:unbalanced-der} with direct construction and
inversion of the full $(1+J) \times (1+J)$ matrices $\Hobs$ and
$\Jclust$ in seven sets of cases: balanced designs with
$J \in \{2, 5, 10, 20, 50, 100, 500\}$; an unbalanced $J = 5$ design
with heterogeneous $(a_j, \DEFF_j)$; the boundary cases $B \to 0$,
$B \to 1$, and $\DEFF = 1$ (SRS); the smallest non-trivial case
$J = 2$; and a randomly generated unbalanced $J = 50$ design.
All seven comparisons agreed, with a maximum absolute discrepancy of
$5.2 \times 10^{-12}$ overall (attained in the near-degenerate
$B \to 1$ case) and at most $1.4 \times 10^{-13}$ elsewhere.
A 75-point $(B \times \DEFF \times J)$ grid comparison also agreed.
The reproducibility materials referenced in the main article's
software statement include the script
\texttt{015\_verify\_theorem2.R}.

%% file: sm_c_secondary.tex

\section{Secondary Results}
\label{app:secondary}

We give further properties of the random-effect $\DER$ (Theorem~2 of
the main article), the marginal conservation identity, and corollaries
supporting the four-tier classification (Table~1 of the main article).
We then extend the diagnostic to multivariate random effects, explain
the Tier~III exclusion and reparameterization, and compare the
diagnostic with other design-sensitivity tools.

\subsection{Boundary Behavior and Shape of the Random-Effect DER}
\label{sec:boundary}

The following corollary describes the random-effect $\DER$ in the
balanced conjugate model. Theorem~2 in the main article refers to it
as Corollary~S-C.1.

\begin{corollary}[Boundary behavior and shape]
\label{cor:boundary}
In the balanced normal--normal conjugate model used in Theorem~2 of the main article, with $J \geq 2$, $B \in (0,1)$ the common shrinkage
factor, and $\DEFF$ the common within-group design effect:
\begin{enumerate}[label=(\roman*)]
  \item \textit{(Complete shielding.)} $\lim_{B \to 0} \DER_j = 0$:
    under complete shrinkage, the group effects have no design
    sensitivity.
  \item \textit{(Exposure at the other boundary.)} As $B \to 1$, the
    conditional $\DER$ tends to the full design effect,
    $\DER_j^{\mathrm{cond}} = B \cdot \DEFF \to \DEFF$, while the
    marginal $\DER_j \to 0$ for finite $J$---consistent with the
    marginal form of the conservation identity
    (Corollary~\ref{cor:conservation-marginal}), whose total also
    vanishes in this limit.
  \item \textit{(Monotonicity in the design effect.)} $\DER_j$ is
    strictly increasing, and linear, in $\DEFF$.
  \item \textit{(Inverted U in shrinkage.)} $\DER_j$ is non-monotone
    in $B$, with a unique interior maximum at
    $B^* = \sqrt{J}/(\sqrt{J} + 1)$ and maximum value
    $\DER^{\max} = \DEFF \cdot (J-1)/(\sqrt{J}+1)^2
    = \DEFF \cdot (\sqrt{J}-1)/(\sqrt{J}+1)$.
  \item \textit{(SRS baseline.)} Under simple random sampling
    ($\DEFF = 1$):
    $\DER_j = B\, \kappa(J)
    = B(J-1)(1-B)/[J(1-B)+B] < 1$.
\end{enumerate}
\end{corollary}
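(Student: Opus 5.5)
All five parts follow from the balanced closed form proved at Step~7 of the proof of Theorem~2 (\Cref{sec:theorem2}). I will use it in the form
\[
  \DER_j \;=\; \DEFF\cdot f_J(B),
  \qquad
  f_J(B) \;=\; \frac{(J-1)\,B(1-B)}{J(1-B)+B}
  \;=\; \frac{(J-1)\,B(1-B)}{J-(J-1)B},
\]
together with $\DER_j^{\mathrm{cond}} = B\cdot\DEFF$ from Step~6. For $B\in[0,1]$ the denominator $J-(J-1)B$ lies between $1$ and $J$, so it is strictly positive. Every part then reduces to elementary statements about a rational function of one variable.

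\textbf{Parts (i), (ii), (iii).} For (i), $f_J$ is continuous on $[0,1]$ with $f_J(0)=0$, which gives the limit. For (ii), the factor $(1-B)$ in the numerator of $f_J$ vanishes while the denominator tends to $1$, so $\DER_j\to 0$ for fixed $J$. Meanwhile $B\cdot\DEFF\to\DEFF$ directly. I will then remark that the conservation total in Corollary~\ref{cor:conservation-marginal} also vanishes in this limit; this is only a citation. For (iii), $\DER_j$ is $\DEFF$ times $f_J(B)$, and $f_J(B)>0$ for $J\ge 2$ and $B\in(0,1)$, so $\DER_j$ is linear and strictly increasing in $\DEFF$.

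\textbf{Part (iv).} This is the only step with real computation. I will differentiate $g(B)=B(1-B)/(J-(J-1)B)$. By the quotient rule, the sign of $g'(B)$ is the sign of
\[
  q(B) \;=\; (1-2B)\bigl(J-(J-1)B\bigr)+(J-1)B(1-B) \;=\; J-2JB+(J-1)B^2 .
\]
For $J\ge 2$ the roots of $q$ are $B=(J\pm\sqrt{J})/(J-1)=\sqrt{J}/(\sqrt{J}\mp 1)$. Only the root $B^*=\sqrt{J}/(\sqrt{J}+1)$ lies in $(0,1)$; the other exceeds $1$. Since $q(0)=J>0$ and $q(1)=-1<0$, the derivative is positive before $B^*$ and negative after it. Hence $B^*$ is the unique interior maximum.

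To get the maximum value I will substitute $B^*$, using $1-B^*=1/(\sqrt{J}+1)$. Then
\[
  J(1-B^*)+B^* \;=\; \frac{J+\sqrt{J}}{\sqrt{J}+1} \;=\; \sqrt{J},
\]
which gives $\DER^{\max}=\DEFF\,(J-1)/(\sqrt{J}+1)^2$. Factoring $J-1=(\sqrt{J}-1)(\sqrt{J}+1)$ yields the second form $\DEFF\,(\sqrt{J}-1)/(\sqrt{J}+1)$. I expect the main care point to be the bookkeeping of which root lies in $(0,1)$, together with the degenerate case $J=1$, which is excluded by the hypothesis $J\ge 2$.

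\textbf{Part (v).} Set $\DEFF=1$, so that $\DER_j=B\,\kappa(J)=f_J(B)$. Strict inequality follows from the chain
\[
  B(J-1)(1-B) \;<\; (J-1)(1-B) \;<\; J(1-B)+B,
\]
which holds for $B\in(0,1)$ and $J\ge 2$. Alternatively, part (iv) gives the bound $(\sqrt{J}-1)/(\sqrt{J}+1)<1$ directly.
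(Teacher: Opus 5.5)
Your proposal is correct and takes essentially the same route as the paper: both start from the balanced closed form $\DER_j = \DEFF\cdot B(J-1)(1-B)/[J(1-B)+B]$, take limits for (i)--(ii), read off linearity for (iii), and locate the maximum in (iv) through the quotient-rule numerator $J-2JB+(J-1)B^2$ and its root $B^*=\sqrt{J}/(\sqrt{J}+1)$. Your sign check $q(0)=J>0$, $q(1)=-1<0$ and your explicit inequality chain for (v) are slightly tighter than the paper's versions, which rely on $f(0)=f(1)=0$ with $f>0$ and simply assert the SRS bound, but the argument is the same.
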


\subsubsection{Proofs}

\paragraph{(i) $B \to 0$.}
As $B \to 0$, the factor $B$ in $B \cdot \DEFF \cdot \kappa(J)$ tends to $0$, while $\kappa(J) \to (J-1)/J$ remains bounded.
Therefore $\DER_j \to 0$.

When $B \to 0$, the prior $\pi(\theta_j \mid \sigma_\theta^2)$ determines the posterior, so the survey design has no influence on $\theta_j$. $\square$

\paragraph{(ii) $B \to 1$.}
Part~(a) of Theorem~2 in the main article gives the conditional limit: $\DER_j^{\mathrm{cond}} = B \cdot \DEFF \to \DEFF$.
For the marginal limit, as $B \to 1$, $(1-B) \to 0$ makes the numerator factor $(J-1)(1-B)$ vanish, while the denominator $J(1-B) + B \to 1$.
Thus,
\begin{equation*}
  \DER_j \to \DEFF \cdot 1 \cdot \frac{(J-1) \cdot 0}{0 + 1} = 0
\end{equation*}

When $B \to 1$ uniformly ($\tau \to 0$), the random effects become unpenalized.
In the joint parameterization $(\mu, \theta_1, \ldots, \theta_J)$, $\theta_j$ and $\mu$ become non-identifiable, causing the model-based posterior variance to grow faster than the design-based variance.
The marginal conservation identity
(Corollary~\ref{cor:conservation-marginal}(ii)) gives the same limit:
$\DER_\mu + \DER_\theta = \DEFF \cdot J(1-B)/[J(1-B)+B] \to 0$ in this
limit. $\square$

\paragraph{(iii) Linearity in DEFF.}
Define $C(B, J) = B(J-1)(1-B)/[J(1-B) + B]$.
For $B \in (0,1)$ and $J \geq 2$, $C(B,J) > 0$, so $\DER_j = \DEFF \cdot C(B,J)$ is linear in $\DEFF$ with positive slope. $\square$

\paragraph{(iv) Non-monotonicity and maximum location.}
Define $f(B) = B(1-B)/[J(1-B) + B]$.
The quotient rule, with $u = B(1-B)$ and $v = J - (J-1)B$, gives
\begin{equation*}
  f'(B) = \frac{(1-2B)[J - (J-1)B] + (J-1)B(1-B)}{[J-(J-1)B]^2}
\end{equation*}

The numerator expands to
\begin{equation*}
  N = J - 2BJ + (J-1)B^2
\end{equation*}

Setting $N = 0$ gives $(J-1)B^2 - 2JB + J = 0$.
The quadratic formula then gives
\begin{equation*}
  B = \frac{2J \pm 2\sqrt{J}}{2(J-1)} = \frac{J \pm \sqrt{J}}{J-1} = \frac{\sqrt{J}}{\sqrt{J} \mp 1}
\end{equation*}

The root in $(0,1)$ is $B^* = \sqrt{J}/(\sqrt{J} + 1)$.
Since $f(0) = f(1) = 0$ and $f > 0$ on $(0,1)$, this unique critical point is a maximum.

To obtain the maximum, we substitute $B^* = \sqrt{J}/(\sqrt{J}+1)$:
\begin{itemize}
  \item $1 - B^* = 1/(\sqrt{J}+1)$
  \item $J(1-B^*) + B^* = J/(\sqrt{J}+1) + \sqrt{J}/(\sqrt{J}+1) = \sqrt{J}$
  \item $B^*(1-B^*) = \sqrt{J}/(\sqrt{J}+1)^2$
\end{itemize}
\begin{equation*}
  \DER^{\max} = \DEFF \cdot (J-1) \cdot \frac{\sqrt{J}/(\sqrt{J}+1)^2}{\sqrt{J}}
  = \DEFF \cdot \frac{J-1}{(\sqrt{J}+1)^2}
  = \DEFF \cdot \frac{\sqrt{J}-1}{\sqrt{J}+1}
\end{equation*}

As $J \to \infty$, $B^* \to 1$ and $\DER^{\max} \to \DEFF$. $\square$

\paragraph{(v) SRS special case and the DER baseline.}
Under SRS ($\DEFF = 1$), $\DER_j = B(J-1)(1-B)/[J(1-B)+B] < 1$ for $B \in (0,1)$;
this equals $B\,\kappa(J)$ with
$\kappa(J) = 1 - 1/[J(1-B)+B] = (J-1)(1-B)/[J(1-B)+B]$.
In the large-$J$ limit, $\DER_j^{\text{SRS}} \to B$.

\textbf{Why $\DER_j \neq 1$ under SRS.}
The classical design effect compares the complex-design variance with the SRS variance, so $\DEFF = 1$ under SRS.
This does not imply $\DER = 1$: the $\DER$ compares the design-corrected sandwich variance with the model-based posterior variance.
Its denominator includes the prior's contribution.

The two sandwich components treat the prior differently:
\begin{itemize}
  \item The bread $\Hobs$ is the negative Hessian of the full log pseudo-posterior, including the prior: $\Hobs = \mathbf{H}_{\text{lik}} + \mathbf{H}_{\text{prior}}$.
  \item The meat $\Jclust$ uses only pseudo-likelihood scores. The prior is a fixed function of the parameters with zero sampling variance, so it contributes no scores.
\end{itemize}

The posterior mode solves the penalized estimating equation $\sum_i \mathbf{s}_i(\bm{\phi}) \allowbreak + \nabla \log \pi(\bm{\phi}) \allowbreak = 0$.
The prior gradient $\nabla \log \pi$ is a fixed function of $\bm{\phi}$ across repeated samples from the design.
The estimator's design-based variability therefore comes from the data scores $\sum_i \mathbf{s}_i$, whose variability $\Jclust$ estimates.
The curvature $\Hobs$ also includes the prior.

Under SRS, the design-based variance of the scores matches the model-based (likelihood-only) variance: $\Jclust = \mathbf{H}_{\text{lik}}$.
For random effects, the posterior variance is $\Hobs^{-1} = (\mathbf{H}_{\text{lik}} + \mathbf{H}_{\text{prior}})^{-1} < \mathbf{H}_{\text{lik}}^{-1}$ because the prior adds precision.
Hence,
the $\DER$ is
\begin{equation*}
  \DER_j^{\text{SRS}} = \frac{[\Hobs^{-1} \mathbf{H}_{\text{lik}} \, \Hobs^{-1}]_{jj}}{[\Hobs^{-1}]_{jj}} < 1
\end{equation*}
because the sandwich numerator propagates data variability, while the denominator includes the prior's contribution to posterior precision.

The $\DER$ for random effects thus measures the design-inflated fraction of data-driven posterior variance.
Under SRS, $\DER = B$ in the large-$J$ limit: the fraction $B$ of posterior precision comes from data, and the remaining fraction $(1 - B)$ comes from the prior and is unaffected by the survey design.

For fixed effects with flat priors, $\mathbf{H}_{\text{prior}}^{\beta} \approx 0$, so $\Hobs \approx \mathbf{H}_{\text{lik}}$ and $\DER_{\beta}^{\text{SRS}} = 1$, recovering the \citet{Binder1983} identity. $\square$

\subsection{The Conservation Identity: Marginal Form and
  Finite-\texorpdfstring{$J$}{J} Deficit}
\label{sec:conservation-marginal}

The conditional conservation identity
$\DER_\mu + \DER_\theta^{\mathrm{cond}} = \DEFF$ is
Corollary~1, Equation~(3.5), of the main article. It follows from
Theorem~1 and part~(a) of Theorem~2.
We now give its marginal counterpart, where finite-$J$ coupling
through the grand mean produces a deficit.

\begin{corollary}[Marginal conservation and its finite-$J$ deficit]
\label{cor:conservation-marginal}
In the balanced, common-$\DEFF$ normal--normal model with centered
covariates:
\begin{enumerate}[label=(\roman*)]
  \item \textit{(Conditional conservation---exact; Corollary~1 of the main article):}
  \begin{equation*}
    \DER_\mu + \DER_\theta^{\text{cond}} = \DEFF(1-B) + B \cdot \DEFF = \DEFF
  \end{equation*}

  \item \textit{(Marginal finite-$J$ deficit): For finite $J \geq 2$,}
  \begin{equation*}
    \DER_\mu + \DER_\theta = \DEFF \cdot \frac{J(1-B)}{J(1-B)+B} < \DEFF
  \end{equation*}

  \item \textit{The deficit is:}
  \begin{equation*}
    \DEFF - [\DER_\mu + \DER_\theta] = \frac{\DEFF \cdot B}{J(1-B)+B} = O(1/J)
  \end{equation*}
\end{enumerate}
\end{corollary}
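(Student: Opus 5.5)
The plan is to work entirely inside the arrow-structured computation of the Theorem~2 proof (\Cref{sec:theorem2}), specialised to balance and a common design effect: $B_k\equiv B$, $\DEFF_k\equiv\DEFF$, $\Sigma_B=JB$, and $d_k=\DEFF\,a$. Part~(i) is just the main-text Corollary~1. I would justify it by evaluating $\DER_\mu$ directly in the $(\mu,\btheta)$ parameterization, rather than appealing to $R_k=B$ in Theorem~1. Adding the conditional random-effect DER $B\cdot\DEFF$ from Step~6 then gives $\DEFF(1-B)+B\,\DEFF=\DEFF$. Computing $\DER_\mu$ directly also supplies the ingredient needed for part~(ii), and it avoids relying on the heuristic upper-bound argument of Step~5.

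The first step is the direct evaluation of $\DER_\mu$. From the block-inverse table in Step~1, the $\mu$-row of $\mathbf{H}^{-1}$ has entries $1/(\tau\Sigma_B)$ and $-B_k/(\tau\Sigma_B)$. By the rank-one meat decomposition of Step~2, $\mathbf{J}=\sum_k d_k\mathbf{f}_k\mathbf{f}_k^\t$, we get
\[
  [\Vsand]_{00}=\sum_k d_k\,\frac{(1-B_k)^2}{\tau^2\Sigma_B^2}.
\]
Applying the identity $d_k(1-B_k)^2=\DEFF_k\,\tau B_k(1-B_k)$ from Step~4 and specialising to the balanced case gives $[\Vsand]_{00}=\DEFF(1-B)/(\tau JB)$. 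Dividing by $[\mathbf{H}^{-1}]_{00}=1/(\tau JB)$ yields $\DER_\mu=\DEFF(1-B)$, as required.

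For part~(ii), I add this to the balanced marginal formula from Step~7, $\DER_\theta=\DEFF\,B(J-1)(1-B)/[J(1-B)+B]$. Putting both terms over the common denominator $J(1-B)+B$, the numerator becomes
\[
  \DEFF(1-B)\bigl[J(1-B)+B+B(J-1)\bigr]=\DEFF(1-B)J,
\]
because the bracket collapses to $J$. Hence $\DER_\mu+\DER_\theta=\DEFF\,J(1-B)/[J(1-B)+B]$.

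Part~(iii) follows by subtracting this from $\DEFF$, which leaves $\DEFF\,B/[J(1-B)+B]$. This quantity is strictly positive because $B\in(0,1)$ by (R1), so the inequality in (ii) is strict. It is $O(1/J)$ for fixed $B<1$, since the denominator grows linearly in $J$.

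The only step needing care is the first one: confirming that the $\mu$ row of the sandwich reduces to $(1-B)/(\tau\Sigma_B)$ per group, and that the flat prior on $\mu$ (R3) and the absence of cross-group meat blocks (R2) are what make this work. The rest is routine algebra.
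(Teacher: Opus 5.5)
Your proof is correct. Parts~(ii) and~(iii) follow the paper's algebra exactly: add $\DER_\mu=\DEFF(1-B)$ to the Step~7 balanced formula, collapse the bracket to $J$, and subtract from $\DEFF$.

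The only difference is how you obtain $\DER_\mu$. The paper cites Theorem~1 for the intercept-only model. You instead read $\DER_\mu$ off the $\mu$-row of the same arrow-structured $\mathbf{H}^{-1}$ and rank-one meat used for $\DER_\theta$, which gives $[\Vsand]_{00}=\sum_k\DEFF_kB_k(1-B_k)/(\tau\Sigma_B^2)$. Your route has two merits. Both summands come from one sandwich in one coordinate system. It also yields the unbalanced intercept DER $\sum_k\DEFF_kB_k(1-B_k)/\Sigma_B$ for free.

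Your stated motivation slightly overstates the gain, though. The intercept case of Theorem~1 does not rest on the Step~5 upper-bound argument. With $p=1$ one has $\mathbf{b}_j=a_j$, $\mathbf{W}_j=0$, $\mathbf{S}_\beta=\tau\Sigma_B$ and $\bDelta=\tau\sum_jB_j^2$. Hence $R=\sum_jB_j^2/\Sigma_B$, which equals $B$ exactly under balance. This is the same scalar you compute, organized through the Schur complement instead.

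Like the paper, you silently reduce the statement's ``centered covariates'' model to the $(\mu,\btheta)$ model. That reduction is valid when the covariates have zero weighted group means. Then the slope entries of $\mathbf{b}_j$ vanish, and the slope block decouples from $(\mu,\btheta)$ in both $\Hobs$ and $\Jclust$. One sentence saying so would close that loop.
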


\begin{proof}
\textit{(i) Conditional conservation (exact).}
For any $J$, we use the conditional $\DER$ (\Cref{sec:theorem2}, Step~6): $\DER_\mu = \DEFF(1-B)$ follows from Theorem~1 of the main article for the intercept-only model, and $\DER_\theta^{\text{cond}} = B \cdot \DEFF$ is the conditional $\DER$, exact for all $J$.
Adding these terms gives
\begin{equation*}
  \DEFF(1-B) + B \cdot \DEFF = \DEFF[(1-B) + B] = \DEFF.
\end{equation*}

\textit{(ii) Marginal finite-$J$ deficit.}
The full matrix gives a marginal $\DER_\theta$ smaller than the conditional $\DER_\theta^{\text{cond}} = B \cdot \DEFF$ by the coupling factor $\kappa(J) < 1$.
The resulting sum $\DER_\mu + \DER_\theta$ falls short of $\DEFF$ by an amount of order $O(1/J)$.

The exact finite-$J$ formula for $\DER_\theta$ gives
\begin{align*}
  \DER_\mu + \DER_\theta
  &= \DEFF(1-B) + \DEFF \cdot B \cdot \frac{(J-1)(1-B)}{J(1-B)+B} \\
  &= \DEFF(1-B) \bk{1 + \frac{B(J-1)}{J(1-B)+B}} \\
  &= \DEFF(1-B) \cdot \frac{J(1-B)+B + B(J-1)}{J(1-B)+B} \\
  &= \DEFF(1-B) \cdot \frac{J(1-B) + BJ}{J(1-B)+B} \\
  &= \DEFF(1-B) \cdot \frac{J}{J(1-B)+B} \\
  &= \DEFF \cdot \frac{J(1-B)}{J(1-B)+B}.
\end{align*}

\textit{(iii) Deficit.}
\begin{equation*}
  \DEFF - \DEFF \cdot \frac{J(1-B)}{J(1-B)+B} = \DEFF \cdot \frac{B}{J(1-B)+B} > 0
\end{equation*}
for $B > 0$.
For large $J$, $B/[J(1-B)+B] \sim B/[(1-B)J] = O(1/J)$.

The deficit arises from the posterior correlation between $\mu$ and $\theta_j$.
Uncertainty in $\mu$ absorbs some design effects that would otherwise appear in $\theta_j$'s $\DER$.
The deficit vanishes as $J \to \infty$, when $\mu$ becomes perfectly determined.
\end{proof}

\subsection{Additional Corollaries}
\label{sec:additional-corollaries}

The following corollaries derive from Theorems~1 and~2 of the main
article and support its four-tier classification in Table~1.
Throughout this subsection, ``the balanced case'' means the full
hypotheses of those theorems: balance (hence common shrinkage
$B$) \emph{and} a common within-group design effect $\DEFF$.
Balance alone does not imply common $\DEFF$; under a heterogeneous
ensemble the scalar forms below are the balanced/common-$\DEFF$
approximation, and the ensemble-exact statements are
Section~S-B.3 and Corollary~\ref{cor:unbalanced}.

\subsubsection{Bounds on the random-effect DER}

\begin{corollary}[Upper bound]
\label{cor:upper-bound}
In the balanced, common-$\DEFF$ case with $J \geq 2$:
\begin{enumerate}[label=(\roman*)]
  \item $0 \leq \DER_j \leq B \cdot \DEFF$, with equality in the upper bound as $J \to \infty$.
  \item The finite-$J$ relative gap is exactly:
  \begin{equation*}
    \frac{B \cdot \DEFF - \DER_j}{B \cdot \DEFF} = \frac{1}{J(1-B)+B}
  \end{equation*}
\end{enumerate}
\end{corollary}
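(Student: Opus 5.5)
The plan is to work directly from the balanced, common-$\DEFF$ closed form established in Step~7 of the proof of Theorem~2 (\Cref{sec:theorem2}):
\[
  \DER_j = B\cdot\DEFF\cdot\kappa(J),\qquad \kappa(J) = \frac{(J-1)(1-B)}{J(1-B)+B} = 1-\frac{1}{J(1-B)+B}.
\]
Both parts then reduce to elementary statements about the scalar factor $\kappa(J)$ for $B\in(0,1)$ and $J\ge 2$.

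For part~(ii), dividing by $B\cdot\DEFF$ is legitimate. By (R1) we have $B>0$, and $\DEFF>0$ because the weights are strictly positive. The relative gap is then
\[
  \frac{B\DEFF-\DER_j}{B\DEFF} = 1-\kappa(J) = \frac{1}{J(1-B)+B},
\]
and this is an exact identity. I would verify the two expressions for $\kappa(J)$ agree by putting them over the common denominator: $J(1-B)+B-1 = (J-1)(1-B)$.

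For part~(i), the lower bound follows because every factor of $\DER_j$ is nonnegative. Indeed $B\ge0$, $\DEFF>0$, and $(J-1)(1-B)\ge0$ when $J\ge2$. The denominator $J(1-B)+B$ is strictly positive, being a convex combination of $J$ and $1$.

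The upper bound $\DER_j\le B\DEFF$ is equivalent to $\kappa(J)\le 1$. This holds since $1/[J(1-B)+B]>0$, so in fact the inequality is strict for finite $J$.

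For the limiting equality, note that $J(1-B)+B\to\infty$ as $J\to\infty$ with $B<1$ fixed. Hence $\kappa(J)\to1$ and $\DER_j\to B\DEFF$, as already recorded in Step~7.

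There is no real obstacle here. The only care point is scope: the statement assumes the balanced, common-$\DEFF$ ensemble. Under heterogeneous $\DEFF_k$ the bound $\kappa_j\le1$ can fail, as the $J=3$ example in \Cref{app:unbalanced} shows, where $0.583>B_1\DEFF_1=0.5$. I would add a remark making this explicit, so the corollary is not read as applying to \eqref{eq:unbalanced-kappa}.

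I would also note the boundary behavior. At $B\to1$ the relative gap tends to $1$, consistent with Corollary~\ref{cor:boundary}(ii). The gap is $O(1/J)$, matching the deficit in Corollary~\ref{cor:conservation-marginal}(iii).
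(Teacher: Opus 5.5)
Your proposal is correct and follows the paper's own argument. The paper invokes the Step~7 closed form $\DER_j = B\cdot\DEFF\cdot\kappa(J)$ and notes that $J(1-B)+B>1$ for $J\ge 2$ and $B\in(0,1)$, hence $0<\kappa(J)<1$; this gives both bounds, the limiting equality, and the exact relative gap $1-\kappa(J)$. Your remark that the upper bound can fail under heterogeneous design effects ($0.583>0.5$ in the $J=3$ example) is accurate and matches the caveat the paper records in Step~6 of the Theorem~2 proof.
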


\begin{proof}
We use $\DER_j = B \cdot \DEFF \cdot \kappa(J)$ with $\kappa(J) = 1 - 1/[J(1-B)+B]$. Since $J(1-B)+B > 1$ for $J \geq 2$ and $B \in (0,1)$, we have $0 < \kappa(J) < 1$.
\end{proof}

\subsubsection{Design sensitivity ordering}

\begin{corollary}[Ordering]
\label{cor:ordering}
In the balanced case with common $\DEFF$ and centered covariates, the intercept and random-effect DERs have the following large-$J$ ordering:
\begin{itemize}
  \item $B < 1/2$: $\DER_\theta < \DER_\mu$ (random effects less design-sensitive).
  \item $B = 1/2$: $\DER_\theta = \DER_\mu = \DEFF/2$ (equal sensitivity).
  \item $B > 1/2$: $\DER_\theta > \DER_\mu$ (random effects more design-sensitive).
\end{itemize}
Within-group slopes always satisfy $\DER_{\beta_1} = \DEFF \geq \DER_\theta$ (slopes are the most design-sensitive parameters).
\end{corollary}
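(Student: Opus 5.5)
The plan is to reduce every comparison to the closed forms already established: the intercept DER from \Cref{thm:fixed-effect}, the random-effect DER from \Cref{thm:random-effect}, and the slope DER from the special cases in Section~S-B.1. No new matrix algebra should be needed.

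\emph{Intercept.} In the balanced, common-$\DEFF$ model with centered covariates, I would invoke the ``pure between-group/intercept'' special case of the Theorem~1 proof (\Cref{sec:theorem1-special}). There $\mathbf{W}_j$ vanishes in the intercept direction and $\bDelta$ and the between-group part of $\mathbf{S}_\beta$ are proportional with ratio $B$. This gives $R_0 = B$, hence $\DER_\mu = \DEFF(1-B)$ exactly for every $J$, which is also the value used in \Cref{cor:conservation-marginal}(i). The step I expect to need the most care is checking that centering really makes the slope and intercept blocks of $\mathbf{S}_\beta$ decouple. If it does not, then $[\mathbf{S}_\beta^{-1}\bDelta\mathbf{S}_\beta^{-1}]_{00}/[\mathbf{S}_\beta^{-1}]_{00}$ could differ from $B$. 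I would verify this first for $p=2$ with common group means $\bar x_j\equiv\bar x$ centered at zero. In that case $\mathbf{b}_0=(a,0)^{\t}$, so $\mathbf{S}_\beta$ is diagonal and the ratio equals $B$ directly.

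\emph{Random effects.} By \Cref{thm:random-effect}(b), $\DER_\theta = B\,\DEFF\,\kappa(J)$ with $\kappa(J)\to1$, so in the large-$J$ limit $\DER_\theta \to B\,\DEFF$. Comparing with the intercept,
\[
\lim_{J\to\infty}\DER_\theta - \DER_\mu = \DEFF\,(2B-1).
\]
The three cases $B<1/2$, $B=1/2$, $B>1/2$ then follow from the sign of $2B-1$ (using $\DEFF>0$), with the common value $\DEFF/2$ at $B=1/2$. I would add a remark that for finite $J$ the strict inequality $\kappa(J)<1$ from \Cref{cor:upper-bound} shifts the crossover slightly above $1/2$. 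This is why the statement is phrased as a large-$J$ ordering.

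\emph{Slopes.} For a within-group slope I would use either the pure within-group case ($\mathbf{b}_j$ having zero slope component, so $R_1=0$) or the strong balanced $p=2$ result of \Cref{sec:theorem1-special}. Both give $\DER_{\beta_1}=\DEFF$. \Cref{cor:upper-bound}(i) then yields $\DER_\theta \le B\,\DEFF\,\kappa(J) < \DEFF$ for $B\in(0,1)$. This establishes $\DER_{\beta_1}=\DEFF \ge \DER_\theta$, with strict inequality, for all $J$ and not only in the limit. Combined with $\DER_\mu=\DEFF(1-B)<\DEFF$, the slope is the maximum of the three.
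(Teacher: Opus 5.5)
Your proposal is correct and follows the paper's own argument: compare $\DER_\mu = \DEFF(1-B)$ (Theorem~1 with $R_0 = B$) with the large-$J$ value $\DER_\theta \approx B\,\DEFF$, then use $B \le 1$ for the within-group slope. Your additional checks are sound refinements of that same route, not a different one. The first is that centering decouples the intercept in $\mathbf{S}_\beta$, so $R_0 = B$. The second is that for finite $J$ the crossover lies above $1/2$: it sits at $B = J/[2(J-1)]$ for $J \ge 3$, because $\DER_\theta/\DER_\mu = B(J-1)/[J(1-B)+B]$.
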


\begin{proof}
We compare $\DER_\mu = \DEFF(1-B)$ with $\DER_\theta \approx B \cdot \DEFF$. Thus $\DER_\theta > \DER_\mu$ if and only if $B > 1-B$, or $B > 1/2$. For slopes, $\DEFF \geq B \cdot \DEFF$ since $B \leq 1$.
\end{proof}

\subsubsection{Monotonicity in \texorpdfstring{$J$}{J}}

\begin{corollary}
\label{cor:monotonicity}
For fixed $B \in (0,1)$ and $\DEFF > 0$, the random-effect DER is strictly increasing in $J$.
\end{corollary}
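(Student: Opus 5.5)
We use the balanced, common-$\DEFF$ closed form from Theorem~2 of the main article (Step~7 of \Cref{sec:theorem2}):
\begin{equation*}
  \DER_j(J) \;=\; \DEFF \cdot B \cdot \kappa(J),
  \qquad
  \kappa(J) = 1 - \frac{1}{J(1-B) + B}.
\end{equation*}
The prefactor $\DEFF \cdot B$ is strictly positive and does not depend on $J$, because $\DEFF > 0$ and $B \in (0,1)$. Monotonicity of $\DER_j$ in $J$ is therefore equivalent to monotonicity of $\kappa(J)$ in $J$.

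To study $\kappa$, extend $J$ to a real variable on $[1,\infty)$, or compare $J$ with $J+1$ directly, and write $g(J) = J(1-B) + B$. This $g$ is affine in $J$ with slope $1-B > 0$, so it is strictly increasing. Moreover $g(1) = 1$, so $g(J) \ge 1 > 0$ for all $J \ge 1$. Consequently $1/g(J)$ is strictly decreasing and $\kappa(J) = 1 - 1/g(J)$ is strictly increasing. For a discrete version, compute
\begin{equation*}
  \kappa(J+1) - \kappa(J)
  \;=\; \frac{1}{g(J)} - \frac{1}{g(J+1)}
  \;=\; \frac{1-B}{g(J)\,g(J+1)} \;>\; 0 .
\end{equation*}
Multiplying by $\DEFF \cdot B > 0$ then gives $\DER_j(J+1) > \DER_j(J)$.

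The only care needed is to state the domain precisely. At $J=1$ we have $\kappa = 0$, which is harmless because the inequality still holds from $J=1$ to $J=2$. Strictness uses $B < 1$, which gives a positive slope of $g$, and $B > 0$, which gives a positive prefactor; both follow from (R1). There is no real obstacle here. I would only add a remark that the result concerns the balanced, common-$\DEFF$ case in which $B$ is held fixed as $J$ grows. This matters because in the unbalanced ensemble form \eqref{eq:unbalanced-kappa} adding groups changes $\Sigma_{-j}$ and the heterogeneous sum, so monotonicity is not claimed there. I would also note that the limit is $B \cdot \DEFF$, which is consistent with Corollary~\ref{cor:upper-bound}.
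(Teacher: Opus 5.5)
Your proof is correct and follows essentially the paper's route. The paper also isolates the $J$-dependent factor of $B\cdot\DEFF\cdot\kappa(J)$ and differentiates $(J-1)/[J(1-B)+B]$ to get $1/[J(1-B)+B]^2>0$; your finite-difference computation $(1-B)/[g(J)\,g(J+1)]>0$ is the discrete version of the same step, and your restriction to the balanced, common-$\DEFF$ setting matches the standing hypothesis of that subsection.
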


\begin{proof}
Differentiating $g(J) = (J-1)/[J(1-B)+B]$ gives
\begin{equation*}
  g'(J) = \frac{[J(1-B)+B] - (J-1)(1-B)}{[J(1-B)+B]^2} = \frac{1}{[J(1-B)+B]^2} > 0. \qedhere
\end{equation*}
\end{proof}

\subsubsection{Convergence rate to the large-\texorpdfstring{$J$}{J} limit}

\begin{corollary}
\label{cor:convergence-rate}
The coupling factor satisfies:
\begin{equation*}
  1 - \kappa(J) = \frac{1}{J(1-B)+B} \sim \frac{1}{(1-B)J} \quad \text{as } J \to \infty
\end{equation*}
The convergence rate is $O(1/J)$, faster for small $B$ (strong shrinkage).
\end{corollary}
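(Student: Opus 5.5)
The statement to prove is the last one displayed, Corollary~\ref{cor:convergence-rate}: $1-\kappa(J) = 1/[J(1-B)+B]$ and $1-\kappa(J)\sim 1/[(1-B)J]$. The plan is to work directly from the balanced, common-$\DEFF$ closed form for the coupling factor established in Step~7 of the Theorem~2 proof (\Cref{sec:theorem2}), namely
\[
\kappa(J) = \frac{(J-1)(1-B)}{J(1-B)+B}.
\]
The exact identity for $1-\kappa(J)$ comes first. I would subtract this fraction from one over the common denominator $J(1-B)+B$. The numerator becomes $J(1-B)+B-(J-1)(1-B) = (1-B)+B = 1$, which gives exactly $1/[J(1-B)+B]$. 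This matches the second expression for $\kappa(J)$ already recorded in Theorem~2(b), so this step is essentially a consistency check between the two displayed forms.

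For the asymptotic claim, I would fix $B\in(0,1)$ so that $1-B>0$. Then write
\[
(1-B)J\,\bigl(1-\kappa(J)\bigr) = \frac{(1-B)J}{(1-B)J+B} = \Bigl(1+\frac{B}{(1-B)J}\Bigr)^{-1}.
\]
This ratio tends to $1$ as $J\to\infty$, which is precisely the relation $1-\kappa(J)\sim 1/[(1-B)J]$. It also follows that $1-\kappa(J)=O(1/J)$, with constant $1/(1-B)$.

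For the remark that convergence is faster under strong shrinkage, I would note that $1-\kappa(J)$, viewed as a function of $B$ for fixed $J\ge2$, has denominator $J-(J-1)B$. That denominator is decreasing in $B$, so $1-\kappa(J)$ is increasing in $B$. Both the exact gap and the leading constant $1/(1-B)$ are therefore smallest when $B$ is small.

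No real obstacle is expected. The only point needing care is the restriction $B<1$: at $B=1$ the leading constant $1/(1-B)$ blows up and the asymptotic equivalence fails, consistent with the $B\to1$ degeneracy in Corollary~\ref{cor:boundary}(ii). I would state the corollary for fixed $B\in(0,1)$, as the hypotheses of Theorem~2 already imply via (R1).
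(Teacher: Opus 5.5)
Your proposal is correct and follows essentially the same route as the paper, which reads off $1-\kappa(J) = 1/[J(1-B)+B]$ from the closed form of Theorem~2 and notes that $J(1-B)+B \approx (1-B)J$ for large $J$. Your limit $(1-B)J\,(1-\kappa(J)) \to 1$ and your observation that $J-(J-1)B$ decreases in $B$ make precise the asymptotic equivalence and the ``faster for small $B$'' remark, which the paper leaves informal.
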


\begin{proof}
This follows from $\kappa(J) = 1 - 1/[J(1-B)+B]$ and $J(1-B)+B \approx (1-B)J$ for large $J$.
\end{proof}

\subsubsection{Effective sample size interpretation}

\begin{corollary}[Effective sample size]
\label{cor:ess}
For a fixed-effect parameter $\beta_k$ with $\DER_{\beta_k} = d$, the design-relevant effective sample size is $N_{\mathrm{eff}}^{\mathrm{design}}(\beta_k) = N/d$.

For a random effect $\theta_j$, using the balanced/common-$\DEFF$ approximation $\DER_j \approx B_j \cdot \DEFF_j$ at large $J$ (the exact ensemble form is \Cref{app:unbalanced}), the design-adjusted effective sample size for the data-driven component is $n_j / \DEFF_j$.

PSIS effective sample size \citep{vehtari2024pareto} instead measures computational efficiency, not survey design efficiency.
\end{corollary}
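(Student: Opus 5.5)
The plan is to treat the corollary as largely definitional and to make precise the sense in which $N/\DER_{\beta_k}$ is an "effective sample size." I would fix the convention that an effective sample size is the number of independent observations which, under the model-based (posterior) variance scaling, would reproduce the declared target variance.

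\emph{Fixed effects.} Under the flat $\bbeta$ prior (R3), $[\SigMCMC]_{kk} = [\mathbf{S}_\beta^{-1}]_{kk}$. This scales like $v_k/N$, where $v_k$ is a per-observation information constant, since $\mathbf{S}_\beta=\sum_j \mathbf{C}_j$ is a sum of $N$ weighted contributions. By \Cref{def:der}, $[\Vtarget]_{kk} = \DER_{\beta_k}\,[\SigMCMC]_{kk} = v_k/(N/\DER_{\beta_k})$. Matching this to $v_k/N_{\mathrm{eff}}$ gives $N_{\mathrm{eff}}^{\mathrm{design}}(\beta_k)=N/d$ with $d=\DER_{\beta_k}$. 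I would then add the two checks quoted in the main text, both from \Cref{thm:fixed-effect}:
\begin{itemize}
\item For $R_k=0$, $d=\DEFF$, which gives the Kish value $N/\DEFF$.
\item For $R_k=B$, $d=\DEFF(1-B)<\DEFF$, which gives $N/[\DEFF(1-B)]>N/\DEFF$; the excess reflects the prior's contribution.
\end{itemize}

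\emph{Random effects.} Work conditionally on $\mu$ (or at large $J$, where $\kappa(J)\to1$ by Corollary~\ref{cor:convergence-rate}). Split the posterior precision of $\theta_j$ as $a_j+\tau$, where $a_j=\widetilde n_j/\sigma_e^2$ is the data part and $\tau$ is the prior part.
\begin{itemize}
\item By Step~6 of the Theorem~2 proof, the target variance is $d_j/(a_j+\tau)^2$ with $d_j=\DEFF_j a_j$.
\item The data component therefore carries design variance $\DEFF_j a_j$ instead of the model value $a_j$.
\item Equivalently, a data precision of $a_j/\DEFF_j$ would reproduce this variance under the model scaling.
\item Since $a_j\propto \widetilde n_j$, this corresponds to $n_j/\DEFF_j$ units.
\end{itemize}
This is consistent with $\DER_j^{\mathrm{cond}}=B_j\DEFF_j$: the factor $B_j$ is the data share of precision, and $\DEFF_j$ deflates only that share. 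For unequal $(B_k,\DEFF_k)$, I would point to \eqref{eq:unbalanced-kappa} for the exact ensemble correction rather than claim exactness.

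\emph{PSIS remark and main obstacle.} The PSIS remark is interpretive. I would note that PSIS ESS depends only on the MCMC autocorrelation and importance ratios and never on $J_{\mathrm{c}}$, so it carries no design information.

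The only real obstacle is the random-effect clause. There the $\DER$ mixes prior and data precision, so $n_j/\beta_jB_j\DEFF_j$ would be the wrong quantity. One has to argue explicitly that the design inflation applies only to the $a_j$ component, and Step~6 supplies exactly that.
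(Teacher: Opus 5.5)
Your proposal is correct and follows the paper's own reasoning. The paper gives no separate proof: it treats $N_{\mathrm{eff}}^{\mathrm{design}}(\beta_k)=N/\DER_{\beta_k}$ as a definition, checks it against the $R_k=0$ and $R_k=B$ cases of Theorem~1, and reads the random-effect clause off $\DER_j^{\mathrm{cond}}=B_j\DEFF_j$ with $B_j$ the data share of precision, as you do via Step~6.

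Three soft spots are worth tidying:
\begin{itemize}
\item Your claim that $[\mathbf{S}_\beta^{-1}]_{kk}$ scales like $1/N$ is false in between-group directions, whose contribution $\sum_j\tau B_j\bar{\mathbf{x}}_j\bar{\mathbf{x}}_j^{\t}$ saturates in $n_j$. You do not need it, since $N/d=N[\SigMCMC]_{kk}/[\Vtarget]_{kk}$ is the definition.
\item The precision $a_j/\DEFF_j$ reproduces the design variance $\DEFF_j a_j/a_j^2=\DEFF_j/a_j$ of the data-only estimate, not the score variance $\DEFF_j a_j$ that your wording refers to.
\item This precision gives $\widetilde n_j/\DEFF_j$. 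Writing $n_j/\DEFF_j$ then needs either $\widetilde n_j=n_j$, which fails in general under the global convention, or reading $\DEFF_j$ as the Kish within-group design effect.
\end{itemize}
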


\subsubsection{Threshold conditions}

\begin{corollary}[Parameter-type-specific thresholds]
\label{cor:thresholds}
In the balanced, common-$\DEFF$ case, for a flagging threshold $c_0$
(Algorithm~1 of the main article):
\begin{itemize}
  \item \textbf{Fixed effects (within-group)}: We correct when $\DEFF > c_0$.
  \item \textbf{Fixed effects (between-group)}: We correct when $\DEFF > c_0/(1-B)$, a higher threshold.
  \item \textbf{Random effects (large-$J$)}: We correct when $B \cdot \DEFF > c_0$.
  \item \textbf{Random effects (finite-$J$)}: We correct when $B \cdot \DEFF > c_0/\kappa(J)$.
\end{itemize}
Under heterogeneous designs these closed-form trigger points are
heuristics; we always use $\DER_k > c_0$ with the
$\DER$ computed from the exact matrix formula.
\end{corollary}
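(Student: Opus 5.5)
The plan is to substitute the closed-form $\DER$ expressions already established for each parameter type into the flagging rule $\DER_k > c_0$ of Algorithm~\ref{alg:ccc}, and then solve each inequality for $\DEFF$ (or for $B\cdot\DEFF$). The main point to check is that every factor we divide by is strictly positive, so that no inequality changes direction. Throughout we work under the balanced, common-$\DEFF$ hypotheses of \Cref{thm:fixed-effect,thm:random-effect}. Condition (R1) then gives $B\in(0,1)$, and we take $J\ge 2$.

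For the fixed effects we use \Cref{thm:fixed-effect} directly.
\begin{itemize}
\item For a pure within-group covariate, case~(i) gives $R_k=0$. Hence $\DER_{\beta_k}=\DEFF$, and the flag condition is literally $\DEFF>c_0$.
\item For a pure between-group covariate or the intercept, case~(ii) gives $\DER_{\beta_k}=\DEFF(1-B)$. Since $1-B>0$ by (R1), dividing through yields $\DEFF>c_0/(1-B)$.
\item Because $0<1-B<1$, we have $c_0/(1-B)>c_0$. This confirms that the between-group threshold is the higher one.
\end{itemize}

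For the random effects at finite $J$, \Cref{thm:random-effect}(b) gives $\DER_\theta=B\cdot\DEFF\cdot\kappa(J)$ with $\kappa(J)=1-1/[J(1-B)+B]$. By the argument of Corollary~\ref{cor:upper-bound}, $J(1-B)+B>1$ when $J\ge2$ and $B\in(0,1)$, so $0<\kappa(J)<1$. Dividing by $\kappa(J)$ then gives exactly $B\cdot\DEFF>c_0/\kappa(J)$.

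The large-$J$ clause is the only step that needs care, because it is a limiting statement rather than an identity. I would make it precise using Corollary~\ref{cor:monotonicity}, which says $\DER_\theta$ is strictly increasing in $J$ and converges to $B\cdot\DEFF$ (Corollary~\ref{cor:upper-bound}). The argument splits into two directions.
\begin{itemize}
\item If $B\cdot\DEFF>c_0$, then $\DER_\theta>c_0$ for all sufficiently large $J$.
\item If $B\cdot\DEFF\le c_0$, then $\DER_\theta<B\cdot\DEFF\le c_0$ for every finite $J$, so the parameter is never flagged.
\end{itemize}
Together these show that $B\cdot\DEFF>c_0$ is the correct asymptotic trigger. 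Equivalently, the large-$J$ condition is the limit of the finite-$J$ condition, since $c_0/\kappa(J)\downarrow c_0$.

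Finally, the closing caveat needs no proof. It simply records that, outside the balanced common-$\DEFF$ class, none of these closed forms is exact, so the rule falls back to $\DER_k>c_0$ computed from the full matrix formula. The heterogeneous example in Remark~\ref{rem:heterogeneous} shows that the closed-form triggers can then be wrong.
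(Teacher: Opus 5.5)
Your proposal is correct and takes the same route the paper intends: the paper gives no separate proof for this corollary because it is just the closed forms of \Cref{thm:fixed-effect,thm:random-effect} substituted into the rule $\DER_k > c_0$, followed by division by the positive factors $1-B$ and $\kappa(J)$. Your two-sided treatment of the large-$J$ clause, using $0<\kappa(J)<1$ and $\kappa(J)\uparrow 1$, is sound and is a more careful version of what the paper treats as the approximation $\DER_\theta \approx B\cdot\DEFF$.
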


\subsubsection{Unbalanced case properties}

\begin{corollary}
\label{cor:unbalanced}
For the general (unbalanced) formula of \Cref{app:unbalanced},
eq.~\eqref{eq:unbalanced-der}:
\begin{enumerate}[label=(\roman*)]
  \item $\DER_j \geq 0$ for all $j$.
  \item If $B_j = 0$: $\DER_j = 0$.
  \item If $J = 1$: $\DER_1 = 0$ ($\theta_1$ confounded with $\mu$).
  \item The DER for group $j$ depends on the design effects and shrinkage of all other groups through $\sum_{k \neq j} \DEFF_k B_k(1-B_k)$.
\end{enumerate}
\end{corollary}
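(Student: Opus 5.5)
We prove Corollary~\ref{cor:unbalanced} directly from the exact expression \eqref{eq:unbalanced-der}, treating each item in turn. All quantities are algebraic in $(B_k,\DEFF_k)$, so no new matrix computation is needed. Throughout we use (R1), which gives $B_k\in[0,1)$, and $\DEFF_k>0$ from (C2). At the boundary $B_j=0$ we read the formula as the continuous extension of the exact matrix ratio.

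\textbf{Item (i), nonnegativity.} The numerator of \eqref{eq:unbalanced-der} is $B_j$ times a sum of terms $\DEFF_j(1-B_j)\Sigma_{-j}^2\ge0$ and $B_j\DEFF_kB_k(1-B_k)\ge0$. The denominator is $\Sigma_B\Gamma_j$, where $\Gamma_j=B_j+\Sigma_{-j}(1-B_j)$. Whenever the ratio is defined, i.e.\ $\Sigma_B>0$, we have $\Gamma_j>0$: either $B_j>0$, or else $\Gamma_j=\Sigma_{-j}=\Sigma_B>0$. Hence $\DER_j\ge0$.

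Alternatively, and more robustly, $\DER_j$ is a ratio of a quadratic form in a positive semidefinite meat to a positive posterior variance. This argument also covers degenerate cases.

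\textbf{Item (ii), $B_j=0$.} The leading factor $B_j$ kills the numerator. The denominator is $\Sigma_B\Gamma_j=\Sigma_{-j}^2>0$ provided some other group has positive shrinkage. Hence $\DER_j=0$.

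\textbf{Item (iii), $J=1$.} The main obstacle is that $\Sigma_{-1}=0$ and the sum over $k\ne1$ is empty, so the numerator vanishes. The denominator is $\Sigma_B\Gamma_1=B_1^2>0$, so $\DER_1=0$. We will also sanity-check this against Step~3 of the Theorem~2 proof. There $g_1[0]+g_1[1]=(1-B_1)\Sigma_{-1}/\Sigma_B=0$, so the sandwich entry is exactly zero. This is the statement that the score total loads on $\mu$ and $\theta_1$ along $\mathbf f_1$, while $\theta_1$'s row of $\Hobs^{-1}$ is orthogonal to that direction. That is the confounding interpretation.

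\textbf{Item (iv), cross-group dependence.} We will differentiate \eqref{eq:unbalanced-der} with respect to $\DEFF_k$ for $k\neq j$. The result is $B_j^2B_k(1-B_k)/(\Sigma_B\Gamma_j)$, which is strictly positive when $B_j,B_k\in(0,1)$. We will also note that $\Sigma_{-j}$ and $\Sigma_B$ involve every $B_k$. The heterogeneous example of Section~S-B.3 serves as a concrete witness.
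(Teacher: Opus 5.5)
Your proposal is correct and follows the paper's own route: you read each item directly off \eqref{eq:unbalanced-der}, using the nonnegative numerator terms, the vanishing leading factor $B_j$, $\Sigma_{-1}=0$ with the empty sum, and the explicit $\sum_{k\neq j}$ term. Your extra checks that the denominator $\Sigma_B\Gamma_j$ stays positive in (i)--(iii), and the derivative $\partial\DER_j/\partial\DEFF_k = B_j^2B_k(1-B_k)/(\Sigma_B\Gamma_j)$ for (iv), make explicit what the paper's one-line proof leaves implicit.
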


\begin{proof}
(i) Both numerator terms are non-negative. (ii) $B_j = 0$ makes the leading factor zero. (iii) $\Sigma_{-1} = 0$ and the sum over $k \neq j$ is empty. (iv) The sum in the formula gives the dependence on other groups.
\end{proof}

\subsection{Multivariate Matrix DER}
\label{sec:proposition3}

\begin{proposition}[Multivariate matrix DER]
\label{prop:matrix-der}
Consider a hierarchical model with $q$-dimensional random effects $\btheta_j \sim N_q(\mathbf{0}, \bSigma_\theta)$ under conditions (R1)--(R3).

\textbf{(a) Conditional matrix DER (exact).} For group $j$,
\begin{equation*}
  \mathbf{DER}_j^{\mathrm{cond}} = \tilde{\mathbf{A}}_j^{-1} \tilde{\mathbf{J}}_j = \mathbf{B}_j \cdot \mathbf{DEFF}_j
\end{equation*}
where
\begin{itemize}
  \item $\mathbf{B}_j = \bSigma_\theta(\bSigma_\theta + \mathbf{V}_j)^{-1}$ is the matrix shrinkage factor with eigenvalues in $(0,1)$
  \item $\mathbf{DEFF}_j = \tilde{\mathbf{H}}_{j,\mathrm{lik}}^{-1} \tilde{\mathbf{J}}_j$ is the matrix design effect
  \item $\tilde{\mathbf{A}}_j = \tilde{\mathbf{H}}_{j,\mathrm{lik}} + \bSigma_\theta^{-1}$ and $\mathbf{V}_j = \tilde{\mathbf{H}}_{j,\mathrm{lik}}^{-1}$
\end{itemize}

\textbf{(b) Spectral decomposition.} In the canonical multivariate normal model with scalar $\DEFF_j$, we express the matrix DER in the eigenbasis of $\bSigma_\theta$:
\begin{equation*}
  \boldsymbol{\mathcal{D}}_j = \mathbf{P} \, \diag(\delta_{j,1}, \ldots, \delta_{j,q}) \, \mathbf{P}^\t
\end{equation*}
where each $\delta_{j,r}$ follows the general scalar formula in \Cref{app:unbalanced} with direction-specific shrinkage $B_{j,r} = a_j \lambda_r / (a_j \lambda_r + 1)$.

\textbf{(c) Large-$J$ limit.} As $J \to \infty$, $\boldsymbol{\mathcal{D}}_j^{\mathrm{cond}} = \DEFF_j \cdot \mathbf{B}_j$ with eigenvalues $\DEFF_j \cdot B_{j,r}$.

\textbf{(d) Scalar reduction.} Setting $q = 1$ recovers $\DER_j^{\mathrm{cond}} = B_j \cdot \DEFF_j$ (Theorem~2 of the main article).
\end{proposition}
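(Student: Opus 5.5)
The plan is to reduce the $q$-dimensional problem to the scalar computations of \Cref{sec:theorem2}, working blockwise with $q\times q$ matrices in place of scalars. For part~(a) I condition on the fixed effects, so that the only parameter block left is $\btheta_j$. The penalized Hessian block is then $\tilde{\mathbf{A}}_j=\tilde{\mathbf{H}}_{j,\mathrm{lik}}+\bSigma_\theta^{-1}$; the prior contributes curvature but, as in Corollary~\ref{cor:boundary}(v), no scores. By (R2) the meat block is the single group term $\tilde{\mathbf{J}}_j$.

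The conditional matrix DER is the target covariance $\tilde{\mathbf{A}}_j^{-1}\tilde{\mathbf{J}}_j\tilde{\mathbf{A}}_j^{-1}$ multiplied on the right by the inverse posterior covariance $\tilde{\mathbf{A}}_j$. This gives exactly $\tilde{\mathbf{A}}_j^{-1}\tilde{\mathbf{J}}_j$. I then insert $\tilde{\mathbf{H}}_{j,\mathrm{lik}}\tilde{\mathbf{H}}_{j,\mathrm{lik}}^{-1}$ to obtain
\[
\tilde{\mathbf{A}}_j^{-1}\tilde{\mathbf{H}}_{j,\mathrm{lik}}\cdot\mathbf{DEFF}_j ,
\]
and verify the identity $(\tilde{\mathbf{H}}+\bSigma_\theta^{-1})^{-1}\tilde{\mathbf{H}}=\bSigma_\theta(\bSigma_\theta+\mathbf{V}_j)^{-1}$. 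This follows by writing $\tilde{\mathbf{H}}+\bSigma_\theta^{-1}=\bSigma_\theta^{-1}(\bSigma_\theta+\mathbf{V}_j)\tilde{\mathbf{H}}$ and inverting.

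The eigenvalues of $\mathbf{B}_j$ lie in $(0,1)$ because $\mathbf{B}_j$ is similar to $\bSigma_\theta^{1/2}(\bSigma_\theta+\mathbf{V}_j)^{-1}\bSigma_\theta^{1/2}$. That matrix is symmetric, and it satisfies $0\prec\cdot\prec\mathbf{I}$ since $\mathbf{V}_j\succ0$, which uses (R1) in its matrix form.

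For part~(b) I take the canonical model: the likelihood information is $\tilde{\mathbf{H}}_{j,\mathrm{lik}}=a_j\mathbf{I}_q$ (or it commutes with $\bSigma_\theta$), and the meat is $\DEFF_j$ times the likelihood information. Writing $\bSigma_\theta=\mathbf{P}\Lambda\mathbf{P}^\t$ and rotating every parameter block by $\mathbf{P}$ makes $\Hobs$ and $\Jclust$ block-diagonal across the $q$ eigendirections, with each fixed-effect grand-mean coordinate attached to its direction. Each direction $r$ is then an independent copy of the arrow-structured problem of \Cref{sec:theorem2}, with $\tau_r=1/\lambda_r$ and shrinkage $B_{j,r}=a_j\lambda_r/(a_j\lambda_r+1)$. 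I then apply \eqref{eq:unbalanced-der} direction by direction and rotate back, which gives $\mathbf{P}\diag(\delta_{j,r})\mathbf{P}^\t$.

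Parts~(c) and~(d) then follow directly. Part~(c) uses $\kappa_j(J)\to1$ in each direction, together with $\mathbf{B}_j=\mathbf{P}\diag(B_{j,r})\mathbf{P}^\t$. Part~(d) sets $q=1$ in part~(a).

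I expect the main obstacle to be part~(b): justifying that the problem decouples across eigendirections. This needs $\tilde{\mathbf{H}}_{j,\mathrm{lik}}$ and $\tilde{\mathbf{J}}_j$ to commute with $\bSigma_\theta$, which is precisely why the statement restricts to the canonical model with scalar $\DEFF_j$. I would state this commutation as the operative hypothesis. I also need to check that the fixed-effect coupling is confined to the $q$ grand means, one per direction, so that the Schur complement in Step~1 of \Cref{sec:theorem2} is diagonal in the same basis.
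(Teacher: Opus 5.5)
Your proposal is correct and follows the paper's proof. You compute the conditional matrix DER as $\tilde{\mathbf{A}}_j^{-1}\tilde{\mathbf{J}}_j\tilde{\mathbf{A}}_j^{-1}\cdot\tilde{\mathbf{A}}_j$ and factor it by inserting $\tilde{\mathbf{H}}_{j,\mathrm{lik}}\tilde{\mathbf{H}}_{j,\mathrm{lik}}^{-1}$. You reduce the canonical model to $q$ independent scalar arrow problems in the eigenbasis of $\bSigma_\theta$. Your similarity argument for the eigenvalues of $\mathbf{B}_j$ and your explicit commutation hypothesis fill in what the paper only asserts.

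One small fix: with your factorization $\bSigma_\theta^{-1}(\bSigma_\theta+\mathbf{V}_j)\tilde{\mathbf{H}}_{j,\mathrm{lik}}$, the $\tilde{\mathbf{H}}_{j,\mathrm{lik}}$ does not cancel on inverting. You get $\mathbf{V}_j(\bSigma_\theta+\mathbf{V}_j)^{-1}\bSigma_\theta\tilde{\mathbf{H}}_{j,\mathrm{lik}}$, which needs an extra push-through step. Factor instead as $\tilde{\mathbf{H}}_{j,\mathrm{lik}}(\bSigma_\theta+\mathbf{V}_j)\bSigma_\theta^{-1}$, which is the paper's $(\mathbf{I}_q+\mathbf{V}_j\bSigma_\theta^{-1})^{-1}$ route.
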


\subsubsection{Proof}

\paragraph{Conditional factorization.}
The conditional DER matrix is
\begin{equation*}
  \mathbf{DER}_j^{\mathrm{cond}} = (\tilde{\mathbf{V}}_j)(\tilde{\mathbf{A}}_j^{-1})^{-1} = \tilde{\mathbf{A}}_j^{-1} \tilde{\mathbf{J}}_j
\end{equation*}

Inserting $\tilde{\mathbf{H}}_{j,\mathrm{lik}} \tilde{\mathbf{H}}_{j,\mathrm{lik}}^{-1} = \mathbf{I}_q$ gives
\begin{equation*}
  \tilde{\mathbf{A}}_j^{-1} \tilde{\mathbf{J}}_j = \underbrace{\tilde{\mathbf{A}}_j^{-1} \tilde{\mathbf{H}}_{j,\mathrm{lik}}}_{\mathbf{B}_j} \cdot \underbrace{\tilde{\mathbf{H}}_{j,\mathrm{lik}}^{-1} \tilde{\mathbf{J}}_j}_{\mathbf{DEFF}_j}
\end{equation*}

The factorization is exact. \qed

\paragraph{Matrix shrinkage identity.}
We use the identity
\begin{equation*}
  \mathbf{B}_j = (\mathbf{I}_q + \mathbf{V}_j \bSigma_\theta^{-1})^{-1} = \bSigma_\theta(\bSigma_\theta + \mathbf{V}_j)^{-1}
\end{equation*}

The matrices $\bSigma_\theta$ and $\mathbf{V}_j$ need not commute, since
\begin{equation*}
  \mathbf{I}_q + \mathbf{V}_j \bSigma_\theta^{-1} = (\bSigma_\theta + \mathbf{V}_j)\bSigma_\theta^{-1}
\end{equation*}
Taking inverses gives $[(\bSigma_\theta + \mathbf{V}_j)\bSigma_\theta^{-1}]^{-1} = \bSigma_\theta(\bSigma_\theta + \mathbf{V}_j)^{-1}$ by $(\mathbf{P}\mathbf{Q})^{-1} = \mathbf{Q}^{-1}\mathbf{P}^{-1}$. \qed

This identifies $\mathbf{B}_j$ with the EBLUP shrinkage matrix
$\bgamma_j$ of small area estimation \citep{RaoMolina2015} for all
models.

\paragraph{Spectral decomposition.}
For the canonical multivariate normal model ($\by_{ij} \sim N_q(\bmu + \btheta_j, \sigma_e^2 \mathbf{I}_q)$), all matrices share the eigenbasis of $\bSigma_\theta$.\newline Let $\bSigma_\theta = \mathbf{P} \, \diag(\lambda_1, \ldots, \lambda_q) \, \mathbf{P}^\t$ be the eigendecomposition.

In this basis, the $r$-th eigendirection has
\begin{itemize}
  \item Data precision: $a_j$ (isotropic)
  \item Prior precision: $1/\lambda_r$
  \item Shrinkage: $B_{j,r} = a_j \lambda_r / (a_j \lambda_r + 1)$
\end{itemize}

Each direction is an independent scalar problem. Its spectral DER $\delta_{j,r}$ follows the general scalar formula in \Cref{app:unbalanced} with parameters $(B_{j,r}, \Sigma_{B,r}, \Sigma_{-j,r}, \Gamma_{j,r})$.

\paragraph{Spectral ordering.} $\lambda_1 \geq \cdots \geq \lambda_q \Rightarrow B_{j,1} \geq \cdots \geq B_{j,q} \Rightarrow \delta_{j,1} \geq \cdots \geq \delta_{j,q}$. Directions with larger prior variance are more design-sensitive.

The ordering $\delta_{j,1} \geq \cdots \geq \delta_{j,q}$ holds when all $B_{j,r}$ lie on the ascending branch of the inverted U ($B_{j,r} < B^* = \sqrt{J}/(\sqrt{J}+1)$), as is typical with moderate shrinkage and $J \geq 10$. If some $B_{j,r}$ exceeds $B^*$, the ordering can fail because the DER decreases with $B$ on the descending branch (Corollary~\ref{cor:boundary}(iv)).

\paragraph{Coordinate DER as weighted spectral average.} For the $r$-th coordinate in the original basis,
\begin{equation*}
  \DER_{j,r}^{\mathrm{coord}} = \sum_{s=1}^q w_{rs} \, \delta_{j,s}
\end{equation*}
with $w_{rs} \geq 0$, $\sum_s w_{rs} = 1$ (weights from the eigenvector matrix $\mathbf{P}$). The coordinate DER is bounded between $\delta_{j,\min}$ and $\delta_{j,\max}$ by the Rayleigh quotient.

\paragraph{Conditional-to-marginal gap.}
The Schur complement also gives the gap from conditioning on $\bbeta$ and $\btheta_{-j}$:
\begin{equation*}
  \|\mathbf{DER}_j^{\mathrm{marg}} - \mathbf{DER}_j^{\mathrm{cond}}\| / \|\mathbf{DER}_j^{\mathrm{cond}}\| = O(1/J)
\end{equation*}

The proof follows the scalar case, replacing scalars with $q \times q$ Schur complement blocks. \qed

\subsubsection{Practical Guidance}

\begin{table}[H]
\centering
\small
\begin{tabularx}{\textwidth}{@{}Xl@{}}
\hline
Scenario & Matrix structure needed? \\
\hline
Independent random intercepts ($q=1$) & No---use the scalar formula \\
Uncorrelated random effects ($\bSigma_\theta$ diagonal) & No---scalar DERs \\
Correlated random intercept + slopes & \textbf{Yes}---eigenvectors define principal directions \\
Large condition number $\kappa(\bSigma_\theta) > 5$ & \textbf{Yes}---spectral DERs may differ widely \\
\hline
\end{tabularx}
\end{table}

\subsection{Why the DER Excludes Hyperparameters (Tier III)}
\label{sec:remark1-elaboration}

We explain the exclusion of hyperparameters stated in Section~3.3 of
the main article.

\begin{remark}[DER inapplicability for hyperparameters]
\label{rem:hyperparameter}
The DER does not apply to hyperparameters $\boldsymbol{\psi}$ governing the random-effect distribution. Its sandwich variance estimator does not target inference for $\boldsymbol{\psi}$.
\end{remark}

\subsubsection{(a) Score function mismatch}

The hyperparameter $\sigma_\theta^2$ is identified from variation among $\{\theta_1, \ldots, \theta_J\}$. Its score is
\begin{equation*}
  \frac{\partial}{\partial \sigma_\theta^2} \log \pi(\theta_j \mid \sigma_\theta^2) = -\frac{1}{2\sigma_\theta^2} + \frac{\theta_j^2}{2\sigma_\theta^4}
\end{equation*}
and depends on the group-level quantity $\theta_j$. The cluster-robust meat $\Jclust$ aggregates observation-level pseudo-likelihood scores $\mathbf{s}_i = \tilde{w}_i \nabla_{\boldsymbol{\phi}} \log p(y_i \mid \bbeta, \theta_{j[i]})$, which contain no gradient with respect to $\boldsymbol{\psi}$. Our declared conditional parameter vector is $\boldsymbol{\phi} = (\bbeta^\t, \btheta^\t)^\t$; it excludes $\boldsymbol{\psi}$, so the DER does not assess hyperparameter uncertainty.

\subsubsection{(b) Effective sample size mismatch}

The effective sample size for $\bbeta$ and $\btheta$ is $O(N)$, the total number of observations, adjusted for design effects. For $\boldsymbol{\psi}$, it is $O(J)$, the number of groups: we estimate $\sigma_\theta^2$ from the dispersion of $J$ group effects. The sandwich accounts for within-PSU clustering of $N$ observations, but not the sampling variability among $J$ groups.

\subsubsection{(c) Prior contribution structure}

For inference on $\boldsymbol{\psi}$, the random-effect distribution $\pi(\theta_j \mid \boldsymbol{\psi})$ is both a prior on $\theta_j$ and a sampling model for $\boldsymbol{\psi}$. Our sandwich correction uses the data likelihood $p(y_i \mid \bbeta, \theta_j)$; it does not use the group-level model $\theta_j \mid \boldsymbol{\psi}$ as a likelihood for $\boldsymbol{\psi}$.

\subsubsection{(d) Appropriate alternative}

Hyperparameter uncertainty under informative sampling requires methods
beyond this diagnostic. Replicate-weight methods for the empirical
Bayes estimator of $\sigma_\theta^2$ offer one approach
\citep{SavitskyGershunskaya2026}.

\subsubsection{Hyperparameter gradient}

The log pseudo-posterior gradient for $\boldsymbol{\psi}$ is
\begin{equation*}
  \nabla_{\boldsymbol{\psi}} \ell^* = \underbrace{\nabla_{\boldsymbol{\psi}} \ell_{\mathrm{RE}}(\btheta \mid \boldsymbol{\psi})}_{\text{depends on } \theta_j, \text{ not } y_{ij}} + \underbrace{\nabla_{\boldsymbol{\psi}} \ell_{\mathrm{hyper}}(\boldsymbol{\psi})}_{\text{hyperprior}}
\end{equation*}

Neither term involves $\ell_{\mathrm{lik}}$ or contributes to observation-level scores. These scores therefore provide no direct contribution for $\boldsymbol{\psi}$.

\subsection{Reparameterization and DERs for Derived Quantities}
\label{sec:remark2-elaboration}

We give the delta-method extension summarized in Remark~4 and
Equation~(3.6) of the main article.

\begin{remark}[Reparameterization non-invariance and estimand-based DER]
\label{rem:reparameterization}
\textbf{(a) Non-invariance.} The diagonal DER (Definition~1 of the main article) is defined on the natural parameterization $\boldsymbol{\phi} = (\bbeta^\t, \theta_1, \ldots, \theta_J)^\t$. Under a smooth reparameterization $\boldsymbol{\psi} = h(\boldsymbol{\phi})$, the transformed quantities $\Vsand^{(\psi)} = \nabla h \cdot \Vsand \cdot \nabla h^\t$ and $\mathbf{H}^{-1,(\psi)} = \nabla h \cdot \mathbf{H}^{-1} \cdot \nabla h^\t$ yield a DER:
\begin{equation*}
  \DER_{\psi_k} = \frac{[\nabla h \cdot \Vsand \cdot \nabla h^\t]_{kk}}{[\nabla h \cdot \mathbf{H}^{-1} \cdot \nabla h^\t]_{kk}}
\end{equation*}
that generally differs from $\DER_{\phi_k}$. Extracting diagonal elements does not commute with congruence transformations: $\diag(\mathbf{P} \mathbf{A} \mathbf{P}^\t) \neq \mathbf{P}\, \diag(\mathbf{A})\, \mathbf{P}^\t$ for general $\mathbf{P}$.

\textbf{(b) Delta-method extension.} For a scalar derived quantity $\psi = g(\boldsymbol{\phi})$, we define
\begin{equation*}
  \DER(\psi) = \DER(g) = \frac{\nabla g(\hat{\boldsymbol{\phi}})^\t \, \Vsand \, \nabla g(\hat{\boldsymbol{\phi}})}{\nabla g(\hat{\boldsymbol{\phi}})^\t \, \mathbf{H}^{-1} \, \nabla g(\hat{\boldsymbol{\phi}})}
\end{equation*}
The ratio is well defined whenever $\nabla g^\t \mathbf{H}^{-1} \nabla g > 0$. It compares the design-based asymptotic variance of $g(\hat{\boldsymbol{\phi}})$ with its model-based variance.
This is eq.~(3.6) of the main article.

\textbf{(c) Recovery of diagonal DER.} The diagonal DER is the special case $g(\boldsymbol{\phi}) = \phi_p$, where $\nabla g = \mathbf{e}_p$ is the $p$-th standard basis vector.

\textbf{(d) Practical guidance.} We use the diagonal DER for the model parameters ($\bbeta$, $\btheta$) in the four-tier classification and the compute-classify-correct algorithm (Algorithm~1 of the main article). For derived quantities, such as predicted probabilities, area means, or contrasts, we use the delta-method DER.
\end{remark}

\subsubsection{Proof of Non-Invariance}

Let $\mathbf{P} = \nabla h$ be the Jacobian of the reparameterization. If $\mathbf{P}$ is not a generalized permutation matrix, $h$ mixes parameters, and the diagonal of $\mathbf{P} \mathbf{A} \mathbf{P}^\t$ depends on the off-diagonal elements of $\mathbf{A}$:
\begin{equation*}
  [\mathbf{P} \mathbf{A} \mathbf{P}^\t]_{kk} = \sum_{r,s} P_{kr}\, A_{rs}\, P_{ks}
\end{equation*}

The off-diagonal entries of $\Vsand$ and $\mathbf{H}^{-1}$ generally differ because the sandwich includes cross-terms from the meat. The ratio of the transformed $k$-th diagonal entries therefore differs from the original ratio. \qed

\subsubsection{When Non-Invariance Matters in Practice}

For the four-tier classification and the compute-classify-correct
algorithm (Algorithm~1 of the main article), the diagonal DER on the
natural parameterization is sufficient:
\begin{itemize}
  \item \textbf{Fixed effects} $\bbeta$: The regression coefficients are the natural inferential targets.
  \item \textbf{Random effects} $\btheta$: The group-level deviations are natural model parameters.
  \item \textbf{Hyperparameters} $\boldsymbol{\psi}$: DER is inapplicable (Remark~\ref{rem:hyperparameter}).
\end{itemize}

We use the delta-method extension for the following quantities and comparisons:
\begin{enumerate}
  \item \textbf{Predicted probabilities}: The first-order delta method gives the same ratio for $\DER(\logit p_j)$ and $\DER(\hat{p}_j)$, because the logistic derivative cancels. The DER for a linear predictor can nevertheless differ from those of its intercept and random-effect components.
  \item \textbf{Cross-model comparisons}: Comparing DER values across models that use different link functions or different parameterizations requires the estimand-based DER to ensure a common measurement scale.
  \item \textbf{Area means}: $\DER(\eta_j = \mu + \theta_j)$ is a distinct quantity from $\DER_\mu$ or $\DER_{\theta_j}$.
\end{enumerate}

We choose the diagonal or delta-method DER according to the inferential target.

\subsection{The DER Among Design-Sensitivity Tools}
\label{sec:tool-comparison}

\Cref{tab:comparison} compares the $\DER$ with design effects,
sandwich correction, and shrinkage factors. The $\DER$ combines these
tools in a per-parameter diagnostic.

\begin{table}[htbp]
  \centering
  \caption{Comparison of existing tools and the $\DER$ framework.}
  \label{tab:comparison}
  \small
  \begin{tabularx}{\textwidth}{@{}>{\raggedright\arraybackslash}p{2.8cm}
    >{\centering\arraybackslash}p{2.0cm}
    >{\centering\arraybackslash}p{2.6cm}
    >{\centering\arraybackslash}p{2.0cm}
    >{\centering\arraybackslash}X@{}}
    \toprule
    & \textbf{Kish $\DEFF$}
    & \textbf{Sandwich $+$ joint correction}
    & \textbf{Shrinkage $B_j$}
    & \textbf{$\DER$ (this paper)} \\
    \midrule
    Per-parameter?        & Yes  & No (blanket) & Yes  & Yes \\
    Accounts for priors?  & No   & No           & Yes  & Yes \\
    Accounts for design?  & Yes  & Yes          & No   & Yes \\
    Decomposition?        & ---  & ---          & ---
      & $\DEFF \times (1 - B)$\textsuperscript{a} \\
    Actionable threshold? & No   & ---          & No
      & Yes ($c_0$; four-tier) \\
    Cost                  & $O(d)$ & $O(d^3)$   & $O(J)$ & $O(d)$ \\
    \bottomrule
  \end{tabularx}
  \vspace{2pt}
  {\footnotesize \textsuperscript{a}Representative decomposition for
    between-group fixed effects, in the balanced, common-$\DEFF$ case
    of Theorems~1 and~2 of the main article.  Within-group: $\DER = \DEFF$;
    random effects: $\DER = B \cdot \DEFF \cdot \kappa(J)$.
    Heterogeneous-ensemble forms: Section~S-B.3.\par}
\end{table}

%% file: sm_b.tex

\section{Simulation Details}
\label{app:sim-additional}

We give the sampling, estimation, and diagnostic procedures for the
simulation in Section~4 of the main article.
The sampling mechanism specifies finite populations, inclusion
probabilities, and realized samples. Survey weights are exact inverse
first-order inclusion probabilities, and design effects are measured
on the realized samples.

\subsection{Data-generating mechanism and sampling design}
\label{sec:sim-details-smb}

\paragraph{Finite populations.}
In each replication, we generate a finite population from the hierarchical
logistic model of eq.~(4.1) of the main article, with
$(\beta_0, \beta_1, \beta_2) = (-0.5, 0.5, 0.3)$, a purely
within-group covariate $x^{(\mathrm{W})}$, a group-level covariate
$x^{(\mathrm{B})}_j$, and $\sigma^2_\theta$ set by a latent-scale
intraclass correlation of $0.15$ (fixed across all cells).

\paragraph{Two-stage sampling, two structural arms.}
We then draw a two-stage sample under one of two structural arms,
which differ in whether the design clusters coincide with the model
groups:
\begin{itemize}[nosep]
  \item \emph{Sampled groups} ($c = j$): stage one selects $J$ of
    $M = 200$ population groups by \emph{randomized systematic PPS},
    with size measures proportional to
    $\exp(\gamma_{\mathrm{grp}} \cdot \theta_g + \text{noise})$, so
    that group inclusion probabilities depend on the random effects;
    stage two selects units within sampled groups by
    \emph{Poisson sampling} with inclusion probabilities proportional
    to $\exp(\gamma_{\mathrm{unit}} \cdot y + \text{noise})$, a
    response-dependent selection. The design PSU is the model group.
  \item \emph{PSUs within groups} ($c \neq j$): all $J$ groups are
    observed by construction, as in the NSECE application, where every
    state appears. Within each group, we select $4$ of $12$ population
    PSUs by PPS, with size measures depending on a PSU-level outcome
    shock that the analysis model deliberately omits. We Poisson-sample
    units within selected PSUs.
    The PSU stage is stratified by group. We therefore use the
    stratified, centered, DF-corrected meat of eq.~(2.5) of the main
    article. The model-group and design-PSU targets differ, and the
    true parameter values are known.
\end{itemize}

\paragraph{Exact weights and the declared convention.}
The analysis weights are the exact inverse first-order inclusion
probabilities of the two-stage design (the product of the
stage-specific probabilities), rescaled to the declared \emph{global
unit-mean} convention of Section~2.2 of the main article.
We retain empty second-stage Poisson samples as drawn.
Redrawing until nonempty or forcing inclusion would change the
effective first-order probabilities from the declared $\pi$'s.
The design-PSU cluster universe includes all \emph{selected} PSUs.
A selected PSU with no sampled units has an estimated score total of
exactly zero and enters the stratified meat as a zero-total cluster.
It contributes to within-stratum centering and the
$C_h/(C_h - 1)$ correction.
At $\bar{n}_j = 10$ in the $c \neq j$ arm, $6$--$7\%$ of selected PSUs
are empty and essentially every replication contains at least one,
so this convention matters for the declared design-PSU target.
The replication code records the full cluster universe;
\texttt{svyder}'s \texttt{cluster\_strata} argument implements the
same convention.

We regenerate a replication only when an entire model group has no
sampled units, using a deterministically offset seed. This occurs with
probability of order $10^{-3}$ or below per replicate, only in the
smallest $\bar{n}_j = 10$ cells. The fitted model therefore always
has all $J$ groups observed.
Regeneration counts are recorded in each replication's diagnostics and
provided in the replication archive. Regeneration was needed in
2 of the 4{,}800 replications, both in $\bar{n}_j = 10$ cells.
All summaries condition on every group being observed. The excluded
event has negligible probability, and the within-replicate design
probabilities remain unchanged.

\paragraph{Measured informativeness.}
Informativeness is a design factor with three levels,
$\gamma \in \{\text{none}, \text{moderate}, \text{strong}\}$, acting
through $(\gamma_{\mathrm{grp}}, \gamma_{\mathrm{unit}})$.
These levels specify the sampling mechanism rather than target design
effects. We compute global and within-group Kish design effects on
each realized sample and report them with the results.
The realized global $\DEFF$ averages \slotnum{1.65},
\slotnum{2.0}, and \slotnum{2.5--3.9} across the three levels
(Section~4.1 of the main article).

\subsection{Factorial design}
\label{sec:sim-grid}

The study crosses
\begin{itemize}[nosep]
  \item number of groups $J \in \{20, 50\}$,
  \item expected group sample size $\bar{n}_j \in \{10, 50\}$ (the
    small-$\bar{n}_j$ cells probe the small-cluster regime discussed
    in Section~3.7 of the main article),
  \item informativeness $\gamma \in \{\text{none}, \text{moderate},
    \text{strong}\}$,
  \item structural arm $\in \{\text{sampled groups}, \text{PSUs
    within groups}\}$,
\end{itemize}
for $2 \times 2 \times 3 \times 2 = 24$ design cells, with the
intraclass correlation fixed at $0.15$ and 200 replications per cell:
\slotnum{4{,}800} fitted models in total.

\subsection{Estimation and convergence}
\label{sec:sim-estimation}

\paragraph{Stan model.}
We implement the hierarchical logistic regression in Stan using a
non-centered parameterization for the random effects
\citep{PapaspiliopoulosEtAl2007}:
\begin{align}
  \eta_j &\sim \Norm(0, 1), \notag \\
  \theta_j &= \sigma_\theta \cdot \eta_j, \label{eq:ncp}
\end{align}
where the posterior is specified over the standardized effects
$\eta_j$ rather than the raw effects $\theta_j$.
This parameterization eliminates the funnel geometry that arises when
$\sigma_\theta$ is small relative to the data information.
Survey weights enter through the pseudo-likelihood: each observation's
log-likelihood contribution is multiplied by its normalized weight in
Stan's \texttt{target}.
Priors are weakly informative: independent
$\Norm(0, 10^2)$ on the components of $\bbeta$
\citep{GelmanEtAl2008} and $\HalfCauchy(0, 2.5)$ on $\sigma_\theta$
\citep{Gelman2006}.
We compute all $\DER$s in the centered (natural) parameterization
after back-transforming the posterior draws.

\paragraph{Sampling algorithm.}
We fit each replication with the No-U-Turn Sampler via
\texttt{cmdstanr}, 4 chains in parallel, each with 1{,}000 warmup and
1{,}500 post-warmup iterations, yielding
$S = 4 \times 1{,}500 = 6{,}000$ posterior draws per replication,
with \texttt{adapt\_delta} $= 0.90$ \citep{Betancourt2017}.

\paragraph{Convergence criteria.}
We include a replication in the summaries only if it meets the following
criteria: split-$\widehat{R} \le 1.01$ on all parameters
including every $\theta_j$ \citep{GelmanRubin1992,
vehtari_rank-normalization_2021}, minimum bulk effective sample size
$100$, and zero divergent transitions.
Replications that fail these criteria remain in the archive with their
diagnostics but are excluded from the summaries. We report their count
with every table; \slotnum{99.3\%} of replications meet the criteria
(Section~4.1 of the main article).

\subsection{DER computation and the two targets}
\label{sec:sim-der-computation}

For each replication meeting the criteria, we evaluate the bread $H$
(Equation~(2.4)) and meat $J_{\mathrm{c}}$ (Equation~(2.5)) of the main
article at the posterior mean, our declared plug-in point.
The MCMC covariance $\SigMCMC$ is the sample covariance of the $S$
post-warmup draws.
We compute $\DER_k = [\Vtarget]_{kk} / [\SigMCMC]_{kk}$ for every parameter, as
in Algorithm~1 of the main article.
We compute the $\DER$ under the model-group target in both structural
arms, and additionally under the design-PSU target (stratified by
group, with the centered DF-corrected meat) in the $c \neq j$ arm,
where the two targets differ.
We compare interval strategies (naive, selective at $c_0 = 1.2$,
blanket) at nominal 90\%, as described in Section~4.1 of the main article.
Selective correction transforms the flagged block jointly; blanket
correction rescales each parameter marginally.
We use the \texttt{svyder} package for all computations
(\Cref{app:svyder}) with the declared target supplied explicitly.

\subsection{Full result tables}
\label{sec:sim-full-tables}

The complete per-cell result tables are provided in the replication
repository (one CSV per summary, regenerated by a single aggregation
script).
Table~S-E.2 reports the threshold sweep's per-replication false-positive
rates. Section~4.2 of the main article also reports the stability of
decisions on \emph{protected} parameters across $c_0 \in [1.1, 1.6]$.

%% file: sm_d.tex

\section{NSECE Application: Supplementary Materials}
\label{app:nsece-supplement}

We provide the data and analysis details for the NSECE 2019 application
in Section~5 of the main article, including the full diagnostic table,
MCMC convergence results, and threshold sensitivity.

\subsection{Data description}
\label{app:nsece-data}

The 2019 National Survey of Early Care and Education
\citep[NSECE;][]{NSECE2022} is a
nationally representative study of childcare providers conducted by
the National Opinion Research Center (NORC) under contract with the
Office of Planning, Research, and Evaluation (OPRE), Administration
for Children and Families, U.S.\ Department of Health and Human
Services.
The survey employs a stratified multistage cluster design:
30 strata are formed from geographic and demographic characteristics,
within which 415 primary sampling units (PSUs) are selected.
Providers are sampled within PSUs, yielding a total of $N = 6{,}785$
center-based providers distributed across $J = 51$ states (including
the District of Columbia).
The data extract and model specification follow \citet{Lee2026}.

\paragraph{Outcome variable.}
The binary outcome $y_i \in \{0, 1\}$ indicates whether provider~$i$
participates in the state Infant--Toddler (IT) quality improvement
system, with an observed (weighted) prevalence of 64.7\%.

\paragraph{Covariates.}
Two covariates enter the hierarchical logistic regression model of
eq.~(5.1) of the main article:
\begin{itemize}[nosep]
  \item \emph{Poverty rate (CWC):} A continuous measure of the
    poverty rate in the provider's service area, centered within
    context \citep[CWC; group-mean centering within
    states;][]{enders2007centering}.  The CWC transformation removes
    between-state variation from this covariate, creating a purely
    within-state predictor (Tier~I-a). It is flagged under every
    target--convention combination examined in Section~5 of the main article.
  \item \emph{Tiered reimbursement (state-level):} A binary indicator
    for whether the provider's state operates a tiered reimbursement
    policy.  As a state-level variable, it varies only between
    groups and is fully identified from between-state comparisons
    (Tier~I-b); it is design-protected under every combination
    examined.
\end{itemize}

\paragraph{Survey weights and the declared convention.}
Base survey weights reflect differential selection probabilities and
nonresponse adjustments.
The coefficient of variation of the base weights is
$\mathrm{CV}_w = 1.66$, giving a raw-weight Kish design effect of
$\DEFF = 1 + \mathrm{CV}_w^2 = 3.76$.
For the pseudo-posterior analysis the weights enter under the declared
\emph{global unit-mean} convention of Section~2.2 of the main article,
$\tilde{w}_{ij} = N w_{ij} / \sum_{i'j'} w_{i'j'}$, which preserves
between-state weight variation and fixes the effective sample size of
the pseudo-likelihood at $N$.
The main alternative rescales within each state so that
$\sum_i \tilde{w}_{ij} = n_j$ \citep{PfeffermannEtAl1998b}, removing
between-state weight variation from the fit.
Section~5.3 of the main article examines this sensitivity: rescaling
changes the group-level results and can reverse the ordering of the
two targets for the flagged coefficient.
The weight convention is therefore part of the declared target.

\paragraph{The two declared targets.}
The model groups (states) differ from the design PSUs, and every state
is in the sample by construction. We report both aggregation units
defined in Section~2.3 of the main article because they answer
different questions.
The \emph{design-PSU target} aggregates scores over the 415 PSUs
within 30 strata using centered, DF-corrected meat. It measures the
design's sampling variability and is our primary analysis.
The \emph{model-group target} aggregates over the 51 states and
supports interpretation using the closed-form results in Section~3 of
the main article.
Section~5.2 of the main article interprets the difference between the
targets, particularly for the state random effects.

\paragraph{Group sizes.}
State sample sizes $n_j$ range from 17 (smallest state/territory) to
1{,}110 (largest state), with a median of 64 and mean of 133.
This variation induces corresponding variation in the group-specific
shrinkage factors $B_j$; per-state values under the global convention
are reported in the table of Section~S-E.3, and the within-state fit
is summarized in Table~4 of the main article.

\subsection{Analysis settings}
\label{app:nsece-declared}

We use the settings in Section~5.1 of the main article: global unit-mean
weight scaling (raw-weight Kish $\DEFF = 3.76$), with within-state
rescaling for the sensitivity analysis.
We compute both the design-PSU target (415 PSUs, 30 strata, centered
DF-corrected meat) and the model-group target (51 states).
The plug-in point is the posterior mean,
$\hat{\sigma}_\theta = 0.650$.
We flag parameters at $c_0 = 1.2$ and examine threshold sensitivity
in \Cref{app:tau-sensitivity}.
The fit meets the convergence criteria: $\widehat{R} \le 1.004$ on
all parameters including every $\theta_j$, bulk ESS $\ge 1{,}499$,
and zero divergent transitions.

\subsection{Full dual-target DER table}
\label{app:full-der-table}

The full 54-parameter table reports the $\DER$ for all three fixed
effects and all 51 state random effects under both declared targets
(design-PSU and model-group), using global unit-mean weights.
We interpret the parameters using the tier assignments in the main
article and flag them at $c_0 = 1.2$.
Section~5.2 of the main article reports the main findings: the
within-state poverty coefficient is flagged under both targets
($\DER = 3.55$ design-PSU, $4.59$ model-group, global convention),
the between-state parameters have DERs far below one, and the state
random effects distinguish the targets (\slotnum{27} of 51 exceed $c_0$
under the design-PSU target versus none under the model-group
target).

\input{tables/tabSE_full_der}

\paragraph{Per-state estimator noise under the model-group target.}
Remark~2 of the main article explains the estimator noise in per-state
$\DER$ values under the model-group target.
The $\theta_j$ meat entry uses one centered score-total contrast per
state. It is unbiased for its variance contribution but effectively
uses one observation.
Each estimated per-state $\DER$ therefore has substantial sampling
noise, even when the population formulas in Section~3 of the main
article are exact.
We interpret the model-group column by tier: all states are well
within Tier~II, but these estimates do not support ranking states.
The design-PSU target, which aggregates 415 clusters within 30 strata,
is estimated more precisely per entry.
The wide Tier~II margin makes classification stable despite noisy
individual estimates (Section~4 of the main article).

\subsection{MCMC convergence diagnostics}
\label{app:mcmc-diagnostics}

The fitted model meets the convergence criteria reported in
Section~5.1 of the main article: split-$\widehat{R} \le 1.004$
\citep{GelmanRubin1992, vehtari_rank-normalization_2021} on all 54
parameters plus the hyperparameter $\sigma_\theta$, bulk effective
sample sizes of at least 1{,}499, and zero divergent transitions or
treedepth exceedances, under the non-centered parameterization
\citep{PapaspiliopoulosEtAl2007} with \texttt{adapt\_delta} $= 0.95$
\citep{Betancourt2017}.
Parameter-level diagnostics are provided in the replication repository.

\subsection{Threshold sensitivity}
\label{app:tau-sensitivity}

The threshold sensitivity analysis in Section~5.4 of the main article
sweeps $c_0 \in [0.8, 2.0]$ and tracks the flagged set under both
targets.
The within-state poverty coefficient is flagged at every threshold
value under both targets, and no between-state fixed effect is flagged
at any value.
Under the design-PSU target, the random-effect flag count declines
smoothly with $c_0$.
Total flags are \slotnum{34, 31, 28, 27, 23} at
$c_0 = 0.8, 1.0, 1.2, 1.5, 2.0$, of which
\slotnum{33, 30, 27, 26, 22} are state effects.
Under the model-group target, the totals are \slotnum{2, 1, 1, 1, 1}.
The qualitative contrast between targets therefore persists beyond
the default $c_0 = 1.2$.
\input{tables/tab5_c0_sweep}

\subsection{DER versus group size}
\label{app:der-vs-nj}

A plot of the 51 state random-effect $\DER$ values against group size
$n_j$ under the model-group target shows the joint influence of the
shrinkage factor $B_j$ and state-level design effect $\DEFF_j$.
The theory predicts low values (Tier~II under this target) and a
non-monotone relationship with shrinkage (the inverted U of
Corollary~\ref{cor:boundary}(iv)).
Scatter reflects differences between states in weight distributions
and within-state sampling structure, subject to the per-state
estimator noise described in \Cref{app:full-der-table}.
The plot is provided in the replication repository.

%% file: tables/tabSE_full_der.tex
\begin{longtable}{@{}lrrrcrc@{}}
\caption{Full 54-parameter $\DER$ table for the NSECE application under both declared targets (global unit-mean convention; flags mark $\DER > c_0 = 1.2$).}\label{tab:full_der}\\
\toprule
Parameter & $n_j$ & $B_j$ & \multicolumn{2}{c}{Model-group} & \multicolumn{2}{c}{Design-PSU} \\
 & & & $\DER$ & flag & $\DER$ & flag \\
\midrule\endfirsthead
\toprule
Parameter & $n_j$ & $B_j$ & \multicolumn{2}{c}{Model-group} & \multicolumn{2}{c}{Design-PSU} \\
 & & & $\DER$ & flag & $\DER$ & flag \\
\midrule\endhead
$\beta_0$ (intercept) & --- & --- & 0.038 &  & 0.270 &  \\
$\beta_1$ (poverty, within-state) & --- & --- & 4.590 & $\bullet$ & 3.549 & $\bullet$ \\
$\beta_2$ (tiered reimbursement) & --- & --- & 0.039 &  & 0.280 &  \\
$\theta_{1}$ & 86 & 0.931 & 0.051 &  & 0.455 &  \\
$\theta_{2}$ & 46 & 0.756 & 0.046 &  & 0.257 &  \\
$\theta_{3}$ & 183 & 0.920 & 0.008 &  & 0.787 &  \\
$\theta_{4}$ & 56 & 0.931 & 0.013 &  & 2.902 & $\bullet$ \\
$\theta_{5}$ & 1110 & 0.983 & 0.035 &  & 0.548 &  \\
$\theta_{6}$ & 110 & 0.918 & 0.086 &  & 5.359 & $\bullet$ \\
$\theta_{7}$ & 58 & 0.859 & 0.117 &  & 0.967 &  \\
$\theta_{8}$ & 38 & 0.676 & 0.129 &  & 0.189 &  \\
$\theta_{9}$ & 40 & 0.622 & 0.523 &  & 0.408 &  \\
$\theta_{10}$ & 362 & 0.959 & 0.095 &  & 2.765 & $\bullet$ \\
$\theta_{11}$ & 192 & 0.938 & 0.058 &  & 1.584 & $\bullet$ \\
$\theta_{12}$ & 64 & 0.794 & 0.162 &  & 0.280 &  \\
$\theta_{13}$ & 27 & 0.744 & 0.228 &  & 0.264 &  \\
$\theta_{14}$ & 471 & 0.963 & 0.034 &  & 1.667 & $\bullet$ \\
$\theta_{15}$ & 75 & 0.911 & 0.024 &  & 7.229 & $\bullet$ \\
$\theta_{16}$ & 27 & 0.917 & 0.039 &  & 14.842 & $\bullet$ \\
$\theta_{17}$ & 51 & 0.923 & 0.044 &  & 2.975 & $\bullet$ \\
$\theta_{18}$ & 59 & 0.932 & 0.027 &  & 2.035 & $\bullet$ \\
$\theta_{19}$ & 77 & 0.942 & 0.027 &  & 0.647 &  \\
$\theta_{20}$ & 18 & 0.858 & 0.135 &  & 1.089 &  \\
$\theta_{21}$ & 178 & 0.950 & 0.028 &  & 2.545 & $\bullet$ \\
$\theta_{22}$ & 177 & 0.898 & 0.111 &  & 0.701 &  \\
$\theta_{23}$ & 117 & 0.927 & 0.088 &  & 5.085 & $\bullet$ \\
$\theta_{24}$ & 103 & 0.946 & 0.037 &  & 5.723 & $\bullet$ \\
$\theta_{25}$ & 17 & 0.861 & 0.004 &  & 0.362 &  \\
$\theta_{26}$ & 102 & 0.932 & 0.011 &  & 1.912 & $\bullet$ \\
$\theta_{27}$ & 22 & 0.789 & 0.035 &  & 9.830 & $\bullet$ \\
$\theta_{28}$ & 57 & 0.891 & 0.064 &  & 4.188 & $\bullet$ \\
$\theta_{29}$ & 43 & 0.655 & 0.129 &  & 0.223 &  \\
$\theta_{30}$ & 52 & 0.700 & 0.001 &  & 0.378 &  \\
$\theta_{31}$ & 201 & 0.948 & 0.017 &  & 1.527 & $\bullet$ \\
$\theta_{32}$ & 34 & 0.864 & 0.013 &  & 1.255 & $\bullet$ \\
$\theta_{33}$ & 558 & 0.972 & 0.024 &  & 3.433 & $\bullet$ \\
$\theta_{34}$ & 121 & 0.945 & 0.028 &  & 2.283 & $\bullet$ \\
$\theta_{35}$ & 50 & 0.684 & 0.417 &  & 0.842 &  \\
$\theta_{36}$ & 186 & 0.967 & 0.032 &  & 6.440 & $\bullet$ \\
$\theta_{37}$ & 61 & 0.942 & 0.011 &  & 7.018 & $\bullet$ \\
$\theta_{38}$ & 74 & 0.893 & 0.079 &  & 0.629 &  \\
$\theta_{39}$ & 141 & 0.952 & 0.014 &  & 2.456 & $\bullet$ \\
$\theta_{40}$ & 57 & 0.732 & 0.022 &  & 0.229 &  \\
$\theta_{41}$ & 50 & 0.906 & 0.040 &  & 3.404 & $\bullet$ \\
$\theta_{42}$ & 40 & 0.629 & 0.507 &  & 0.540 &  \\
$\theta_{43}$ & 98 & 0.915 & 0.012 &  & 2.085 & $\bullet$ \\
$\theta_{44}$ & 578 & 0.978 & 0.031 &  & 2.158 & $\bullet$ \\
$\theta_{45}$ & 24 & 0.783 & 0.866 &  & 1.150 &  \\
$\theta_{46}$ & 26 & 0.772 & 0.373 &  & 1.129 &  \\
$\theta_{47}$ & 120 & 0.940 & 0.038 &  & 3.044 & $\bullet$ \\
$\theta_{48}$ & 131 & 0.900 & 0.067 &  & 0.995 &  \\
$\theta_{49}$ & 36 & 0.795 & 0.003 &  & 0.748 &  \\
$\theta_{50}$ & 138 & 0.947 & 0.063 &  & 5.661 & $\bullet$ \\
$\theta_{51}$ & 43 & 0.691 & 0.073 &  & 0.153 &  \\
\bottomrule
\end{longtable}

%% file: tables/tab5_c0_sweep.tex
\begin{table}[H]
\centering
\caption{Threshold sensitivity. NSECE: number of parameters (of 54) flagged at each threshold $c_0$ under both declared targets. Simulation: per-replication false-positive rate at the same thresholds under the model-group target, pooled over all 24 cells. This is the share of replications meeting the convergence criteria in which any non-target parameter (between-group fixed effect or random effect) exceeds $c_0$. The recommended default $c_0 = 1.2$ is shown in bold.}
\label{tab:c0_sensitivity}
\begin{tabular}{@{}cccc@{}}
\toprule
 & \multicolumn{2}{c}{NSECE flagged (of 54)} & Simulation \\
\cmidrule(lr){2-3}
$c_0$ & Model-group & Design-PSU & False-positive rate (\%) \\
\midrule
0.8 & 2 & 34 & 16.0 \\
1.0 & 1 & 31 & 5.9 \\
\textbf{1.2} & \textbf{1} & \textbf{28} & \textbf{2.1} \\
1.5 & 1 & 27 & 0.3 \\
2.0 & 1 & 23 & 0.0 \\
\bottomrule
\end{tabular}

\vspace{4pt}
\begin{minipage}{\textwidth}
\footnotesize
\textit{Notes.} NSECE counts are computed from the full $\DER$ vectors of the fitted model. The poverty coefficient $\beta_1$ is flagged at every threshold shown under both targets. Simulation false positives are based on 4{,}767 replications meeting the convergence criteria. A false positive costs a wider interval for a protected parameter, not invalid inference.
\end{minipage}
\end{table}

%% file: sm_e.tex

\section{Using the \texttt{svyder} R Package}
\label{app:svyder}

The compute-classify-correct workflow specified by Algorithm~1 in the
main article is implemented in the open-source R package \texttt{svyder}
(``\textbf{s}ur\textbf{v}e\textbf{y} \textbf{d}esign \textbf{e}ffect
\textbf{r}atios''), version~0.2.0.
The package is available on GitHub at
\url{https://github.com/joonho112/svyder}, with full documentation,
vignettes, and worked examples hosted at
\url{https://joonho112.github.io/svyder/}.
It is released under the MIT license.
Users specify the variance target through arguments for the
aggregation unit, design strata, and weight convention.

\texttt{svyder} accepts posterior draws from any Bayesian sampler that
produces an $S \times d$ draws matrix.
S3 methods support \texttt{brms}, \texttt{cmdstanr}, and
\texttt{rstanarm} model objects.
The required model information includes the response vector $\by$,
design matrix $\mathbf{X}$, group indicators, and family.
Some methods extract this information automatically; others require
explicit inputs, as described in \Cref{app:software-design}.
For other samplers, users supply a numeric draws matrix and this model
information together with the survey design variables.
The design can be supplied as weights and cluster identifiers or as a
\texttt{survey.design2} object, from which first-stage clusters and
strata are extracted.
Computing the diagnostic, classifying parameters, and correcting the
selected block take a fraction of a second for the 54-parameter NSECE
model. Runtime scales linearly with the number of posterior draws.

\subsection{Specifying the variance target}
\label{app:software-target}

\begin{samepage}
Use \texttt{der\_compute()} to specify the variance target defined in
Section~2.3 of the main article:

\begin{verbatim}
der_compute(draws, y, X, group, weights,
            cluster,       # PSU or model-group identifiers (required)
            strata = NULL, # design strata
            cluster_strata = NULL,
            # strata for all selected clusters, including
            # those with no sampled observations
            normalize = c("unit_mean",    # global unit-mean scaling
                          "group_size",  # within-group rescaling
                          "none"),
            family, sigma_theta, ...)
\end{verbatim}
\end{samepage}

\noindent
The required \texttt{cluster} argument identifies the aggregation
unit. Set it to the design PSU identifiers for the design-PSU target,
or to the group identifiers for the model-group target.
The \texttt{strata} argument specifies the strata used to center
cluster score totals and apply the finite-cluster correction in
Equation~(2.5) of the main article.
If selected clusters can contain no sampled observations, provide
\texttt{cluster\_strata} for all selected clusters, indexed by their
identifiers. Empty clusters then contribute zero score totals
(Section~S-D); their count is returned as \texttt{cluster\_n\_empty}.
The \texttt{normalize} argument sets the weight convention:
\texttt{"unit\_mean"} uses the global convention in Section~2.2 of
the main article, and \texttt{"group\_size"} uses the within-group
alternative examined in Section~5.3 of the main article.
The argument \texttt{psu} of earlier releases is retained in 0.2.0 as
a deprecated alias for \texttt{cluster} and warns on use.

\subsection{Minimal workflow}
\label{app:software-workflow}

The following example uses the bundled demonstration data.
The function \texttt{der\_diagnose()} runs all three steps in one call.

\begin{verbatim}
## Install from GitHub
# install.packages("pak")
pak::pak("joonho112/svyder")

library(svyder)
data(nsece_demo)

## Step 1: Compute DER values under the declared target
result <- der_compute(
  nsece_demo$draws,
  y = nsece_demo$y, X = nsece_demo$X,
  group = nsece_demo$group,
  weights = nsece_demo$weights,
  cluster = nsece_demo$psu, # declared aggregation unit (design PSUs)
  normalize = "unit_mean",      # declared weight convention
  family = "binomial",
  sigma_theta = nsece_demo$sigma_theta,
  param_types = nsece_demo$param_types
)

## Step 2: Classify parameters (threshold argument `tau` is the
##         threshold c_0 in the main article)
result <- der_classify(result, tau = 1.2)

\end{verbatim}
\begin{samepage}
\begin{verbatim}

## Step 3: Correct the flagged block (joint block-Cholesky)
result <- der_correct(result)

## Summarize and extract corrected draws
summary(result)
corrected_draws <- as.matrix(result)
\end{verbatim}
\end{samepage}

\noindent
In the package interface the flagging threshold is the argument
\texttt{tau}; it corresponds to the threshold denoted $c_0$ in the
main article (the symbol $\tau$ is reserved there for the prior
precision $1/\sigma^2_\theta$).

\subsection{Dual targets: \texttt{der\_compare()}}
\label{app:software-compare}

When design PSUs and model groups differ, we recommend reporting both
targets (Sections~2.3 and~5.2 of the main article).
The function \texttt{der\_compare()} computes the $\DER$ under a named
list of aggregation units and reports the results side by side.
For a fixed fit and weight convention, only the meat changes; the
bread and MCMC covariance remain the same:

\begin{verbatim}
comp <- der_compare(result, clusters = list(
  design_psu  = nsece_demo$psu,
  model_group = nsece_demo$group
))
head(comp)   # columns: param, cluster_name, der
\end{verbatim}

\noindent
We use this comparison for the dual-target tables in Section~5 of the
main article and \Cref{app:full-der-table}.

\subsection{Selective correction: block-Cholesky \texttt{der\_correct()}}
\label{app:software-correct}

\texttt{der\_correct()} implements Step~3 of Algorithm~1 of the main article.
Its default method, \texttt{"block\_cholesky"}, extracts the flagged
block $\mathcal{F}$, forms the Cholesky factors of
$\SigMCMC[\mathcal{F},\mathcal{F}]$ and
$\Vtarget[\mathcal{F},\mathcal{F}]$, and applies the joint
transformation to the flagged coordinates of every draw.
The corrected block matches the target variances and covariances on
$\mathcal{F}$. Unflagged parameter draws remain unchanged.
The cost is $O(|\mathcal{F}|^3)$ rather than $O(d^3)$, and the
transformation reduces to a scalar rescaling when
$|\mathcal{F}| = 1$ (the NSECE model-group case).
A \texttt{"marginal"} method (per-parameter rescaling only) is
provided for comparison.

For the threshold sweep reported in the simulation and application
supplements, the call
\begin{center}
\texttt{der\_sensitivity(result, tau\_range = seq(0.8, 2.0, by = 0.1))}
\end{center}
tabulates the flagged set over a grid of $c_0$ values.

\subsection{Tier III exclusion}
\label{app:software-tier3}

Hyperparameters are excluded from the diagnostic by construction for
the reasons summarized in Section~3.3 of the main article and
elaborated in \Cref{sec:remark1-elaboration}.
The package assigns no $\DER$ to hyperparameters. It lists them in
the object's \texttt{excluded} component, and
\texttt{der\_classify()} prints ``Tier III excluded (DER undefined).''
\Cref{tab:tier3-exclusion} summarizes the treatment of each parameter
class.

\begin{table}[H]
  \centering
  \caption{Parameter classes and their treatment in \texttt{svyder}
    0.2.0.  Tiers follow Table~1 of the main article.}
  \label{tab:tier3-exclusion}
  \small
  \begin{tabularx}{\textwidth}{@{}llX@{}}
    \toprule
    \textbf{Tier} & \textbf{Parameters} & \textbf{Package treatment} \\
    \midrule
    I-a & Within-group fixed effects
      & $\DER$ reported; flagged when $\DER > c_0$; corrected by the
        block-Cholesky step. \\
    I-b & Between-group fixed effects
      & $\DER$ reported; typically far below $c_0$ (shielded);
        monitored. \\
    II & Random effects $\theta_j$
      & Per-group $\DER$ reported under the declared target, with the
        estimator-noise caveat in Remark~2 of the main article
        under the model-group target. \\
    III & Hyperparameters ($\sigma_\theta$, \ldots)
      & Excluded by construction: no $\DER$ assigned; listed in the
        \texttt{excluded} component with the reason
        (``DER undefined''); pointer to profile and multi-level
        methods in Section~3.3 of the main article. \\
    \bottomrule
  \end{tabularx}
\end{table}

\subsection{Inputs, results, and additional functions}
\label{app:software-design}

The three core functions, \texttt{der\_compute()},
\texttt{der\_classify()}, and \texttt{der\_correct()}, mirror the
three phases of Algorithm~1 of the main article.
Each returns an S3 object of class \texttt{svyder}.
After computation, it contains $\DER$ values and sandwich matrices.
Classification adds tier assignments and flags; correction adds the
rescaled posterior draws.

\paragraph{Model inputs.}
The inputs required by \texttt{der\_compute()} depend on the model
object supplied as its first argument.
For \texttt{brmsfit} and \texttt{stanreg} objects, the method
automatically extracts $\by$, $\mathbf{X}$, group indicators, the
model family, and $\sigma_\theta$, so that users need only supply the
survey design variables.
For \texttt{CmdStanMCMC} and raw matrix inputs, all quantities are
passed explicitly.
A \texttt{survey.design2} object may be supplied through the
\texttt{design} argument, from which first-stage clusters and strata
are extracted.

\paragraph{Decomposition and closed-form comparisons.}
The function \texttt{der\_decompose()} breaks each parameter's $\DER$
into its constituent factors (the Kish design effect $\DEFF$,
shrinkage factor $B_j$, protection ratio $R_k$, and finite-$J$
correction $\kappa$), as described in Section~3.2 of the main article.
The companion function \texttt{der\_theorem\_check()} compares the
closed-form predictions of Theorems~1 and~2 of the main article against the
numerically computed $\DER$ values, as in the numerical comparison
in \Cref{app:unbalanced}.

\paragraph{Visualization.}
The \texttt{plot()} method for \texttt{svyder} objects produces
diagnostic figures with tier-based coloring: a \emph{profile} plot
showing $\DER$ values across all parameters with the threshold
marked, a \emph{decomposition} plot displaying the factor breakdown,
a \emph{comparison} plot for \texttt{der\_compare()} output, and a
\emph{correction} plot comparing original and corrected posterior
intervals for flagged parameters.